\newtheorem{thm}{Theorem}
\newtheorem{prop}[thm]{Proposition}
\newtheorem{lemma}[thm]{Lemma}
\newtheorem{cor}[thm]{Corollary}
\newtheorem*{remark}{Remark}
\DeclareMathOperator*{\argmax}{argmax}
\DeclareMathOperator*{\sargmin}{sargmin}
\DeclareMathOperator*{\sargmax}{sargmax}
\DeclareMathOperator*{\largmax}{largmax}
\DeclareMathOperator*{\largmin}{largmin}
\newcommand{\bdot}{\bm{\cdot}}
\title{Inference in high-dimensional online changepoint detection}
\author{Yudong Chen$^{*,\dagger}$, Tengyao Wang$^\dagger$ and Richard J. Samworth$^*$ \\
$^*$ Statistical Laboratory, University of Cambridge \\
$^\dagger$ Department of Statistics, London School of Economics
}
\date{(February 17, 2023)}
\begin{document}
	\maketitle
	
	\begin{abstract}
	We introduce and study two new inferential challenges associated with the sequential detection of change in a high-dimensional mean vector.  First, we seek a confidence interval for the changepoint, and second, we estimate the set of indices of coordinates in which the mean changes.  We propose an online algorithm that produces an interval with guaranteed nominal coverage, and whose length is, with high probability, of the same order as the average detection delay, up to a logarithmic factor.  The corresponding support estimate enjoys control of both false negatives and false positives.  Simulations confirm the effectiveness of our methodology, and we also illustrate its applicability on the US excess deaths data from 2017--2020.
	\end{abstract}
	
	\section{Introduction}
	\label{Sec:Intro}
	
	The real-time monitoring of evolving processes has become a characteristic feature of 21st century life.  Watches and defibrillators track health data, Covid-19 case numbers are reported on a daily basis and financial decisions are made continuously based on the latest market movements.  Given that changes in the dynamics of such processes are frequently of great interest, it is unsurprising that the area of changepoint detection has undergone a renaissance over the last 5--10 years.  
	
	One of the features of modern datasets that has driven much of the recent research in changepoint analysis is high dimensionality, where we monitor many processes simultaneously, and seek to borrow strength across the different series to identify changepoints.  The nature of series that are tracked in applications, as well as the desire to evade to the greatest extent possible the curse of dimensionality, means that it is commonly assumed that the signal of interest is relatively sparse, in the sense that only a small proportion of the constituent series undergo a change.  Furthermore, the large majority of these works have focused on the retrospective (or \emph{offline}) challenges of detecting and estimating changes after seeing all of the available data \citep[e.g.][]{chan2015optimal,ChoFryzlewicz2014,Jirak2015,Cho2016,SohChandrasekaran2017,WangSamworth2018,EnikeevaHarchaoui2019,PYWR2019,kaul2021inference,LGS2019,londschien2021change,rinaldo2021localizing,FWS2021}.  Nevertheless, the related problem where one observes data sequentially and seeks to declare changes as soon as possible after they have occurred, is nowadays receiving increasing attention \citep[e.g.][]{kirch2019sequential, dette2020likelihood, gosmann2020sequential, yu2021optimal, CWS2021}.  Although the focus of our review here has been on recent developments, including finite-sample results in multivariate and high-dimensional settings, we also mention that changepoint analysis has a long history \citep[e.g.][]{Page1954}.  Entry points to this classical literature include \citet{CsorgoHorvath1997} and \citet{HorvathRice2014}.  For univariate data, sequential changepoint detection is also studied under the banner of statistical process control \citep{Duncan1952,TNB2014}.  In the field of high-dimensional statistical inference more generally, uncertainty quantification has become a major theme over the last decade, originating with influential work on the debiased Lasso in (generalized) linear models \citep{javanmard2014confidence,van2014asymptotically,zhang2014confidence}, and subsequently developed in other settings \citep[e.g.][]{jankova2015confidence,yu2018confidence}.
	
	The aim of this paper is to propose methods to address two new inferential challenges associated with the high-dimensional, sequential detection of a sparse change in mean.  The first is to provide a confidence interval for the location of the changepoint, while the second is to estimate the signal set of indices of coordinates that undergo the change.  Despite the importance of uncertainty quantification and signal support recovery in changepoint applications, neither of these problems has previously been studied in the multivariate sequential changepoint detection literature, to the best of our knowledge.  Of course, one option here would be to apply an offline confidence interval construction \citep[e.g.,][]{kaul2021inference} after a sequential procedure has declared a change.  However, this would be to ignore the essential challenge of the sequential nature of the problem, whereby one wishes to avoid storing all historical data, to enable inference to be carried out in an \emph{online} manner.  By this we mean that the computational complexity for processing each new observation, as well as the storage requirements, should depend only on the number of bits needed to represent the new data point observed\footnote{Here, we ignore the errors in rounding real numbers to machine precision; thus, we do not distinguish between continuous random variables and quantized versions where the data have been rounded to machine precision.}.  The online requirement turns out to impose severe restrictions on the class of algorithms available to the practitioner, and lies at the heart of the difficulty of the problem.
	
	To give a brief outline of our construction of a confidence interval with guaranteed $(1-\alpha)$-level coverage, consider for simplicity the univariate setting, where $(X_n)_{n \in \mathbb{N}}$ form a sequence of independent random variables with $X_1, \ldots, X_z \stackrel{\mathrm{iid}}{\sim} \mathcal{N}(0, 1)$ and $X_{z+1}, X_{z+2}, \ldots \stackrel{\mathrm{iid}}{\sim} \mathcal{N}(\theta, 1)$. Without loss of generality, we assume that $\theta > 0$.  Suppose that $\theta$ is known to be at least $b > 0$ and, for $n \in \mathbb{N}$, define \emph{residual tail lengths}
	\begin{equation}
	\label{Eq:OneDimTail}
	    	t_{n,b} := \argmax_{0 \leq h \leq n} \sum_{i=n-h+1}^{n} (X_i - b/2). 
	\end{equation}
	In the case of a tie, we choose the smallest $h$ achieving the maximum. Since $\sum_{i=n-h+1}^{n} (X_i - b/2)$ can be viewed as the likelihood ratio statistic for testing the null of $\mathcal{N}(0,1)$ against the alternative of $\mathcal{N}(b,1)$ using $X_{n-h+1}, \ldots, X_{n}$, the quantity $t_{n,b}$ is the tail length for which the likelihood ratio statistic is maximized.  If $N$ is the stopping time defining a good sequential changepoint detection procedure, then, intuitively, $N-t_{N,b}$ should be close to the true changepoint location $z$, and almost pivotal.  This motivates the construction of a confidence interval of the form $\bigl[\max\bigl\{N - t_{N,b} - g(\alpha,b), 0\bigr\}, N\bigr]$, where we control the tail probability of the distribution of $N-t_{N,b}$ to choose $g(\alpha,b)$ so as to ensure the desired coverage.  In the multivariate case, considerable care is required to handle the post-selection nature of the inferential problem, as well as to determine an appropriate left endpoint for the confidence interval.  For this latter purpose, we only assume a lower bound on the Euclidean norm of the vector of mean change, and employ a delicate multivariate and multiscale aggregation scheme; see Section~\ref{Sec:Method} for details, as well as Section~\ref{SubSec:Beta} for further discussion.
	
	The procedure for the inference tasks discussed above, which we call \texttt{ocd\_CI} (short for \textbf{o}nline \textbf{c}hangepoint \textbf{d}etection \textbf{C}onfidence \textbf{I}ntervals), can be run in conjunction with any base sequential changepoint detection procedure.  However, we recommend using the \texttt{ocd} algorithm introduced by \citet{CWS2021}, or its variant \texttt{ocd$'$}, which provides guarantees on both the average and worst-case detection delays, subject to a guarantee on the \emph{patience}, or average false alarm rate under the null hypothesis of no change.  Crucially, these are both online algorithms.  The corresponding inferential procedures inherit this same online property, thereby making them applicable even in very high-dimensional settings and where changes may be rare, so we may need to see many new data points before declaring a change.
	
	In Section~\ref{Sec:Theory} we study the theoretical performance of the \texttt{ocd\_CI} procedure. In particular, we prove in Theorem~\ref{Thm:Coverage} that, whenever the base sequential detection procedure satisfies a patience and detection delay condition, the confidence interval has at least nominal coverage for a suitable choice of input parameters.  Theorem~\ref{Thm:Length} provides a corresponding guarantee on the length of the interval. In Section~\ref{Sec:ocd_Base}, we show that by using \texttt{ocd}$'$ as the base procedure, the aforementioned patience and detection delay condition is indeed satisfied. As a result, the output confidence interval has guaranteed nominal coverage and the length of the interval is of the same order as the average detection delay for the base \texttt{ocd$'$} procedure, up to a poly-logarithmic factor.  This is remarkable in view of the intrinsic challenge that the better such a changepoint detection procedure performs, the fewer post-change observations are available for inferential tasks.   
	
	A very useful byproduct of our \texttt{ocd\_CI} methodology is that we obtain a natural estimate of the set of signal coordinates (i.e.~those that undergo change).  In Theorem~\ref{Thm:Support}, we prove that, with high probability, it is able both to recover the effective support of the signal (see Section~\ref{Sec:CovGuarantee} for a formal definition), and to avoid noise coordinates.  We then broaden the scope of applicability of our methodology in Section~\ref{SubSec:Relax} by relaxing our distributional assumptions to deal with sub-Gaussian or sub-exponential data.  Finally, in Section~\ref{SubSec:Beta}, we introduce a modification of our algorithm that permits an arbitrarily loose lower bound $\beta > 0$ on the Euclidean norm of the vector of mean change to be employed, with only a logarithmic increase in the confidence interval length guarantee and the computational~cost.
	
	An attraction of our theoretical results is that we are able to handle arbitrary spatial (cross-sectional) dependence between the different coordinates of our data stream.  On the other hand, two limitations of our analysis for practical use are that real data may exhibit both heavier than sub-exponential tails and temporal dependence.  While a full theoretical analysis of the \texttt{ocd\_CI} algorithm in these contexts appears to be challenging, we have made some practical suggestions regarding these issues in Sections~\ref{SubSec:Relax} and~\ref{Sec:USCovid} respectively. 
	
	
	Section~\ref{Sec:Simulation} is devoted to a study of the numerical performance of our methodological proposals.  Our simulations confirm that the \texttt{ocd\_CI} methodology (with the \texttt{ocd} base procedure) attains the desired coverage level across a wide range of parameter settings, that the average confidence interval length is of comparable order to the average detection delay and that our support recovery guarantees are validated empirically. We further demonstrate the way in which naive application of offline methods may lead to poor performance in this problem.  Moreover, in Section~\ref{Sec:USCovid}, we apply our methods to excess death data from the Covid-19 pandemic in the US.  Proofs, auxiliary results, extensions to sub-Gaussian and sub-exponential settings and additional simulation results are provided in the supplementary material (Section~\ref{Sec:Supp}).
	
	We conclude this introduction with some notation used throughout the paper.  We write $\mathbb{N}_0$ for the set of all non-negative integers. For $d \in \mathbb{N}$, we write $[d]:=\{1, \ldots, d\}$. Given $a, b \in \mathbb{R}$, we denote $a \vee b := \max(a, b)$ and $a \wedge b := \min(a, b)$. For a set $S$, we use $\mathbbm{1}_S$ and $|S|$ to denote its indicator function and cardinality respectively. For a real-valued function $f$ on a totally ordered set $S$, we write $\sargmax_{x \in S} f(x) := \min \argmax_{x\in S} f(x)$ and $\largmax_{x \in S} f(x) := \max \argmax_{x\in S} f(x)$ for the smallest and largest maximizers of $f$ in $S$, and define $\sargmin_{x \in S} f(x)$ and $\largmin_{x \in S} f(x)$ analogously.  For a vector $v = \bigl(v^1, \ldots, v^M\bigr)^\top \in \mathbb{R}^M$, we define $\|v\|_0 := \sum_{i=1}^M \mathbbm{1}_{\{v^i \neq 0\}}$, $\|v\|_2:=\bigl\{\sum_{i=1}^{M} (v^i)^2\bigr\}^{1/2}$ and $\|v\|_\infty := \max_{i \in [M]} |v^i|$. In addition, for $j \in [M]$, we define $\|v^{-j}\|_2 := \bigl\{ \sum_{i: i \neq j} (v^i)^2 \bigr\}^{1/2}$. For a matrix $A=(A^{i,j}) \in \mathbb{R}^{d_1\times d_2}$ and $j \in [d_2]$, we write $A^{\bdot, j} := \bigl(A^{1, j}, \ldots, A^{d_1, j}\bigr)^\top \in \mathbb{R}^{d_1}$ and $A^{-j, j} := \bigl(A^{1, j}, \ldots, A^{j-1, j}, A^{j+1, j}\ldots, A^{d_1, j}\bigr)^\top \in \mathbb{R}^{d_1 - 1}$. We use $\Phi(\cdot)$, $\bar{\Phi}(\cdot)$ and $\phi(\cdot)$ to denote the distribution function, survivor function and density function of the standard normal distribution respectively. For two real-valued random variables $U$ and~$V$, we write $U \geq_{\mathrm{st}} V$ or $V \leq_{\mathrm{st}} U$  if $\mathbb{P}(U \leq x) \leq \mathbb{P}(V \leq x)$ for all $x\in \mathbb{R}$. We adopt conventions that an empty sum is $0$ and that $\min \emptyset := \infty$, $\max \emptyset := - \infty$.

	\section{Confidence interval construction and support estimation methodology} \label{Sec:Method}
	
	In the multivariate sequential changepoint detection problem, we observe $p$-variate observations $X_1,X_2,\ldots$ in turn, and seek to report a stopping time $N$ by which we believe a change has occurred. Here and throughout, a stopping time is understood to be with respect to the natural filtration, so that the event $\{N = n\}$ belongs to the $\sigma$-algebra generated by $X_1,\ldots,X_n$. The focus of this work is on changes in the mean of the underlying process, and we denote the time of the changepoint by $z$.  Moreover, since our primary interest is in high-dimensional settings, we will also seek to exploit sparsity in the vector of mean change.  Given $\alpha \in (0,1)$, then, our primary goal is to construct a confidence interval $\mathcal{C} \equiv \mathcal{C}(X_1,\ldots,X_N,\alpha)$ with the property that $z \in \mathcal{C}$ with probability at least $1-\alpha$.
	
	For $i \in \mathbb{N}$ and $j \in [p]$, let $X_i^j$ denote the $j$th coordinate of $X_i$. The \texttt{ocd\_CI} algorithm relies on a lower bound $\beta > 0$ for the $\ell_2$-norm of the vector of mean change, sets of signed scales $\mathcal{B}$ and $\mathcal{B}_0$ defined in terms of $\beta$ and a base sequential changepoint detection procedure CP.  As CP processes each new data vector, we update the matrix of residual tail lengths
	$(t_{n,b}^j)_{j \in [p],b \in \mathcal{B} \cup \mathcal{B}_0}$ with $t_{n,b}^{j} := \sargmax_{0 \leq h \leq n} \sum_{i=n-h+1}^n (X_i^{j} - b/2)$, as well as corresponding tail partial sum vectors $\bigl(A_{n,b}^{\cdot,j}\bigr)_{j \in [p],b \in \mathcal{B} \cup \mathcal{B}_0}$, where $A_{n,b}^{j',j} := \sum_{i=n-t_{n,b}^j+1}^n X_i^{j'}$. 
	
	After the base procedure CP declares a change at a stopping time $N$, we identify an ``anchor'' coordinate $\hat{j} \in [p]$ and a signed anchor scale $\hat{b} \in \mathcal{B}$, where 
	\[
	(\hat{j}, \hat{b}) := \argmax_{(j,b)\in[p]\times \mathcal{B}} \sum_{j' \in [p] \setminus \{j\}}\frac{  \bigl(A_{N,b}^{j',j}\bigr)^2 }{t_{N,b}^j \vee 1} \mathbbm{1}_{\bigl\{|A_{N,b}^{j',j}| \geq a\sqrt{t_{N,b}^j \vee 1}\bigr\}}.
	\]
	The intuition is that the anchor coordinate and signed anchor scale are chosen so that the final $t_{N,\hat{b}}^{\hat{j}}$ observations provide the best evidence among all of the residual tail lengths against the null hypothesis of no change.  Meanwhile,  $A_{N,\hat{b}}^{\cdot,\hat{j}}$ aggregates the last $t_{N,\hat{b}}^{\hat{j}}$ observations in each coordinate, providing a measure of the strength of this evidence against the null.
	
	The main idea of our confidence interval construction is to seek to identify coordinates with large post-change signal.  To this end, observe when $t_{N, \hat{b}}^{\hat{j}}$ is not too much larger than $N-z$, the quantity $E_{N, \hat{b}}^{j,\hat{j}} := A_{N, {\hat{b}}}^{j,\hat{j}} / (t_{N, {\hat{b}}}^{\hat{j}}\vee 1)^{1/2}$ should be centered close to  $\theta^j(t_{N, {\hat{b}}}^{\hat{j}})^{1/2}$ for $j \in [p] \setminus \{\hat{j}\}$, with variance close to 1.  Indeed, if $\hat{j}$, $\hat{b}$, $N$ and $t_{N,\hat{b}}^{\hat{j}}$ were fixed, and if $0 < t_{N,\hat{b}}^{\hat{j}} \leq N-z$, then the former quantity would have unit variance around this centering value.  The random nature of these quantities, however, introduces a post-selection inference aspect to the problem.  Nevertheless, by choosing an appropriate threshold value $d_1 > 0$, we can ensure that with high probability, when $j \neq \hat{j}$ is a noise coordinate, we have $|E_{N, \hat b}^{j, \hat j}| < d_1$, and when $j \neq \hat{j}$ is a coordinate with sufficiently large signal, there exists a signed scale $b\in (\mathcal{B}\cup \mathcal{B}_0) \cap [-|\theta^j|,|\theta^j|]$, having the same sign as $\theta^j$, for which $\bigl|E_{N, \hat b}^{j, \hat j}\bigr| - |b|(t_{N, {\hat{b}}}^{\hat{j}})^{1/2} \geq d_1$.  In fact, such a signed scale, if it exists, can always be chosen to be from $\mathcal{B}_0$.  As a convenient byproduct, the set of indices $j$ for which the latter inequality holds, which we denote as $\hat{\mathcal{S}}$, forms a natural estimate of the set of coordinates in which the mean change is large.
	
	For each $j \in \hat{\mathcal{S}}$, there exists a largest scale $b \in (\mathcal{B} \cup \mathcal{B}_0) \cap (0,\infty)$ for which $\bigl|E_{N, \hat b}^{j, \hat j}\bigr| - b(t_{N, {\hat{b}}}^{\hat{j}})^{1/2} \geq d_1$.  We denote the signed version of this quantity, where the sign is chosen to agree with that of $E_{N,\hat{b}}^{j,\hat{j}}$, by $\tilde{b}^j$; this can be regarded as a shrunken estimate of $\theta^j$, so plays the role of the lower bound $b$ from the univariate problem discussed in the introduction.  Finally, then, our confidence interval is constructed as the intersection over indices $j \in \hat{\mathcal{S}}$ of the confidence interval from the univariate problem in coordinate $j$, with signed scale $\tilde{b}^j$.  
	
  As a device to facilitate our theory, the \texttt{ocd\_CI}  algorithm allows the practitioner the possibility of observing a further $\ell$ observations after the time of changepoint declaration, before constructing the confidence interval.  The additional observations are used to determine the anchor coordinate $\hat{j}$ and scale $\hat{b}$, as well as the estimated support $\hat{\mathcal{S}}$ and the estimated scale $\tilde{b}^j$ for each $j \in \hat{\mathcal{S}}$.  Thus, the extra sampling is used to guard against an unusually early changepoint declaration that leaves very few post-change observations for inference.  Nevertheless, we will see in Theorem~\ref{Thm:Coverage} below that the output confidence interval has guaranteed nominal coverage even with $\ell=0$, so that additional observations are only used to control the length of the interval.  In fact, even for this latter aspect, the numerical evidence presented in Section~\ref{Sec:Simulation} indicates that $\ell=0$ provides confidence intervals of reasonable length in practice.  Similarly, Theorem~\ref{Thm:Support} ensures that with high probability, our support estimate $\hat{\mathcal{S}}$ contains no noise coordinates (i.e.~has false positive control) even with $\ell = 0$, so that the extra sampling is only used to provide false negative control.
	
	Pseudo-code for this \texttt{ocd\_CI} confidence interval construction is given in Algorithm~\ref{Alg:CI}, where we suppress the $n$ dependence on quantities that are updated at each time step.  The computational complexity per new observation, as well as the storage requirements, of this algorithm is equal to the sum of the corresponding quantities for the CP base procedure and $O\bigl(p^2\log(ep)\bigr)$ regardless of the observation history.  Thus the \texttt{ocd\_CI} method inherits the property of being an online algorithm, as discussed in the introduction, from any online CP base procedure. 

 \begin{algorithm}[htbp!]
		\KwIn{$X_1, X_2, \ldots \in \mathbb{R}^p$ observed sequentially, $\beta>0$, $a\geq 0$, an online changepoint detection procedure $\mathrm{CP}$, $d_1, d_2 > 0$ and $\ell \in \mathbb{N}_0$}
		\KwSet{$b_{\min}= \frac{\beta}{\sqrt{2^{\lfloor \log_2 (2p)\rfloor} \log_2(2p)}}$, $\mathcal{B}_0 = \{\pm b_{\min}\}$, 
		$\mathcal{B} = \bigl\{\pm 2^{m/2} b_{\min}: m = 1,\ldots,\lfloor \log_2 (2p) \rfloor \bigr\}$,  $n = 0$, $A_b = \mathbf{0} \in \mathbb{R}^{p\times p}$ and $t_b = 0\in\mathbb{R}^p$ for all $b \in \mathcal{B}\cup \mathcal{B}_0$}
		\Repeat{$\mathrm{CP}$ declares a change}{
			$n \leftarrow n+1$\\
			observe new data vector $X_n$ and update $\mathrm{CP}$ with $X_n$ \\
			\For{$(j, b) \in [p]\times (\mathcal{B} \cup \mathcal{B}_0)$}{
				$t^j_b \leftarrow t^j_b +1$ \\ 
				$A^{\bdot,j}_b \leftarrow A^{\bdot,j}_b + X_n$ \\
				\If{$bA^{j,j}_{b} - b^2t^j_b/2 \leq 0$}{                 
					$t^j_b \leftarrow 0$ and  $A^{\bdot,j}_b \leftarrow 0$
				} 
			}
		}
		Observe $\ell$ new data vectors $X_{n+1}, \ldots, X_{n+\ell}$ \\
		Set $E_{b}^{j', j} \leftarrow \frac{  A_{b}^{j',j}+\sum_{i=n+1}^{n+\ell} X_i^{j'} }{\sqrt{(t_{b}^j +\ell)\vee 1}}$ for $j', j\in[p]$, $b\in\mathcal{B}\cup\mathcal{B}_0$ \\
		
		Compute $Q^j_b \leftarrow \sum_{j' \in [p] \setminus \{j\}}(E_b^{j',j})^2 \mathbbm{1}_{\{|E_b^{j',j}| \geq a\}}$ for $j \in [p]$, $b \in \mathcal{B}$ 
		$(\hat{j}, \hat{b}) \leftarrow \argmax_{(j,b)\in[p]\times \mathcal{B}} Q_{b}^j$ \\
		$\hat{\mathcal{S}} \leftarrow \Bigl\{j \in [p]\setminus\{\hat{j}\}: \bigl|E_{\hat{b}}^{j,\hat{j}}\bigr| - b_{\min} (t_{\hat{b}}^{\hat{j}} + \ell)^{1/2}  \geq  d_1 \Bigr\}$ \\
		\For{$j \in \hat{\mathcal{S}}$}{$\tilde{b}^j \leftarrow \mathrm{sgn}\bigl(E_{\hat{b}}^{j, \hat{j}}\bigr)\max\Bigl\{b \in (\mathcal{B}\cup\mathcal{B}_0) \cap (0, \infty): \bigl|E_{\hat{b}}^{j,\hat{j}}\bigr| - b(t_{\hat{b}}^{\hat{j}} + \ell)^{1/2}  \geq d_1  \Bigr\}$}
		\KwOut{Confidence interval $\mathcal{C}=\Bigl[ \max \Bigl\{n-\min_{j \in \hat{\mathcal{S}}} \Bigl\{t_{\tilde{b}^j}^j + \frac{d_2}{(\tilde{b}^j)^2} \Bigr\}, 0\Bigr\}, n\Bigr]$}
		\caption{Pseudo-code for the confidence interval construction algorithm \texttt{ocd\_CI}}
		\label{Alg:CI}
	\end{algorithm}
	
	A natural choice for the base online changepoint detection procedure CP is the \texttt{ocd} algorithm, or its variant \texttt{ocd}$'$, introduced by \citet{CWS2021}. Both are online algorithms, with computational complexity per new observation and storage requirements of $O\bigl(p^2\log(ep)\bigr)$. The \texttt{ocd}$'$ base procedure is considered for the theoretical analysis in Section~\ref{Sec:Theory} due to its known patience and detection delay guarantees, while we prefer \texttt{ocd} for numerical studies and practical use. For the reader's convenience, the \texttt{ocd} and \texttt{ocd}$'$ algorithms are provided as Algorithms~\ref{Alg:ocd} and~\ref{Alg:ocd_variant} respectively in Section~\ref{Sec:ocdprimealgorithm}. 
	
	\section{Theoretical analysis} \label{Sec:Theory}
    
    Throughout this section, we will assume that the sequential observations $X_1,X_2,\ldots$ are independent, and that for some unknown covariance matrix $\Sigma \in \mathbb{R}^{p \times p}$ whose diagonal entries are all equal to $1$, there exist $z \in \mathbb{N}_0$ and $\theta = (\theta^1,\ldots,\theta^p)^\top \neq 0$ for which $X_1,\ldots,X_z \sim \mathcal{N}_p(0,\Sigma)$ and $X_{z+1},X_{z+2},\ldots \sim \mathcal{N}_p(\theta,\Sigma)$.  We let $\vartheta := \|\theta\|_2$, and write $\mathbb{P}_{z,\theta,\Sigma}$ for probabilities computed under this model, though in places we omit the subscripts for brevity.  Define the \emph{effective sparsity} of~$\theta$, denoted $s(\theta)$, to be the smallest $s \in \bigl\{2^0, 2^1, \ldots, 2^{\lfloor\log_2(p)\rfloor}\bigr\}$ such that the corresponding \emph{effective support} $\mathcal{S}(\theta) := \bigl\{j \in [p]: |\theta^j| \geq \|\theta\|_2/\sqrt{s \log_2(2p)}\bigr\}$ has cardinality at least $s(\theta)$.  Thus, the sum of squares of coordinates in the effective support of $\theta$ has the same order of magnitude as $\|\theta\|_2^2$, up to logarithmic factors.  Moreover, if  at most $s$ components of~$\theta$ are non-zero, then $s(\theta) \leq s$, and the equality is attained when, for example, all non-zero coordinates have the same magnitude.
    
    For $r > 0$ and an online changepoint detection procedure $\mathrm{CP}$ characterized by an extended stopping time $N$, we define 
	\begin{equation}
	\label{Eq:grn}
	g(r; N):= \sup_{z \in \mathbb{N}_0} \mathbb{P}_{z,\theta,\Sigma}(N > z+r).
	\end{equation}
	\subsection{Coverage Probability and Length of the Confidence Interval}
	\label{Sec:CovGuarantee}
	
	The following theorem shows that the confidence interval constructed in the \texttt{ocd\_CI} algorithm has the desired coverage level whenever the base online changepoint detection procedure satisfies a patience and detection delay condition.
	\begin{thm} \label{Thm:Coverage}
	Let $p \geq 2$ and fix $\alpha \in (0, 1)$. Suppose that $\vartheta \geq \beta > 0$. Let $\mathrm{CP}$ be an online changepoint procedure characterized by an extended stopping time $N$ satisfying 
		\begin{equation} \label{Eq:detectassumption}
		    \mathbb{P}_{z,\theta,\Sigma}(N \leq z) + g(r; N) + 4rp^2\log_2^2(4p) e^{-r\beta^2/(8s\log_2(2p))} \leq \frac{3}{4} \alpha
		\end{equation}
		for some $r \geq 1$.  Then, with inputs $(X_t)_{t\in \mathbb{N}}$, $\beta > 0$, $a\geq 0$, $\mathrm{CP}$, $d_1 = \sqrt{\frac{5r\beta^2}{9s\log_2(2p)}}$, $d_2 = 4d_1^2$ and $\ell \geq 0$, the output confidence interval $\mathcal{C}$ from Algorithm~\ref{Alg:CI} satisfies
		\begin{align*}
			\mathbb{P}_{z,\theta,\Sigma}( z \in \mathcal{C} ) \geq 1-\alpha.
		\end{align*}
	\end{thm}
	As mentioned in Section~\ref{Sec:Method}, our coverage guarantee in Theorem~\ref{Thm:Coverage} holds even with $\ell = 0$, i.e.~with no additional sampling.  Condition~\eqref{Eq:detectassumption} places a joint assumption on the base changepoint procedure $\mathrm{CP}$ and the parameter $r$, the latter of which appears in the inputs $d_1$ and $d_2$ of Algorithm~\ref{Alg:CI}. The first term on the left-hand side of~\eqref{Eq:detectassumption} is the false alarm rate of the stopping time $N$ associated with $\mathrm{CP}$.  The second term can be interpreted as an upper bound on the probability of the detection delay of $N$ being larger than $r$, and in addition we also need $r$ to be at least of order $s/\beta^2$ up to logarithmic factors for the third term to be sufficiently small.  See Section~\ref{Sec:ocd_Base} for further discussion, where in particular we provide a choice of $r$ for which~\eqref{Eq:detectassumption} holds with the \texttt{ocd}$'$ base procedure.
	
	We now provide a guarantee on the length of the \texttt{ocd\_CI} confidence interval.
	\begin{thm} \label{Thm:Length}
		Fix $\alpha \in (0, 1)$. Assume that $\theta$ has an effective sparsity of $s:= s(\theta) \geq 2$ and that $\vartheta \geq \beta > 0$. let $\mathrm{CP}$ be an online changepoint detection procedure characterized by an extended stopping time $N$ that satisfies~\eqref{Eq:detectassumption} for some $r \geq 1$.  Then there exists a universal constant $C > 0$ such that, with inputs $(X_t)_{t\in \mathbb{N}}$, $\beta > 0$, $a = C\sqrt{\log(rp/\alpha)}$, $\mathrm{CP}$, $d_1 = \sqrt{\frac{5r\beta^2}{9s\log_2(2p)}}$, $d_2 = 4d_1^2$, $\ell \geq 80r$, the length $L$ of the output confidence interval $\mathcal{C}$ satisfies
		\[
		\mathbb{P}_{z,\theta,\Sigma}(L > 8r) \leq \alpha.
		\]
	\end{thm}
	As mentioned following Theorem~\ref{Thm:Coverage}, we can take $r$ to be the maximum of an appropriate quantile of the detection delay distribution of $\mathrm{CP}$ and a quantity that is of order $s/\beta^2$ up to logarithmic factors.  The main conclusion of Theorem~\ref{Thm:Length} is that, with high probability, the length of the confidence interval is then of this same order $r$.  Whenever the quantile of the detection delay distribution achieves the maximum above --- which is the case, up to logarithmic factors, for the \texttt{ocd}$'$ base procedure (see Proposition~\ref{prop:ocdsatisfyassumption}) --- we can conclude that with high probability, the length of the \texttt{ocd\_CI} confidence interval is of the same order as this detection delay quantile (which is the best one could hope for).   Note that the choices of inputs in Theorem~\ref{Thm:Length} are identical to those in Theorem~\ref{Thm:Coverage}, except that we now ask for order $r$ additional observations after the changepoint declaration.
	
	\subsection{Support Recovery}
	\label{SubSec:Support}
	Recall the definition of $\mathcal{S}(\theta)$ from the beginning of this section, and denote $\mathcal{S}_{\beta}(\theta) := \bigl\{j \in [p]: |\theta^j| \geq b_{\min} \bigr\}$, where $b_{\min}$, defined in Algorithm~\ref{Alg:CI}, is the smallest positive scale in $\mathcal{B}\cup\mathcal{B}_0$, We will suppress the dependence on $\theta$ of both these quantities in this subsection.  Theorem~\ref{Thm:Support} below provides a support recovery guarantee for $\hat{\mathcal{S}}$, defined in Algorithm~\ref{Alg:CI}.  Since neither $\hat{\mathcal{S}}$ nor the anchor coordinate $\hat{j}$ defined in the algorithm depend on $d_2$, we omit its specification; the choices of other tuning parameters mimic those in Theorems~\ref{Thm:Coverage} and~\ref{Thm:Length}.
	\begin{thm} \label{Thm:Support}
	Let $p \geq 2$ and fix $\alpha \in (0,1)$. Suppose that $\vartheta \geq \beta > 0$. Let $\mathrm{CP}$ be an online changepoint detection procedure characterized by an extended stopping time $N$ that satisfies~\eqref{Eq:detectassumption} for some $r \geq 1$. 
	
	\medskip
		\noindent (a) Then, with inputs $(X_t)_{t\in \mathbb{N}}$, $\beta > 0$, $a\geq 0$, $\mathrm{CP}$, $d_1 = \sqrt{\frac{5r\beta^2}{9s\log_2(2p)}}$, $\ell \geq 0$, we have
		\[
		\mathbb{P}_{z,\theta,\Sigma}(\hat{\mathcal{S}} \subseteq \mathcal{S}_{\beta}) \geq 1-\alpha.
		\]
		(b) Now assume that $\theta$ has an effective sparsity of
		$s:= s(\theta) \geq 2$. Then there exists a universal constant $C > 0$ such that, with inputs $(X_t)_{t\in \mathbb{N}}$, $\beta > 0$, $a = C\sqrt{\log(rp/\alpha)}$, $\mathrm{CP}$, $d_1 = \sqrt{\frac{5r\beta^2}{9s\log_2(2p)}}$, $\ell \geq 80r$, we have
		\[
		\mathbb{P}_{z,\theta,\Sigma}(\hat{\mathcal{S}}\cup \{\hat{j}\} \supseteq  \mathcal{S}) \geq 1-\alpha.
		\]
	\end{thm}
	Note that $\mathcal{S} \subseteq \mathcal{S}_{\beta} \subseteq \{j \in [p]:\theta^j\neq 0\}$.  Thus, part~(a) of the theorem reveals that with high probability, our support estimate $\hat{\mathcal{S}}$ does not contain any noise coordinates.  Part~(b) offers a complementary guarantee on the inclusion of all ``big'' signal coordinates, provided we augment our support estimate with the anchor coordinate $\hat{j}$.  See also the further discussion of this result following Proposition~\ref{Prop:LowerBound} below and in Section~\ref{Sec:ocd_Base}.
	
We now turn our attention to the optimality of our support recovery algorithm, by establishing a complementary minimax lower bound on the performance of any support estimator.  In fact, we can already establish this optimality by restricting the cross-sectional covariance matrix to be the identity matrix.  Thus, given $\theta\in\mathbb{R}^p$ and $z\in\mathbb{N}_0$, we write $\mathbb{P}_{z,\theta}$ for a probability measure under which $(X_n)_{n \in \mathbb{N}}$ are independent with $X_n \sim \mathcal{N}_p(\theta\mathbbm{1}_{\{n > z\}}, I_p)$.  For $r > 0$ and $m \in [p] \cup \{0\}$, write 
\[
\Theta_{r,m} := \bigl\{\theta \in \mathbb{R}^p: |\{j \in [p]:|\theta^j|\leq 1/(8\sqrt{r})\}| \geq m\bigr\}.
\]
Define $\mathcal{T}$ to be the set of stopping times with respect to the natural filtration $(\mathcal{F}_n)_{n \in \mathbb{N}_0}$, and set
\[
	\mathcal{T}_{r,m}:= \biggl\{N \in \mathcal{T}: \sup_{z\in\mathbb{N} \cup \{0\}, \theta\in \Theta_{r,m}}\mathbb{P}_{z,\theta}(N > z+r) \leq \frac{1}{4}\biggr\}.
\]
Write $2^{[p]}$ for the power set of $[p]$, equipped with the  symmetric difference metric $d: (A, B) \mapsto |(A\setminus B)\cup(B\setminus A)|$. For any stopping time $N$, denote
\[	
\mathcal{J}_N := \{\psi: (\mathbb{R}^p)^{\infty} \rightarrow 2^{[p]}: \text{$\psi$ is $\mathcal{F}_N$-measurable}\},
	\]
where we recall that $\psi$ is said to be $\mathcal{F}_N$-measurable if for any $A \in 2^{[p]}$ and $n \in \mathbb{N}_0$, we have that $\psi^{-1}(A) \cap \{N = n\}$ is $\mathcal{F}_n$-measurable.

\begin{prop}
For $r > 0$ and $m\geq 15$, we have 
\label{Prop:LowerBound}
    \[
    \inf_{N\in\mathcal{T}_{r,m}} \inf_{\psi\in \mathcal{J}_N}  \sup_{z\in\mathbb{N}_0, \theta\in \Theta_{r,m}} \mathbb{E}_{z,\theta} \,d\bigl(\psi(X_1,X_2,\ldots), \mathrm{supp}(\theta)\bigr) \geq \frac{m}{32}.
    \]
\end{prop}
This proposition considers any support estimation algorithm obtained from a stopping time in $\mathcal{T}_{r,m}$, and we note that such a competing procedure is even allowed to store all data up to this stopping time, in contrast to our online algorithm. This result can be interpreted as an optimality guarantee for the support recovery property of the \texttt{ocd\_CI} algorithm presented in Theorem~\ref{Thm:Support}(b), provided that the base procedure $N$ belongs to the class $\mathcal{T}_{r,m}$, and that $N$ and $r$ satisfy~\eqref{Eq:detectassumption}.  See Section~\ref{Sec:ocd_Base} below for further discussion.

\subsection{Using \texttt{ocd}\texorpdfstring{$'$}{'} as the Base Procedure}
\label{Sec:ocd_Base}
	
	In this subsection, we provide a value of $r$ that suffices for condition~\eqref{Eq:detectassumption} to hold when we take our base procedure to be \texttt{ocd}$'$.  For the convenience of the reader, this algorithm is presented as Algorithm~\ref{Alg:ocd_variant} in Section~\ref{Sec:ocdprimealgorithm}, where we also provide interpretation to the input parameters $\tilde{a}$, $T^{\mathrm{diag}}$ and $T^{\mathrm{off}}$.  
	
	\begin{prop}
	\label{prop:ocdsatisfyassumption}
	Fix $\alpha \in (0,1)$ and $\gamma > 0$. Assume that $\theta$ has an effective sparsity of $s := s(\theta) \geq 2$, that $\vartheta \geq \beta > 0$ and that $z \leq 2\alpha \gamma$.  Then with inputs $(X_t)_{t\in \mathbb{N}}$, $\beta > 0$, $\tilde{a} = \sqrt{2\log \{16p^2\gamma\log_2(2p)\}}$, $T^{\mathrm{diag}} = \log\{16p\gamma\log_2(4p)\}$ and $T^{\mathrm{off}} = 8\log\{16p\gamma\log_2(2p)\}$ in the $\mathtt{ocd}'$ procedure, there exists a universal constant $C' > 0$ such that for~all
	\begin{equation}
	    \label{Eq:r1}
	    r \geq \frac{C's\log_2(2p)\log\{p\gamma\alpha^{-1}(\beta^{-2}\vee 1)\}}{\beta^2} + 2 =: r_1, 
	\end{equation}
	the output $N$ satisfies~\eqref{Eq:detectassumption}.
	\end{prop}
	By combining Proposition~\ref{prop:ocdsatisfyassumption} with Theorems~\ref{Thm:Coverage},~\ref{Thm:Length} and~\ref{Thm:Support} respectively, we immediately arrive at the following corollary.
	\begin{cor}
	\label{cor:allresultsocd}
	Fix $\alpha \in (0, 1)$, $\gamma > 0$. Assume that $\theta$ has an effective sparsity of $s := s(\theta) \geq 2$, that $\vartheta \geq \beta > 0$ and that $z \leq 2\alpha \gamma$. Let $(X_t)_{t\in \mathbb{N}}$, $\beta > 0$, $\tilde{a} = \sqrt{2\log \{16p^2\gamma\log_2(2p)\}}$, $T^{\mathrm{diag}} = \log\{16p\gamma\log_2(4p)\}$ and $T^{\mathrm{off}} = 8\log\{16p\gamma\log_2(2p)\}$ be the inputs of the $\mathtt{ocd}'$ procedure.  Then the following statements hold:
	
	\medskip
	\noindent (a) With extra inputs $a \geq 0$, $\mathrm{CP} = \mathtt{ocd}'$, $d_1 = \sqrt{\frac{5r_1\beta^2}{9s\log_2(2p)}}$, $d_2 = 4d_1^2$ and $\ell \geq 0$ for Algorithm~\ref{Alg:CI}, the output confidence interval $\mathcal{C}$ and the support estimate $\hat{\mathcal{S}}$ satisfy $\mathbb{P}_{z,\theta,\Sigma}( z \in \mathcal{C} ) \geq 1-\alpha$ and $\mathbb{P}_{z,\theta,\Sigma}(\hat{\mathcal{S}} \subseteq \mathcal{S}_{\beta}) \geq 1-\alpha$.
	
	\medskip
	\noindent (b) There exists a universal constant $C > 0$ such that, with extra inputs $a = C\sqrt{\log(r_1p/\alpha)}$, $\mathrm{CP} = \mathtt{ocd}'$, $d_1 = \sqrt{\frac{5r_1\beta^2}{9s\log_2(2p)}}$, $d_2 = 4d_1^2$, $\ell \geq 80r_1$ for Algorithm~\ref{Alg:CI}, the length $L$ of the output confidence interval $\mathcal{C}$ and the support estimate satisfy $\mathbb{P}_{z,\theta,\Sigma}(L > 8r_1) \leq \alpha$ and $\mathbb{P}_{z,\theta,\Sigma}(\hat{\mathcal{S}}\cup \{\hat{j}\} \supseteq  \mathcal{S}) \geq 1-\alpha$.
	\end{cor}
	Corollary~\ref{cor:allresultsocd} reveals that, when  \texttt{ocd}$'$ is used as the base procedure, the \texttt{ocd\_CI} methodology provides guaranteed confidence interval coverage.  Moreover, up to poly-logarithmic factors, with an additional $O\bigl(1 \vee (s/\beta^2)\bigr)$ post-change observations, the \texttt{ocd\_CI} interval length is of the same order as the average detection delay. 	In terms of signal recovery, Corollary~\ref{cor:allresultsocd}(b) shows that with high probability, \texttt{ocd\_CI}  with inputs as given in that result selects all signal coordinates whose magnitude exceeds $\vartheta/s^{1/2}$, up to logarithmic factors.  Focusing on the case $\beta = \vartheta$ and where $s/\vartheta^2$ is bounded away from zero for simplicity of discussion (though see also Section~\ref{SubSec:Beta} for discussion of the effect of the choice of $\beta$), Proposition~\ref{prop:ocd'_tail} also reveals that the \texttt{ocd}$'$ base procedure belongs to $\mathcal{T}_{r,m}$ with $r$ of order $s/\vartheta^2$, up to logarithmic factors, and $m = |\{j:|\theta^j| \leq 1/(8\sqrt{r})\}|$.  On the other hand, Proposition~\ref{Prop:LowerBound} shows that any such support estimation algorithm makes on average a non-vanishing fraction of errors in distinguishing between noise coordinates and signals that are below the level $\vartheta/s^{1/2}$, again up to logarithmic factors.  In other words, with high probability, the \texttt{ocd\_CI} algorithm with base procedure \texttt{ocd}$'$ selects all signals that are strong enough (up to logarithmic factors) to be reliably detected, while at the same time including no noise coordinates (see Corollary~\ref{cor:allresultsocd}(a)). 
	
	\subsection{Relaxing the Gaussianity assumption}
	\label{SubSec:Relax}
	
	It is natural to ask to what extent the theory of Sections~\ref{Sec:CovGuarantee},~\ref{SubSec:Support} and~\ref{Sec:ocd_Base} can be generalised beyond the Gaussian setting.  The purpose of this subsection, then, is to describe how our earlier results can be modified to handle both sub-Gaussian and sub-exponential data.  Recall that we say a random variable $Z$ with $\mathbb{E}Z = 0$ is \emph{sub-Gaussian} with variance parameter $\sigma^2 > 0$ if $\mathbb{E}e^{\lambda Z} \leq e^{\sigma^2\lambda^2/2}$ for all $\lambda \in \mathbb{R}$, and is \emph{sub-exponential} with variance parameter $\sigma^2 > 0$ and rate parameter $A > 0$ if $\mathbb{E}e^{\lambda Z} \leq e^{\sigma^2\lambda^2/2}$ for all $|\lambda| \leq A$.
	
	We first consider the sub-Gaussian setting where $X_1,\ldots,X_z,X_{z+1} - \theta,X_{z+2} - \theta,\ldots$ are independent, each having sub-Gaussian components with variance parameter 1.  Note that this data generating mechanism no longer requires all pre-change observations to be identically distributed, and likewise the post-change observations need not all have the same distribution.  We assume that the base changepoint procedure, characterized by an extended stopping time $N$, satisfies a slightly strengthened version of~\eqref{Eq:detectassumption}, namely that
	\begin{equation}
	\label{Eq:strongerassumption}
	\mathbb{P}_{z,\theta,\Sigma}(N \leq z) + g(r; N) + 100rp^2\log_2^3(4p) (p\beta^{-2}  \vee 1 ) e^{-r\beta^2/(8s\log_2(2p))} \leq \frac{3}{4} \alpha.
	\end{equation}
	for some $r \geq 1$.  Under~\eqref{Eq:strongerassumption}, Theorems~\ref{Thm:Coverage},~\ref{Thm:Length} and~\ref{Thm:Support} hold with the same choices of input parameters.  Moreover, the \texttt{ocd}$'$ base procedure satisfies the conclusion of Proposition~\ref{prop:ocdsatisfyassumption}, i.e.~there exists a universal constant $C' > 0$ such that~\eqref{Eq:strongerassumption} holds for $r \geq r_1 = r_1(C')$ in~\eqref{Eq:r1}, provided that we use the modified input $\tilde{a} = \sqrt{2\log\bigl\{32p^2\gamma \log_2(2p)\bigr\}}$.
	
	Generalising these ideas further, now consider the model where $X_1,\ldots,X_z,X_{z+1} - \theta,X_{z+2} - \theta,\ldots$ are independent, each having sub-exponential components with variance parameter 1 and rate parameter $A > 0$.  In this setting, provided the base procedure satisfies~\eqref{Eq:strongerassumption} for some $r \geq 1$ and $\vartheta \leq \sqrt{2A^2 \log_2(2p)}$,  Theorems~\ref{Thm:Coverage},~\ref{Thm:Length} and~\ref{Thm:Support} hold when we redefine $a := C\max\bigl\{\sqrt{\log(rp/\alpha)}, \frac{1}{A}\log(rp/\alpha)\bigr\}$ and $d_1 := \max\Bigl\{ \sqrt\frac{5r\beta^2}{9s\log_2(2p)}, \frac{5r\beta^2}{9As\log_2(2p)}\Bigr\}$.  Furthermore, 
	with the modified input $\tilde{a} = \sqrt{2\log\bigl\{32p^2\gamma \log_2(2p) \bigr\} } \vee \frac{2}{A} \log\bigl\{32p^2\gamma \log_2(2p) \bigr\}$, the \texttt{ocd}$'$ base procedure satisfies the conclusion of  Proposition~\ref{prop:ocdsatisfyassumption} for 
	\[ 
	r \geq  \frac{C's\log_2(2p)\log^2\{p\gamma\alpha^{-1}(\beta^{-2}\vee 1)\}\max\{1, A^{-2} \log(p\gamma) \} }{\beta^2} + 2,
	\]
	where $C' > 0$ is a universal constant.
	
	
	The claims made in the previous two paragraphs are justified in Section~\ref{Sec:BeyondGaussianity}.  These results confirm the flexible scope of the \texttt{ocd\_CI} methodology beyond the original Gaussian setting, at least as far as sub-exponential tails are concerned.  Where data may exhibit heavier tails than this, clipping (truncation) and quantile transformations may represent viable ways to proceed, though further research is needed to confirm the theoretical validity of such approaches.

\subsection{Confidence interval construction with unknown signal size}
\label{SubSec:Beta}

In some settings, an experienced practitioner may have a reasonable idea of the magnitude~$\vartheta$ of the $\ell_2$-norm of the vector of mean change that would be of interest to them, and this would facilitate a choice of a lower bound $\beta$ for $\vartheta$ in Algorithm~\ref{Alg:CI}.  However, it is also worth considering the effect of this choice, and the extent to which its impact can be mitigated.  

We first remark that the coverage probability guarantee for the \texttt{ocd\_CI} interval in Corollary~\ref{cor:allresultsocd} remains valid for any (arbitrarily loose) lower bound $\beta$ on $\vartheta$.  The only issue is in terms of power: if $\beta$ is chosen to be too small, then both the average detection delay and the high-probability bound on the length of the confidence interval may be inflated.  In the remainder of this subsection, then, we describe a simple modification to Algorithm~\ref{Alg:CI} that permits a loose lower bound $\beta$ to be employed that retains coverage validity with only a logarithmic effect on the high-probability bound on the length of the confidence interval.  The only other price we pay is that the computational cost increases as $\beta$ decreases, as we describe below.

Our change to Algorithm~\ref{Alg:CI} is as follows: we replace the definition of $b_{\min}$ by setting
\[
b_{\min} = \frac{\beta}{\sqrt{2^{\lfloor \log_2 (2p)\rfloor} \log_2(2p)}} \wedge \frac{1}{\sqrt{2}},
\]
set $M = \lceil 2\log_2(1/b_{\min})\rceil$ and define $\mathcal{B} = \bigl\{\pm 2^{m/2} b_{\min}: m \in [M]\}$ in both the \texttt{ocd}$'$ base procedure and in Algorithm~\ref{Alg:CI}.  The rest of algorithm remains as previously stated.  Thus, if we choose a conservative (very small) $\beta$, then the effect of the modification is to increase the number of scales on which we search for a change, so that the largest element of $\mathcal{B}$ is of order 1.  In order to state our theoretical results for this modified algorithm, first define $b_{\mathrm{opt}} := \max \Bigl\{b \in \mathcal{B}\cap (0, \infty): b \leq \frac{\vartheta}{\sqrt{s\log_2(2p)}}\Bigr\}$, which satisfies 
$b_{\mathrm{opt}} \geq \frac{\vartheta}{\sqrt{2s\log_2(2p)}} \wedge 1$. Under the same assumptions as in Proposition~\ref{prop:ocdsatisfyassumption}, and modifying the inputs to $(X_t)_{t\in \mathbb{N}}$, $\beta > 0$, $\tilde{a} = \sqrt{2\log(16p^2\gamma M)}$, $T^{\mathrm{diag}} = \log\bigl(16p\gamma(M+1)\bigr)$ and $T^{\mathrm{off}} = 8\log(16p\gamma M)$, it can be shown using very similar arguments to those in the proof of Proposition~\ref{prop:ocdsatisfyassumption} that there exists a universal constant $C' > 0$ such that with
	$
	r \geq C'b_{\mathrm{opt}}^{-2}\log\bigl(p\gamma \alpha^{-1}(\beta^{-2}\vee 1)\bigr) =: r_1,
	$
the output $N$ satisfies
\[
    \mathbb{P}_{z,\theta,\Sigma}(N \leq z) + g(r; N) + 4rp^2(M+1)^2e^{-rb_{\mathrm{opt}}^2/8} \leq \frac{3}{4} \alpha.
\]
With this in place, we can derive Corollary~\ref{Cor:Unification} below, which is the analog of Corollary~\ref{cor:allresultsocd} for our modified algorithm.
\begin{cor}
	\label{Cor:Unification}
	Fix $\alpha \in (0, 1)$, $\gamma > 0$. Assume that $\theta$ has an effective sparsity of $s := s(\theta) \geq 2$, that $\vartheta \geq \beta > 0$ and that $z \leq 2\alpha \gamma$. Let $(X_t)_{t\in \mathbb{N}}$, $\beta > 0$, $\tilde{a} = \sqrt{2\log(16p^2\gamma M)}$, $T^{\mathrm{diag}} = \log\bigl(16p\gamma(M+1)\bigr)$ and $T^{\mathrm{off}} = 8\log(16p\gamma M)$ be the inputs of the modified $\mathtt{ocd}'$ procedure.  Let $d_1 = \sqrt{5C'\log\bigl(p\gamma \alpha^{-1}(\beta^{-2}\vee 1)\bigr)/9}$ and $d_2 = 4d_1^2$. Then the following statements hold:
	
	\medskip
	\noindent (a) With extra inputs $\mathrm{CP} = \mathtt{ocd}'$, $a \geq 0$, and $\ell \geq 0$ for the modified Algorithm~\ref{Alg:CI}, the output confidence interval $\mathcal{C}$ and the support estimate $\hat{\mathcal{S}}$ satisfy $\mathbb{P}_{z,\theta,\Sigma}( z \in \mathcal{C} ) \geq 1-\alpha$ and $\mathbb{P}_{z,\theta,\Sigma}(\hat{\mathcal{S}} \subseteq \mathcal{S}_{\beta}) \geq 1-\alpha$.
	
	\medskip
	\noindent (b) There exists a universal constant $C > 0$ such that, with extra inputs $\mathrm{CP} =$ modified $\mathtt{ocd}'$, $a = C\sqrt{\log\bigl(p\gamma \alpha^{-1}(\beta^{-2}\vee 1)\bigr)}$, and $\ell \geq 80r_1$ for the modified Algorithm~\ref{Alg:CI}, the length~$L$ of the output confidence interval $\mathcal{C}$ and the support estimate satisfy 
	\begin{equation}
	\label{Eq:ModifiedLength}
	\mathbb{P}_{z,\theta,\Sigma}\bigg(L > 8C' \max\biggl\{\frac{2s\log_2(2p)}{\vartheta^2}, 1\biggr\}\log\bigl(p\gamma \alpha^{-1}(\beta^{-2}\vee 1)\bigr)\biggr) \leq \alpha
	\end{equation}
	and $\mathbb{P}_{z,\theta,\Sigma}(\hat{\mathcal{S}}\cup \{\hat{j}\} \supseteq  \mathcal{S}) \geq 1-\alpha$.
	\end{cor}
The main difference between Corollaries~\ref{Cor:Unification} and \ref{cor:allresultsocd} concerns the high-probability guarantees on the length of the confidence interval.  Ignoring logarithmic factors, with high probability the length of the confidence interval in the modified algorithm is at most of order $(s/\vartheta^2) \vee 1$, whereas for the original algorithm it was of order $(s/\beta^2) \vee 1$.  Thus, the modified algorithm has the significant advantage of enabling a conservative choice of $\beta$ with only a logarithmic effect on the length guarantee relative to an oracle procedure with knowledge of $\|\theta\|_2$.  The computational complexity per new observation and the storage requirements of this modified algorithm are $O\Bigl(p^2\bigl(\log(ep) + \log(1/\beta)\bigr)\Bigr)$, so the order of magnitude is increased relative to the original algorithm only in an asymptotic regime where $\beta$ is small by comparison with $1/p^K$ for every $K > 0$.  Moreover, the modified algorithm still does not require storage of historical data and the computational time per new observation after observing $n$ observations does not increase with $n$.  Nevertheless, since the computational complexity now depends on $\beta$, the modified algorithm does not strictly satisfy our definition of an online algorithm given in introduction.

	\section{Numerical studies} \label{Sec:Simulation}
	
	In this section, we study the empirical performance of the \texttt{ocd\_CI} algorithm.  Throughout this section, by default, the \texttt{ocd\_CI} algorithm is used in conjunction with the recommended base online changepoint detection procedure $\mathrm{CP} = \texttt{ocd}$. 
		
	\subsection{Tuning Parameters}
	\label{Sec:Tuning}
	
	   \citet{CWS2021} found that the theoretical choices of thresholds $T^{\mathrm{diag}}$ and $T^{\mathrm{off}}$ for the \texttt{ocd} procedure were a little conservative, and therefore recommended determining these thresholds via Monte Carlo simulation; we replicate the method for choosing these thresholds described in their Section~4.1.  Likewise, as in \citet{CWS2021}, we take $a = \tilde{a} = \sqrt{2\log p }$ in our simulations.

	 For $d_1$ and $d_2$, as suggested by our theory, we take $d_2 = 4d_1^2$, and take $d_1$ to be of the form $d_1 = c\sqrt{\log(p/\alpha)}$.  Here, we tune the parameter $c > 0$ through Monte Carlo simulation, as we now describe.  We considered the parameter settings $p \in \{100, 500\}$, $s \in \{2, \lfloor \sqrt{p}\rfloor, p\}$, $\vartheta \in \{2,1,1/2\}$, $\Sigma = I_p$, $\alpha = 0.05$, $\beta \in \{2\vartheta, \vartheta, \vartheta/2\}$, $\gamma=30000$ and $z=500$.  Then, with $\theta$ generated as $\vartheta U$, where $U$ is uniformly distributed on the union of all $s$-sparse unit spheres in $\mathbb{R}^p$ (independent of our data), we studied the coverage probabilities, estimated over 2000 repetitions as $c$ varies, of the \texttt{ocd\_CI} confidence interval for data generated according to the Gaussian model defined at the beginning of Section~\ref{Sec:Theory}.  Figure~\ref{fig:choice_c} displays a subset of the results (the omitted curves were qualitatively similar).  On this basis, we recommend $c=0.5$ as a safe choice across a wide range of data generating mechanisms, and we used this value of $c$ throughout our confidence interval simulations.

	\begin{figure}[htbp!]
	\centering
	\includegraphics[width=\textwidth]{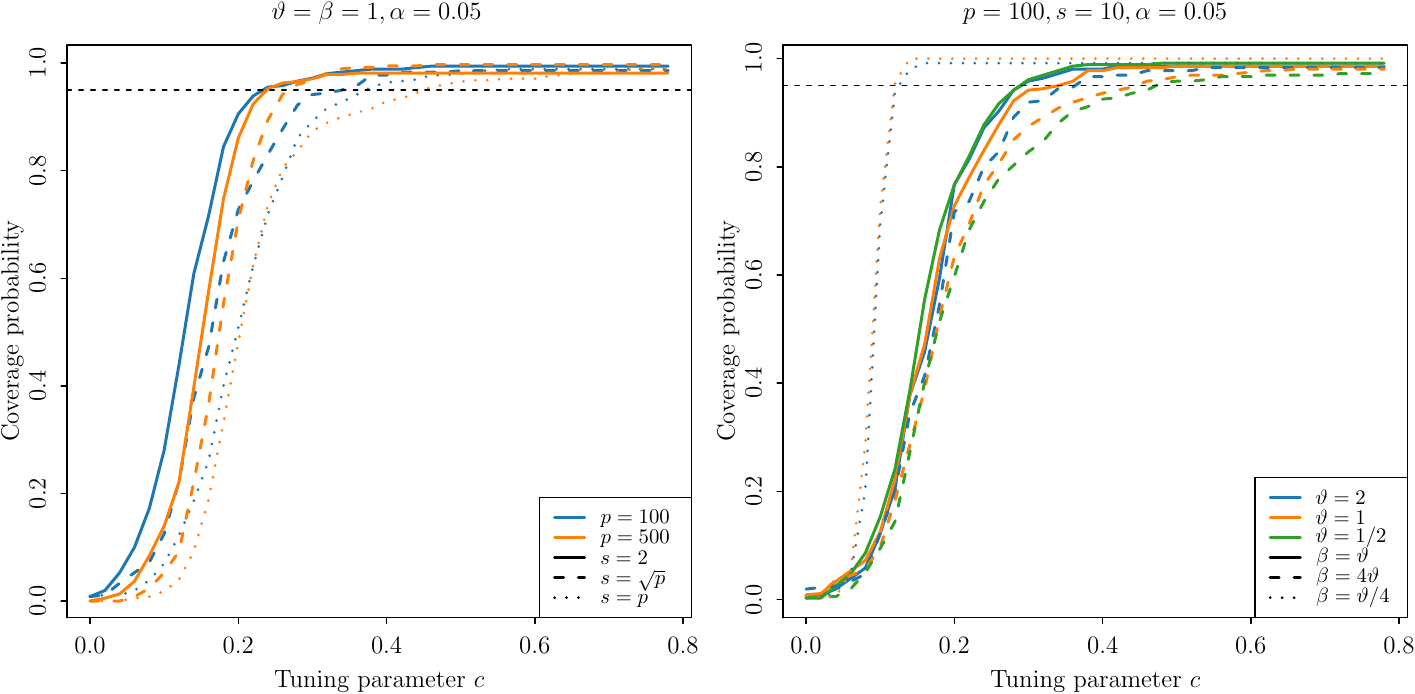}
	\caption{Coverage probabilities of the \texttt{ocd\_CI} confidence interval as the parameter $c$, involved in the choice of tuning parameter $d_1$, varies.}
	\label{fig:choice_c}
	\end{figure}

	The previous two paragraphs, in combination with Algorithms~\ref{Alg:CI} and~\ref{Alg:ocd}, provide the practical implementation of the \texttt{ocd\_CI} algorithm that we use in our numerical studies and that we recommend for practitioners.  The only quantity that remains for the practitioner to input (other than the data) is $\beta$, which represents a lower bound on the Euclidean norm of the vector of mean change.  Fortunately, this description makes $\beta$ easily interpretable by practitioners.  In cases where an informed default choice is not available, practitioners can make a conservative (very small) choice and use an increased grid of scales to with only a small inflation in the confidence interval length guarantee and computational cost; see Section~\ref{SubSec:Beta}. 
	
	\subsection{Coverage Probability and Interval Length}
	\label{Sec:CoverageLength}
	
	In Table~\ref{Tab:CI_cov_len}, we present the detection delay of the \texttt{ocd} procedure, as well as the coverage probabilities and average confidence interval lengths of the \texttt{ocd\_CI} procedure, all estimated over 2000 repetitions, with the same set of parameter choices and data generating mechanism as in Section~\ref{Sec:Tuning}.  From this table, we see that the coverage probabilities are at least at the nominal level (up to Monte Carlo error) across all settings considered.  Underspecification of~$\beta$ means that the grid of scales that can be chosen for indices in $\hat{\mathcal{S}}$ is shifted downwards, and therefore increases the probability that $\tilde{b}^j$ will underestimate $\theta^j$ for $j \in \hat{\mathcal{S}}$.  In turn, this leads to a slight conservativeness for the coverage probability (and corresponding increased average confidence interval length).  On the other hand, overspecification of $\beta$ yields a shorter interval on average, though these were nevertheless able to retain the nominal coverage in all cases considered.

	\begin{table}[htbp!]
		\footnotesize
        \begin{center}
        	 \caption{\label{Tab:CI_cov_len} Estimated coverage and average length of the \texttt{ocd\_CI} confidence interval and average detection delay over 2000 repetitions, with standard errors in brackets.  Other parameters: $\gamma=30000$, $z=1000$, $\Sigma = I_p$, $\alpha = 0.05$, $a = \tilde{a} = \sqrt{2 \log p}$, $c=0.5$, $d_1 = c\sqrt{\log(p/\alpha)}$, $d_2 = 4d_1^2$. For comparison, we also present the corresponding estimated coverage probabilities and average lengths of the procedure based on \citet{kaul2021inference}, as described in Section~\ref{Sec:CoverageLength}.}
	        \begin{tabular}{cccc|ccc|cc}
	            \hline\hline
	              &  &  &  & \multicolumn{3}{c|}{\texttt{ocd\_CI}} & \multicolumn{2}{c}{\texttt{Kaul\_et\_al}} \\
                $p$ & $s$ & $\vartheta$ & $\beta$ & Delay & Coverage (\%) & CI Length & Coverage (\%) & CI Length\\
                \hline
                $100$ & $2$ & $2$ & $4$ & $9.8_{(0.1)}$ & $96.2_{(0.4)}$ & $20.1_{(0.7)}$ & $83.2_{(0.8)}$ & $732.7_{(9.6)}$ \\
                $100$ & $2$ & $2$ & $2$ & $12.6_{(0.1)}$ & $97.0_{(0.4)}$ & $33.7_{(0.7)}$ & $82.5_{(0.8)}$ & $474.9_{(11.0)}$ \\
                $100$ & $2$ & $2$ & $1$ & $14.1_{(0.1)}$ & $97.9_{(0.3)}$ & $80.8_{(1.0)}$ & $83.5_{(0.8)}$ & $341.4_{(10.4)}$ \\
                $100$ & $2$ & $1$ & $2$ & $34.2_{(0.3)}$ & $95.8_{(0.4)}$ & $66.1_{(1.0)}$ & $76.6_{(0.9)}$ & $399.3_{(10.5)}$ \\
                $100$ & $2$ & $1$ & $1$ & $44.2_{(0.3)}$ & $97.5_{(0.4)}$ & $122.0_{(1.4)}$ & $80.5_{(0.9)}$ & $123.8_{(6.5)}$ \\
                $100$ & $2$ & $1$ & $0.5$ & $52.0_{(0.4)}$ & $97.4_{(0.4)}$ & $309.1_{(2.0)}$ & $81.5_{(0.9)}$ & $90.8_{(5.4)}$  \\ \hline
                $100$ & $10$ & $2$ & $4$ & $14.7_{(0.1)}$ & $96.0_{(0.4)}$ & $32.5_{(0.8)}$ & $80.2_{(0.9)}$ & $636.4_{(10.5)}$  \\
                $100$ & $10$ & $2$ & $2$ & $15.7_{(0.1)}$ & $97.4_{(0.4)}$ & $38.4_{(0.8)}$ & $77.4_{(0.9)}$ & $537.9_{(10.9)}$ \\
                $100$ & $10$ & $2$ & $1$ & $15.9_{(0.1)}$ & $97.0_{(0.4)}$ & $80.2_{(1.1)}$ & $80.8_{(0.9)}$ & $542.4_{(10.9)}$ \\
                $100$ & $10$ & $1$ & $2$ & $52.6_{(0.5)}$ & $96.2_{(0.4)}$ & $114.0_{(1.5)}$ & $75.8_{(1)}$ & $342.4_{(10.1)}$ \\
                $100$ & $10$ & $1$ & $1$ & $56.9_{(0.4)}$ & $97.1_{(0.4)}$ & $142.5_{(1.8)}$ & $73.9_{(1)}$ & $262.6_{(9.1)}$  \\
                $100$ & $10$ & $1$ & $0.5$ & $60.2_{(0.4)}$ & $98.2_{(0.3)}$ & $301.1_{(1.6)}$ & $75.9_{(1)}$ & $248.3_{(8.9)}$  \\ \hline
                $100$ & $100$ & $2$ & $4$ & $27.2_{(0.2)}$ & $96.1_{(0.4)}$ & $77.6_{(0.9)}$ & $68.2_{(1.0)}$ & $533.9_{(10.7)}$ \\
                $100$ & $100$ & $2$ & $2$ & $27.7_{(0.2)}$ & $96.0_{(0.4)}$ & $81.8_{(1.0)}$ & $71.3_{(1.0)}$ & $537.7_{(10.8)}$ \\
                $100$ & $100$ & $2$ & $1$ & $28.2_{(0.2)}$ & $97.5_{(0.3)}$ & $99.4_{(1.3)}$ & $71.8_{(1.0)}$ & $556.0_{(10.7)}$ \\
                $100$ & $100$ & $1$ & $2$ & $100.7_{(0.8)}$ & $94.7_{(0.5)}$ & $292.8_{(3.5)}$ & $87.7_{(0.7)}$ & $850.5_{(9.5)}$ \\
                $100$ & $100$ & $1$ & $1$ & $100.5_{(0.9)}$ & $96.3_{(0.4)}$ & $296.0_{(3.4)}$ & $88.0_{(0.7)}$ & $863.7_{(9.3)}$ \\
                $100$ & $100$ & $1$ & $0.5$ & $103.2_{(0.8)}$ & $97.3_{(0.4)}$ & $365.9_{(2.8)}$ & $89.3_{(0.7)}$ & $876.8_{(9.1)}$ \\ \hline
                $500$ & $2$ & $2$ & $4$ & $11.3_{(0.1)}$ & $97.2_{(0.4)}$ & $23.1_{(0.7)}$ & $92.0_{(0.6)}$ & $958.7_{(4.3)}$ \\
                $500$ & $2$ & $2$ & $2$ & $15.8_{(0.1)}$ & $97.7_{(0.3)}$ & $45.2_{(0.9)}$ & $83.3_{(0.8)}$ & $806.4_{(8.7)}$ \\
                $500$ & $2$ & $2$ & $1$ & $17.7_{(0.1)}$ & $97.5_{(0.4)}$ & $117.3_{(1.0)}$ & $79.9_{(0.9)}$ & $624.9_{(10.7)}$ \\
                $500$ & $2$ & $1$ & $2$ & $41.5_{(0.3)}$ & $97.3_{(0.4)}$ & $81.8_{(1.2)}$ & $80.0_{(0.9)}$ & $774.9_{(9.4)}$ \\
                $500$ & $2$ & $1$ & $1$ & $55.0_{(0.4)}$ & $96.8_{(0.4)}$ & $168.9_{(1.5)}$ & $73.0_{(1)}$ & $275.9_{(9.4)}$ \\
                $500$ & $2$ & $1$ & $0.5$ & $64.6_{(0.5)}$ & $98.1_{(0.3)}$ & $445.0_{(1.7)}$ & $75.6_{(1)}$ & $186.1_{(8.0)}$  \\ \hline
                $500$ & $22$ & $2$ & $4$ & $23.6_{(0.2)}$ & $96.3_{(0.4)}$ & $55.4_{(1.0)}$ & $87.0_{(0.8)}$ & $884.9_{(7.3)}$ \\
                $500$ & $22$ & $2$ & $2$ & $25.0_{(0.2)}$ & $97.0_{(0.4)}$ & $60.3_{(0.8)}$ & $85.5_{(0.8)}$ & $864.2_{(7.8)}$ \\
                $500$ & $22$ & $2$ & $1$ & $25.5_{(0.2)}$ & $98.1_{(0.3)}$ & $119.7_{(0.8)}$ & $83.6_{(0.8)}$ & $823.0_{(8.6)}$ \\
                $500$ & $22$ & $1$ & $2$ & $88.1_{(0.7)}$ & $97.0_{(0.4)}$ & $203.5_{(2.1)}$ & $77.2_{(0.9)}$ & $645.0_{(11.0)}$ \\
                $500$ & $22$ & $1$ & $1$ & $91.9_{(0.6)}$ & $97.8_{(0.3)}$ & $229.7_{(2.2)}$ & $76.2_{(1)}$ & $562.8_{(11.1)}$ \\
                $500$ & $22$ & $1$ & $0.5$ & $94.9_{(0.6)}$ & $98.3_{(0.3)}$ & $462.8_{(1.4)}$ & $75.5_{(1)}$ & $538.3_{(11.2)}$  \\ \hline
                $500$ & $500$ & $2$ & $4$ & $79.8_{(0.6)}$ & $95.0_{(0.5)}$ & $238.9_{(2.7)}$ & $88.5_{(0.7)}$ & $913.0_{(8.0)}$ \\
                $500$ & $500$ & $2$ & $2$ & $80.3_{(0.6)}$ & $95.8_{(0.4)}$ & $245.7_{(2.6)}$ & $90.3_{(0.7)}$ & $928.8_{(7.7)}$ \\
                $500$ & $500$ & $2$ & $1$ & $80.9_{(0.6)}$ & $97.5_{(0.4)}$ & $250.2_{(2.5)}$ & $90.6_{(0.7)}$ & $928.3_{(7.7)}$ \\
                $500$ & $500$ & $1$ & $2$ & $290.5_{(2.3)}$ & $94.5_{(0.5)}$ & $819.7_{(7.9)}$ & $95.2_{(0.5)}$ & $1189.4_{(7.3)}$ \\
                $500$ & $500$ & $1$ & $1$ & $291.4_{(2.3)}$ & $95.2_{(0.5)}$ & $831.1_{(7.5)}$ & $94.3_{(0.5)}$ & $1204.9_{(7.0)}$ \\
                $500$ & $500$ & $1$ & $0.5$ & $297.3_{(2.3)}$ & $98.1_{(0.3)}$ & $875.0_{(6.7)}$ & $94.6_{(0.5)}$ & $1207.4_{(6.8)}$ \\
                \hline
            \end{tabular}
        \end{center}
    \end{table}

	Another interesting feature of Table~\ref{Tab:CI_cov_len} is to compare the average confidence interval lengths with the corresponding average detection delays.  Corollary~\ref{cor:allresultsocd}(b), as well as \citet[][Theorem~4]{CWS2021}, indicates that both of these quantities are of order $(s/\beta^2) \vee 1$, up to polylogarithmic factors in $p$ and $\gamma$, but of course whenever the confidence interval includes the changepoint, its length must be at least as long as the detection delay.  Nevertheless, in most settings, it is only 2 to 3 times longer on average, and in all cases considered was less than 7 times longer on average.  Moreover, we can also observe that the confidence interval length increases with $s$ and decreases with $\beta$, as anticipated by our theory. 
	
	For comparison, we also present the corresponding coverage probabilities and average lengths of confidence intervals obtained using an offline procedure as described in the introduction. More precisely, after the \texttt{ocd} algorithm has declared a change, we treat the data up to the stopping time as an offline dataset, and apply the \texttt{inspect} algorithm \citep{WangSamworth2018}, followed by the one-step refinement of \citet{kaul2021inference}, to construct an estimate, $\hat{z}^{\texttt{KFJS}}$, of the changepoint location. As recommended by \citet{kaul2021inference}, we obtain an estimator $\hat\vartheta^{\texttt{KFJS}}$ of $\vartheta$ using the $\ell_2$-norm of the soft-thresholded difference in empirical mean vectors before and after $\hat{z}^{\texttt{KFJS}}$, with the soft-thresholding parameter chosen via the Bayesian Information Criterion. The final confidence interval is then of the form $[\hat{z}^{\texttt{KFJS}} - q^{\alpha/2}/(\hat{\vartheta}^{\texttt{KFJS}})^2, \hat{z}^{\texttt{KFJS}} + q^{\alpha/2}/(\hat{\vartheta}^{\texttt{KFJS}})^2]$, where $q^{\alpha/2}$ is the $1-\alpha/2$ quantile of the distribution of the (almost surely unique) maximizer of a two-sided Brownian motion with a triangular drift as given by \citet[Theorem~3.1]{kaul2021inference}. In particular, we have $q^{0.025} = 11.03$. The last two columns of Table~\ref{Tab:CI_cov_len} reveal that both the coverage probabilities and confidence interval lengths from this procedure are disappointing and not competitive with those of the \texttt{ocd\_CI} algorithm. There are two main reasons for this: first, the nature of the online problem means that the changepoint is often located near the right-hand end of the dataset up to the stopping time; on the other hand, the theoretical guarantees of \citet{kaul2021inference} are obtained under an asymptotic setting where the fraction of data either side of the change is bounded away from zero. Thus, the estimated changepoint from the one-step refinement is often quite poor. Moreover, the estimated magnitude of change, $\hat{\vartheta}^{\texttt{KFJS}}$, is often a significant underestimate of $\vartheta$ due to the soft-thresholding operation, and this can lead to substantially inflated confidence interval lengths. 
	We emphasize that the \citet{kaul2021inference} procedure was not designed for use in this online setting, but we nevertheless present these results to illustrate the fact that the naive application of offline methods in sequential problems may fail badly.
	
	While Table~\ref{Tab:CI_cov_len} covers the most basic setting for our methodology, our theory in Section~\ref{Sec:Theory} applies equally well to data with 
	spatial dependence across different coordinates.  To assess whether this theory carries over to empirical performance, Table~\ref{Tab:Spatial} in the supplement (Section~\ref{Sec:AddSims}) presents corresponding coverage probabilities and lengths for the \texttt{ocd\_CI} procedure with the cross-sectional covariance matrix $\Sigma = (\Sigma_{jk})_{j,k \in [p]}$ taken to be Toeplitz with parameter $\rho \in \{0.5,0.75\}$; in other words, $\Sigma_{jk} = \rho^{|j-k|}$.  The results are again encouraging: the coverage remains perfectly satisfactory in all settings considered, and moreover, the lengths of the confidence intervals are very similar to those in Table~\ref{Tab:CI_cov_len}.

\begin{table}[htbp!]
	\footnotesize
        \begin{center}
        	\caption{\label{Tab:supp_rec_with_extra} Estimated support recovery probabilities (with standard errors in brackets).  Other parameters: $p=100$, $\Sigma = I_p$, $\alpha = 0.05$, $a = \tilde{a} = \sqrt{2 \log p}$, $d_1 = \sqrt{2 \log (p/\alpha)}$, $\beta=\vartheta$, and with an additional $\ell = \lceil 2s\log_2(2p)\log(p)\beta^{-2}\rceil$ post-declaration observations.}
            \begin{tabular}{ccccc}
                \hline\hline
                $s$ & $\vartheta$ & Signal Shape & $\hat{\mathcal{S}} \subseteq \mathcal{S}_{\beta}$ (\%) & $\hat{\mathcal{S}}\cup{\{\hat{j}\}} \supseteq \mathcal{S}$ (\%)  \\
                \hline
                $5$ & $2$ & uniform & $99.8_{(0.2)}$ & $97.6_{(0.7)}$\\
                $5$ & $1$ & uniform & $100.0_{(0.0)}$ & $97.6_{(0.7)}$\\
                $50$ & $2$ & uniform & $100.0_{(0.0)}$ & $95.6_{(0.9)}$\\
                $50$ & $1$ & uniform & $100.0_{(0.0)}$ & $97.8_{(0.7)}$ \vspace{1.5mm}\\
                $5$ & $2$ & inv sqrt & $99.6_{(0.3)}$ & $96.6_{(0.8)}$\\
                $5$ & $1$ & inv sqrt & $100.0_{(0.0)}$ & $98.8_{(0.5)}$\\
                $50$ & $2$ & inv sqrt & $100.0_{(0.0)}$ & $99.8_{(0.2)}$\\
                $50$ & $1$ & inv sqrt & $100.0_{(0.0)}$ & $100.0_{(0.0)}$ \vspace{1.5mm} \\
                $5$ & $2$ & harmonic & $100.0_{(0.0)}$ & $97.6_{(0.7)}$\\
                $5$ & $1$ & harmonic & $99.6_{(0.3)}$ & $97.8_{(0.7)}$\\
                $50$ & $2$ & harmonic & $100.0_{(0.0)}$ & $99.4_{(0.3)}$\\
                $50$ & $1$ & harmonic & $100.0_{(0.0)}$ & $100.0_{(0.0)}$\\
                \hline
            \end{tabular}
        \end{center}
    \end{table}
	

	\subsection{Support Recovery}
	\label{Sec:Simualtion_support}
	We now turn our attention to the empirical support recovery properties of the quantity $\hat{\mathcal{S}}$ (in combination with the anchor coordinate $\hat{j}$) computed in the \texttt{ocd\_CI} algorithm.  In Table~\ref{Tab:supp_rec_with_extra}, we present the probabilities, estimated over 500 repetitions, that $\hat{\mathcal{S}} \subseteq \mathcal{S}_\beta$ and that $\hat{\mathcal{S}} \cup \{\hat{j\}} \supseteq \mathcal{S}$ for $p =100$, $s \in \{5,50\}$, $\vartheta \in \{1,2\}$, $\Sigma = I_p$, and for three different signal shapes: in the uniform, inverse square root and harmonic cases, we took $\theta \propto (\mathbbm{1}_{\{j \in [s]\}})_{j \in [p]}$, $\theta \propto (j^{-1/2}\mathbbm{1}_{\{j \in [s]\}})_{j \in [p]}$ and $\theta \propto (j^{-1}\mathbbm{1}_{\{j \in [s]\}})_{j \in [p]}$ respectively.  As inputs to the algorithm, we set $a = \tilde{a} = \sqrt{2 \log p}$, $\alpha = 0.05$, $d_1 = \sqrt{2 \log (p/\alpha)}$, $\beta = \vartheta$, and, motivated by Corollary~\ref{cor:allresultsocd}, took an additional $\ell = \lceil 2s\beta^{-2}\log_2(2p)\log p\rceil$ post-declaration observations in constructing the support estimates.  The results reported in Table~\ref{Tab:supp_rec_with_extra} provide empirical confirmation of the support recovery properties claimed in Corollary~\ref{cor:allresultsocd}.

Finally in this section, we consider the extent to which the  additional $\ell$ observations are necessary in practice to provide satisfactory support recovery.  In the left panel of Figure~\ref{fig:support}, we plot Receiver Operating Characteristic (ROC) curves to study the estimated support recovery probabilities with $\ell = 0$ as a function of the input parameter $d_1$, which can be thought of as controlling the trade-off between $\mathbb{P}(\hat{\mathcal{S}} \cup \{\hat{j}\} \supseteq \mathcal{S})$ and $\mathbb{P}(\hat{\mathcal{S}} \subseteq \mathcal{S}_\beta)$.  The fact that the triangles in this plot are all to the left of the dotted vertical line confirms the theoretical guarantee provided in Corollary~\ref{cor:allresultsocd}(a), which holds with $d_1 = \sqrt{2 \log (p/\alpha)}$, and even with $\ell=0)$; the less conservative choice $d_1 = \sqrt{2 \log p}$, which roughly corresponds to an average of one noise coordinate included in $\hat{\mathcal{S}}$, allows us to capture a larger proportion of the signal.  From this panel, we also see that additional sampling is needed to ensure that, with high probability, we recover all of the true signals.  This is unsurprising: for instance, with a uniform signal shape and $s=50$, it is very unlikely that all 50 signal coordinates will have accumulated such similar levels of evidence to appear in $\hat{\mathcal{S}} \cup \{\hat{j}\}$ by the time of declaration.  The right panel confirms that, with an inverse square root signal shape, the probability that we capture each signal increases with the signal magnitude, and that even small signals tend to be selected with higher probability than noise coordinates.

    \begin{figure}[htbp!]
	\centering
	\includegraphics[width=0.85\textwidth]{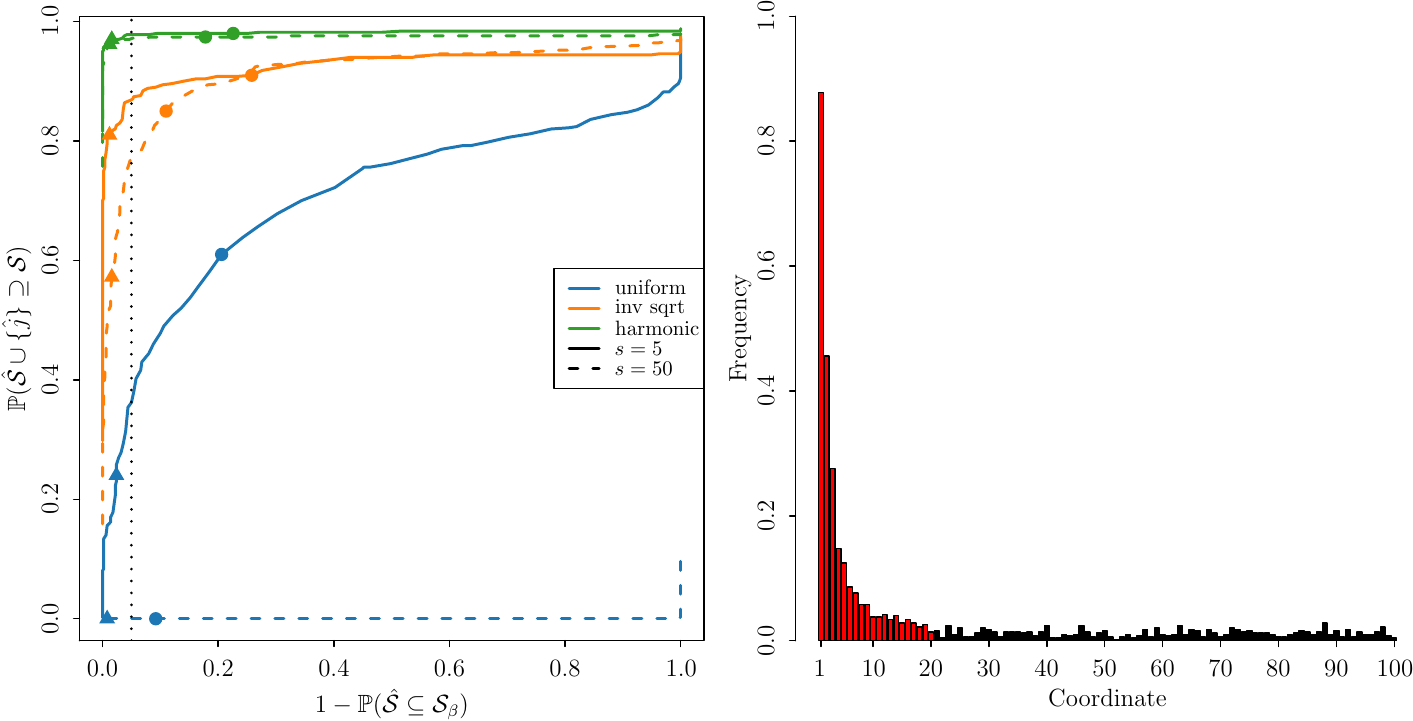}
	\caption{Support recovery properties of \texttt{ocd\_CI}.  In the left panel, we plot ROC curves for three different signal shapes and for sparsity levels $s \in \{5,50\}$.  The triangles and circles correspond to points on the curves with $d_1 = \sqrt{2 \log (p/\alpha)}$ (with $\alpha= 0.05$), and $d_1 = \sqrt{2 \log p}$ respectively.  The dotted vertical line corresponds to $\mathbb{P}(\hat{\mathcal{S}} \subseteq \mathcal{S}_\beta) = 1 - \alpha$.  In the right panel, we plot the proportion of 500 repetitions for which each coordinate belongs to $\hat{\mathcal{S}} \cup \{\hat{j}\}$ with $d_1 = \sqrt{2 \log p}$; here, the $s=20$ signals have an inverse square root shape, and are plotted in red; noise coordinates are plotted in black.  Other parameters for both panels: $p=100$, $\Sigma = I_p$, $\beta = \vartheta = 2$, $\ell=0$, $a = \tilde{a} = \sqrt{2 \log p}$.}
	\label{fig:support}
	\end{figure}

\subsection{US Covid-19 Data Example}
\label{Sec:USCovid}

In this section, we apply \texttt{ocd\_CI} to a dataset of weekly deaths in the United States between January 2017 and June 2020 (available at: \url{https://www.cdc.gov/nchs/nvss/vsrr/covid19/excess_deaths.htm}).  The data up to 29 June 2019 are treated as our training data.  An obvious discrepancy between underlying dynamics of these weekly deaths and the conditions assumed in our theory in Section~\ref{Sec:Theory} is temporal dependence, particularly induced by seasonal and weather effects.  Although we can never hope to remove this dependence entirely, we seek to mitigate its impact by pre-processing the data as follows: for each of the 50 states, as well as Washington, D.C. ($p = 51$), we first estimate the ``seasonal death curve'', i.e.~the mean death numbers for each day of the year, for each state. The seasonal death curve is estimated by first splitting the weekly death numbers evenly across the seven relevant days, and then estimating the average number of deaths on each day of the year from these derived daily death numbers using a Gaussian kernel with a bandwidth of 20 days. As the death numbers follow an approximate Poisson distribution, we apply a square-root transformation to stabilize the variance; more precisely, the transformed weekly excess deaths are computed as the difference of the square roots of the weekly deaths and the predicted weekly deaths from the seasonal death curve. Finally, we standardize the transformed weekly excess deaths using the mean and standard deviation of the transformed data over the training period. The standardized, transformed  data are plotted in Figure~\ref{fig:US_Covid_data} for 12  states.

\begin{figure}[htbp!]
	\centering
	\includegraphics[width=\textwidth]{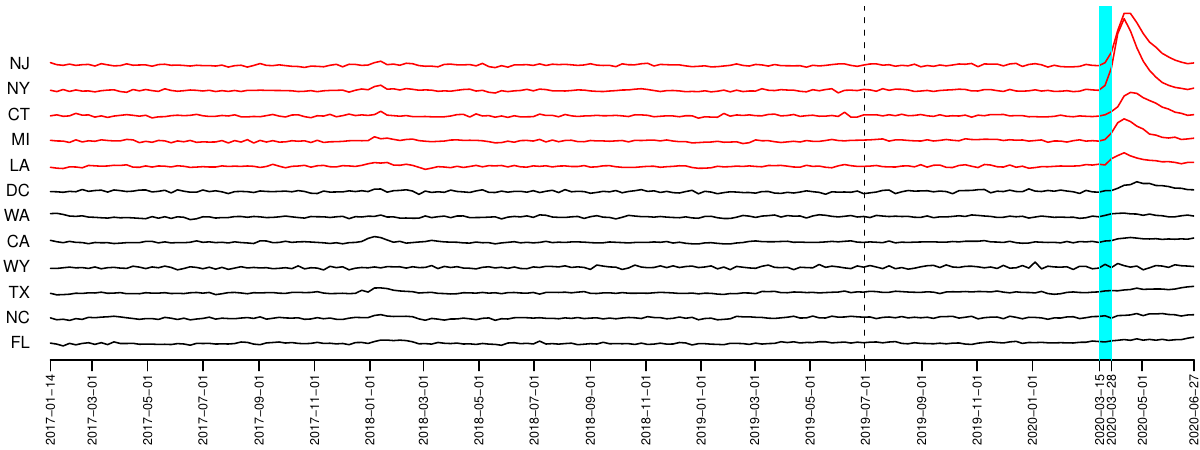}
	\caption{Standardized, transformed weekly excess death data from 12 states (including Washington, D.C.). The monitoring period starts from 30 June 2019 (dashed line). The data from the states in the support estimate are shown in red. The confidence interval [8 March 2020, 28 March 2020] is shown in the light blue shaded region.}
	\label{fig:US_Covid_data}
\end{figure}

When applying \texttt{ocd\_CI} to these data, we take $a = \tilde{a} = \sqrt{2\log p}$, $T^{\mathrm{diag}} = \log\{16p\gamma\log_2(4p)\}$, $T^{\mathrm{off}}=8\log\{16p\gamma\log_2(2p)\}$, $d_1 = 0.5\sqrt{\log (p/\alpha)}$ and $d_2 = 4d_1^2$, with $\alpha = 0.05$, $\beta = 50$ and $\gamma = 1000$. On the monitoring data (from 30 June 2019), the \texttt{ocd\_CI} algorithm declares a change on the week ending 28 March 2020, and provides a confidence interval from the week ending 21 March 2020 to the week ending 28 March 2020. This coincides with the beginning of the first wave of Covid-19 deaths in the United States. The algorithm also identifies New York, New Jersey, Connecticut, Michigan and Louisiana as the estimated support of the change.  Interestingly, if we run the \texttt{ocd\_CI} procedure from the beginning of the training data period  (while still standardizing as before, due to the lack of available data prior to 2017), it identifies a subtler change on the week ending 6 January 2018, with a confidence interval of [17 December 2017, 6 January 2018].  This corresponds to a bad influenza season at the end of 2017 (see, \url{https://www.cdc.gov/flu/about/season/flu-season-2017-2018.htm}).  Despite the natural interpretation of these findings, we recognize that the model in Section~\ref{Sec:Theory} under which we proved our theoretical results cannot capture the full complexity of the temporal dependence in this dataset even after our pre-processing transformations.  A complete theoretical analysis of the performance of \texttt{ocd\_CI} in time-dependent settings is challenging and beyond the scope of the current work; in practical applications, we advise careful modeling of this dependence to facilitate the construction of appropriate residuals for which the main effects of this dependence have been removed.

\section{Supplementary material}
\label{Sec:Supp}
In this section, we provide proofs of our main results (Section~\ref{Sec:Proofs}), auxiliary results and their proofs (Section~\ref{Sec:Auxiliary}), pseudocode for the \texttt{ocd} and \texttt{ocd}$'$ base procedures (Section~\ref{Sec:ocdprimealgorithm}), extensions of our results to sub-Gaussian and sub-exponential settings (Section~\ref{Sec:BeyondGaussianity}) and additional simulation results (Section~\ref{Sec:AddSims}). Throughout this section, we use $\mathbb{P}$ instead of $\mathbb{P}_{z,\theta,\Sigma}$ when it is clear from the context.

	\subsection{Proofs of main results}
	\label{Sec:Proofs}
	\begin{proof}[Proof of Theorem~\ref{Thm:Coverage}]
		Fix $r \geq 1$ that satisfies the assumption~\eqref{Eq:detectassumption} in the theorem, $n>z, j\in [p]$, $b \in \mathcal{B}$ and $j' \in [p]\setminus \{j\}$. We assume, without loss of generality, that $\theta^{j'} \geq 0$. The case $\theta^{j'} < 0$ can be analyzed similarly. Recall that $b_{\min}$, defined in Algorithm~\ref{Alg:CI}, is the smallest positive scale in $\mathcal{B}\cup\mathcal{B}_0$, and write $b_{\mathrm{aux}}^{j'} := \min\bigl\{ b \in (\mathcal{B}\cup\mathcal{B}_0) \cap (0, \infty): b \geq \theta^{j'}\bigr\}$.  Then we have $A_{n,b}^{j', j} + \sum_{i=n+1}^{n+\ell} X_i^{j'} \mid t_{n, b}^{j} \sim \mathcal{N}\bigl( \theta^{j'} \min\{n+\ell-z, t_{n, b}^{j}+\ell\},\, t_{n, b}^{j} +\ell \bigr)$. Thus, recalling the definition of $\hat{\mathcal{S}}$ and $\tilde{b}^{j'}$ from Algorithm~\ref{Alg:CI}, we have 
		\begin{align}
			\mathbb{P}\bigl( \{j' \in \hat{\mathcal{S}}\} &\cap \{\tilde b^{j'}\notin (0, \theta^{j'})\} \cap \{N=n, \hat{j} = j, \hat{b} = b\}\bigr) \nonumber \\
			&= \mathbb{E}\Bigl\{\mathbb{P}\Bigl( \{j' \in \hat{\mathcal{S}} \} \cap \{  \tilde{b}^{j'} \notin (-b_{\min}, b^{j'}_{\mathrm{aux}}) \} \cap \{ N=n, \hat{j} = j, \hat{b} = b\}\Bigm| t_{n,b}^{j}\Bigr)\Bigr\} \nonumber\\	
			&\leq  \mathbb{E} \biggl\{ \mathbb{P}\biggl(A_{n,b}^{j', j} +\sum_{i=n+1}^{n+\ell} X_i^{j'} \geq b_{\mathrm{aux}}^{j'} (t_{n,b}^{j}+\ell)+ d_1\bigl(t_{n,b}^{j}+\ell)^{1/2} \biggm| t_{n,b}^{j}\biggr)\biggr\}\nonumber\\
			&\hspace{1.8cm}  +  \mathbb{E} \biggl\{ \mathbb{P}\biggl(A_{n,b}^{j', j} +\sum_{i=n+1}^{n+\ell} X_i^{j'} \leq -b_{\min} (t_{n,b}^{j}+\ell) - d_1\bigl(t_{n,b}^{j}+\ell\bigr)^{1/2} \biggm| t_{n,b}^{j}\biggr)\biggr\} \nonumber\\
			& \leq  \mathbb{E}\bigl\{\bar\Phi\bigl((b^{j'}_{\mathrm{aux}} - \theta^{j'})(t_{n,b}^{j}+\ell)^{1/2} + d_1\bigr)\bigr\} + \mathbb{E}\bigl\{\bar\Phi\bigl((b_{\mathrm{min}} + \theta^{j'})(t_{n,b}^{j}+\ell)^{1/2} + d_1\bigr) \bigr\} 
			\nonumber\\
			&\leq 2\bar{\Phi}(d_1). \label{Eq:CoveragePfEq1} 
		\end{align}
		Moreover, by a similar argument to \eqref{Eq:1DMain} in the proof of Proposition~\ref{Prop:1DCI}, for $b\in (0,\theta^{j'})$, we have
		\begin{align} 
			&\mathbb{P}\bigl( n-t_{n,b}^{j'}-d_2/b^2 > z \bigr) \leq   2\bar{\Phi}\biggl(  \frac{\sqrt{d_2}}{b}(\theta^{j'}-b/2) \biggr)\leq 2\bar{\Phi}\bigl(\sqrt{d_2}/2\bigr). \label{Eq:CoveragePfEq2} 
		\end{align}
		Combining~\eqref{Eq:CoveragePfEq1} and~\eqref{Eq:CoveragePfEq2}, we have
		\begin{align*}
			\mathbb{P}\Bigl(\{j' &\in \hat{\mathcal{S}}\}\cap \{n- t_{n, \tilde{b}^{j'}}^{j'}- d_2/(\tilde{b}^{j'})^2 > z\} \cap \{N=n, \hat{j}=j,\hat{b}=b\}\Bigr) \\
			&\leq  \mathbb{P}\Bigl( \{j' \in \hat{\mathcal{S}}\} \cap \{\tilde b^{j'} \notin (0, \theta^{j'})\} \cap \{N=n, \hat{j}=j,\hat{b}=b\}\Bigr) + \sum_{ b \in (\mathcal{B}\cup\mathcal{B}_0)\cap(0, \theta^j)} 2\bar{\Phi}\bigl(\sqrt{d_2}/2\bigr) \\
			&\leq 2\bar{\Phi}(d_1) +  2\log_2(4p)\bar{\Phi}\bigl(\sqrt{d_2}/2\bigr)  \leq 2\log_2(4p) e^{-5r\beta^2/(18s\log_2(2p))}, 
		\end{align*}
		where the last inequality follows from the choice of $d_1$ and $d_2$ in the statement of the theorem and the standard Gaussian tail bound used at the end of the proof of Lemma~\ref{Lemma:1DLemma}. By a union bound and~\eqref{Eq:detectassumption}, we have
		\begin{align*}
		&\mathbb{P}(z \notin \mathcal{C}) \leq \mathbb{P}(N \leq z) + \mathbb{P}\biggl( N-\min_{j \in \hat{\mathcal{S}}} \Bigl\{t_{N, \tilde{b}^j}^j +  \frac{d_2}{(\tilde{b}^j)^2}  \Bigr\}> z\biggr) \\
			& \leq \mathbb{P}(N \leq z) + \mathbb{P}(N > z+r) \\
			&\hspace{1cm}+ \sum_{n=z+1}^{z+\lfloor r\rfloor}  \sum_{j=1}^p \sum_{b \in\mathcal{B}}\sum_{j'=1}^p   \mathbb{P} \biggl(  \{j' \in \hat{\mathcal{S}}\} \cap \Bigl\{n- t_{n, \tilde{b}^{j'}}^{j'} - \frac{d_2}{(\tilde{b}^{j'})^2}  > z\Bigr\}  \cap \{N =n, \hat{j}=j, \hat{b}=b\} \biggr) \\
			&\leq \mathbb{P}(N \leq z) +  g(r;N) +  4rp^2\log_2^2(4p) e^{-5r\beta^2/(18s\log_2(2p))} \leq \alpha, 
		\end{align*}
		as required.
	\end{proof}
	
	\begin{proof}[Proof of Theorem~\ref{Thm:Length}]
		Fix $r \geq 1$ that satisfies the assumption~\eqref{Eq:detectassumption}. Denote $\ell_0 := 80r$. Then $\ell \geq \ell_0$. Since the output of Algorithm~\ref{Alg:CI} remains unchanged if we replace $(X_t^j:t\in\mathbb{N})$ by $(-X_t^j:t\in\mathbb{N})$ for any fixed $j$, we may assume without loss of generality that $\theta^1 \geq \theta^2 \geq \vartheta/\sqrt{s\log_2(2p)}$. For $j \in \{1, 2\}$, we denote $b^j := \max\{b \in \mathcal{B}\cup\mathcal{B}_0: b \leq \theta^j\}$ . Since $\vartheta \geq \beta$ and $s \leq 2^{\lfloor \log_2(p) \rfloor}$, we have $b^1 \geq b^2 \geq \beta/\sqrt{s\log_2(2p)} =: b_*$. Denote
		\begin{align*}
			 u := \frac{\ell_0\beta^2}{80s\log_2(2p)}=rb_*^2 \quad \text{and} \quad \delta := \frac{a}{2\sqrt{r+\ell}}.
		\end{align*}
		Now define the following events:
		\begin{align*}
			\Omega_{0} &:= \{z < N \leq z+r \} \\	
			\Omega_1 &:= \bigl\{  t_{N,b}^j \leq N-z+ub^{-2}\text{ for all } j\in [p] \text{ and } b\in\mathcal{B}\cup\mathcal{B}_0\bigr\}, \\
			\Omega_2 &:= \biggl\{ \biggl|A_{N,b}^{j',j}+\sum_{i=N+1}^{N+\ell} X_i^{j'}\biggr| < a\sqrt{t_{N,b}^j+\ell} \text{ for all } b\in\mathcal{B}\cup\mathcal{B}_0, j \in [p] \\
			&\hspace{5cm} \text{ and all } j'\in [p]\setminus \{j\} \text{ with } \bigl|\theta^{j'}\bigr| \leq \delta \biggr\}, \\
			\Omega_3 &:= \bigl\{t_{N,\hat{b}}^{\hat{j}} \leq  N-z+\ell/20 \bigr\}.
		\end{align*}
		Finally, we denote event
		\[
		\Omega_4 := \Omega_{4,1} \cup \Omega_{4,2},
		\]
		with
		\begin{align*}
			\Omega_{4,1} &:= \bigl\{ \hat{j} \neq 1, 1 \in \hat{\mathcal{S}}, \tilde{b}^1 \geq b^1/\sqrt{2} \bigr\} \\
			\Omega_{4,2} &:= \bigl\{ \hat{j} = 1, 2 \in \hat{\mathcal{S}}, \tilde{b}^2 \geq b^2/\sqrt{2} \bigr\}.
		\end{align*}
        We note that for $j \in \{1,2\}$, on $\Omega_{4,j}$, we have $\tilde b^j \in \mathcal{B}\cup\mathcal{B}_0$.
		Then, on the event $\bigcap_{k=0}^4 \Omega_k$, we have
		\begin{align*}
			L &= \min_{j \in \hat{\mathcal{S}}} \Bigl\{t_{N,\tilde{b}^j}^j + \frac{d_2}{ (\tilde{b}^j)^2}\Bigr\} \wedge N  \leq N-z + \frac{2(u + d_2)}{(b^2)^2} \leq 3r + \frac{2d_2}{b_*^2} \leq 8r.
		\end{align*}
		Thus, it suffices to control the probability of $\cup_{k=0}^4\Omega_k^{\mathrm{c}}$. First, we note that 
		\begin{align} \label{Eq:gensig_randomN_omega0}
			\mathbb{P}(\Omega_{0}^{\mathrm{c}}) \leq g(r;N) + \mathbb{P}(N\leq z).
		\end{align}
		On $\Omega_0$, we have for any $j \in [p]$ and $b \in \mathcal{B} \cup \mathcal{B}_0$ that
		\begin{align*}
			t_{N,b}^j &= \sargmax_{0\leq h\leq N} \sum_{i=N-h+1}^N b(X_i^j -b/2) \leq \sargmax_{N-z \leq h\leq N} \sum_{i=N-h+1}^N b(X_i^j -b/2) \\
			&= N-z+\sargmax_{0 \leq h\leq z} \sum_{i=z-h+1}^z b(X_i^j -b/2).
		\end{align*}
		Thus, by Lemma~\ref{Lemma:1DLemma} (taking $\mu = -b/2$) and a union bound, we have
		\begin{align} 
			&\mathbb{P}(\Omega_{0} \cap \Omega_{1}^{\mathrm{c}}) \leq 2p\log_2(4p)e^{-u/8} = 2p\log_2(4p) e^{-r b_*^2/8}.  \label{Eq:gensig_randomN_omega1}
		\end{align}
		Now observe that, for all $z < n \leq z+r, b\in\mathcal{B}\cup\mathcal{B}_0, j \in [p]$ and $j'\in [p]\setminus \{j\}$, we have
		\begin{equation}
		\label{Eq:Orange1}
		A_{n,b}^{j',j}+\sum_{i=n+1}^{n+\ell} X_i^{j'} \biggm| t_{n.b}^j \sim \mathcal{N}\Bigl(\theta^{j'}\bigl\{\ell + \min\bigl(t_{n,b}^j,n-z\bigr)\bigr\},t_{n,b}^j+\ell\Bigr).
		\end{equation}
		Hence, when $\bigl|\theta^{j'}\bigr| \leq \delta$, we have that
		\begin{align}
			&\mathbb{P}\biggl(  \biggl|A_{n,b}^{j',j}+\sum_{i=n+1}^{n+\ell} X_i^{j'}\biggr| \geq a\sqrt{t_{n,b}^j+\ell} \biggm | t_{n.b}^j \biggr) \leq \mathbb{P}(|Y_1| \geq a) \leq 2\mathbb{P}(Y_1 \geq a) \leq e^{-a^2/8}, \label{Eq:Orange2}
		\end{align}
		where $Y_1 \sim \mathcal{N}(\delta\sqrt{n-z+\ell}, 1)$, and where the last inequality follows from the relation $a = 2\delta\sqrt{r+\ell}$.  Thus, by a union bound, we have
		\begin{align}
			&\mathbb{P}(\Omega_{0} \cap \Omega_{2}^{\mathrm{c}}) \leq 2rp^2\log_2(4p)e^{-a^2/8}.   \label{Eq:gensig_randomN_omega2} 
		\end{align}
		Recall that $u=\ell_0b_*^2/80$. We therefore have for any $z< n \leq z+r, j \in [p]$ and $b \in \mathcal{B}$ that 
		\begin{align}
			&\mathbb{P}\Bigl( \{N = n\} \cap \Omega_{1} \cap \Omega_{2} \cap \bigl\{ Q_{n,b}^j \geq Q_{n,b_*}^j  \bigr\} \cap \bigl\{t_{n,b}^j > n-z+\ell/20\bigr\} \Bigm | X_1^j, X_2^j, \ldots \Bigr) \nonumber\\
			&\leq \mathbb{P}\Biggl( \bigcup_{ \substack{ j' \in [p]\setminus \{j\}: \\ |\theta^{j'}| > \delta }  } \! \! \! \bigl\{ |E_{n, b}^{j', j}| \geq |E_{n, b_*}^{j', j}|  \bigr\} \cap \{N=n\} \cap \Omega_{1} \cap \Omega_{2}  \cap \bigl\{t_{n,b}^j > n-z+\ell/20\bigr\} \! \Biggm| \!X_1^j, X_2^j, \ldots \! \Biggr) \nonumber\\
			&\leq \smash{\sum_{\substack{ j' \in [p]\setminus \{j\}: \\ |\theta^{j'}| > \delta}}} \mathbb{P} \Bigl(  \bigl\{ |E_{n, b}^{j', j}| \geq |E_{n, b_*}^{j', j}|  \bigr\} \cap  \bigl\{n-z < t_{n,b_*}^j \leq n-z+\ell/80 \bigr\} \nonumber\\ 
			&\hspace{9.25cm}\cap \bigl\{t_{n,b}^j > n-z+\ell/20\bigr\} \Bigm| X_1^j, X_2^j, \ldots  \Bigr)      \nonumber\\ 
			&\hspace{0.4cm} + \sum_{\substack{ j' \in [p]\setminus \{j\}: \\ |\theta^{j'}| > \delta}} \mathbb{P} \Bigl(  \bigl\{ |E_{n, b}^{j', j}| \geq |E_{n, b_*}^{j', j}|  \bigr\} \cap  \bigl\{t_{n,b_*}^j \leq n-z \bigr\} \cap \bigl\{t_{n,b}^j > n-z+\ell/20\bigr\} \Bigm| X_1^j, X_2^j, \ldots  \Bigr)     \nonumber \\
			& \leq p\exp\biggl(  -\frac{\ell \delta^2}{960}  \biggr) = p\exp\biggl(  -\frac{ \ell a^2}{3840(r+\ell)}  \biggr) , \label{Eq:Orange3}
		\end{align}
		where the final inequality follows from Lemma~\ref{Lemma:normalcomparison}(a), applied with $U=\sum_{i=n-t_{n,b_*}^j+1}^z X_i^{j'}$, $V=\sum_{i=n-t_{n,b}^j+1}^{n-t_{n,b_*}^j} X_i^{j'}$, $Y = \sum_{i=z+1}^{n+\ell} X_i^{j'}$, $\alpha = \theta^{j'}$, $\phi_1 = z-n+t_{n,b_*}^j$, $\phi_2 = z-n+t_{n,b}^j$, $\phi_3 = n-z+\ell$ and $\kappa = \ell/80$, as well as Lemma~\ref{Lemma:normalcomparison}(b), with $U=\sum_{i=n-t_{n,b}^j+1}^z X_i^{j'}$, $Y = \sum_{i=n-t_{n, b_*}^j+1}^{n+\ell} X_i^{j'}$, $Z = \sum_{i=z+1}^{n-t_{n, b_*}^j} X_i^{j'}$,  $\alpha = \theta^{j'}$, $\phi_1 = z-n+t_{n,b}^j$, $\phi_3 = \ell+t_{n,b_*}^j$, $\phi_4 = n-z-t_{n,b_*}^j$ and $\kappa = \ell/80$.  
		Observe that $Q_{n,\hat{b}}^{\hat{j}} \geq Q_{n,b_*}^{\hat{j}}$.  Thus, by a union bound, we have
		\begin{align} \label{Eq:gensig_randomN_omega3}
			\mathbb{P}(\Omega_{0} &\cap \Omega_{1} \cap \Omega_{2} \cap \Omega_{3}^{\mathrm{c}}) \nonumber \\
			&\leq \sum_{j =1}^p \sum_{b \in \mathcal{B}} \sum_{n=z+1}^{z+ \lfloor r\rfloor }  \mathbb{P}\Bigl( \{N=n\} \cap \Omega_{1} \cap \Omega_{2} \cap \bigl\{ Q_{n,b}^j \geq Q_{n,b_*}^j  \bigr\} \cap \bigl\{t_{n,b}^j > n-z+\ell/20\bigr\}\Bigr) \nonumber \\
			&\leq 2rp^2\log_2(2p) \exp\biggl(  -\frac{ \ell a^2}{3840(r+\ell)} \biggr) \leq  2rp^2\log_2(2p) e^{-a^2/3888}, 
		\end{align}
		where the last inequality follows from $r = \ell_0/80 \leq \ell/80$.
		Recall that $d_1^2=5rb_*^2/9\leq (b^1)^2\ell/144$. Thus, on $\Omega_0 \cap \Omega_3$, we have
		\begin{equation*}
			t_{N,\hat{b}}^{\hat{j}} \leq N-z+\ell/20 \leq r+\ell/20 \leq \ell/16.
		\end{equation*}
		Hence, for any $z < n \leq z+r, j \in [p]\setminus\{1\}$ and $b \in \mathcal{B}$, we have
		\begin{align}
			\mathbb{P}\bigl(\Omega_3 &\cap \{N = n, \hat{j} = j, \hat{b} = b\} \cap \Omega_{4,1}^{\mathrm{c}} \bigm| X_1^j,X_2^j,\ldots \bigl) \nonumber\\
			&\leq  \mathbb{P}\biggl( \{ t_{n, b}^j \leq \ell/16 \}  \cap \Bigl\{  E_{n, b}^{1,j} - b^1 \sqrt{\bigl(t_{n, b}^{j} + \ell\bigr)/2} < d_1\Bigr\} \biggm| X_1^j, X_2^j, \ldots \biggr) \leq \frac{1}{2}e^{-d_1^2/2}. \label{Eq:Orange4}
		\end{align}
		Here, in the final bound, we have used the facts that $E_{n,b}^{1,j}\mid  t_{n,b}^j \sim \mathcal{N}\bigl(\theta^{1} \min\{(n+\ell-z) (t_{n, b}^{j} + \ell)^{-1/2},(t_{n, b}^{j} + \ell)^{1/2}\} , 1\bigr)$ and that 
		\begin{align*}
			\theta^1 \min\bigl\{(n+\ell-z) (t_{n, b}^{j} &+ \ell)^{-1/2},(t_{n, b}^{j} + \ell)^{1/2}\bigr\} - b^1\sqrt{\bigl(t_{n,b}^j+\ell\bigr)/2} \\
			&\geq \frac{4\theta^1\sqrt{\ell}}{\sqrt{17}} - \frac{b^1\sqrt{17\ell} }{4\sqrt{2}} \geq \frac{b^1\sqrt{\ell}}{6} \geq 2d_1,
		\end{align*}
		when $t_{n, b}^j \leq \ell/16$, as well as the standard Gaussian tail bound used at the end of the proof of Lemma~\ref{Lemma:1DLemma}. By a similar argument, we also have for any $z < n \leq z+r$ and $b \in \mathcal{B}$ that
		\begin{align}
			\mathbb{P}\bigl(\Omega_3 &\cap \{N = n, \hat{j} = 1, \hat{b} = b\} \cap \Omega_{4,2}^{\mathrm{c}} \bigm| X_1^1,X_2^1,\ldots \bigl) \nonumber\\
			&\leq  \mathbb{P}\biggl( \{ t_{n, b}^1 \leq \ell/16 \}  \cap \Bigl\{  E_{n, b}^{2,1} - b^2 \sqrt{\bigl(t_{n, b}^{1} + \ell\bigr)/2} < d_1\Bigr\} \Bigm| X_1^1, X_2^1, \ldots \biggr) \leq \frac{1}{2}e^{-d_1^2/2}. \label{Eq:Orange5}
		\end{align}
		Thus, by a union bound, we have
		\begin{align} \label{Eq:gensig_randomN_omega4}
			\mathbb{P}(\Omega_{0} \cap \Omega_{3} \cap \Omega_{4}^{\mathrm{c}}) &= \mathbb{P}(\Omega_{0} \cap \Omega_{3} \cap \Omega_{4,1}^{\mathrm{c}} \cap \Omega_{4,2}^{\mathrm{c}}) \nonumber \\
			&\leq \sum_{j = 2}^p \sum_{b \in \mathcal{B} } \sum_{n=z+1}^{z+\lfloor r \rfloor} \mathbb{P}\bigl(\Omega_3 \cap \{N = n, \hat{j} = j, \hat{b} = b\} \cap \Omega_{4,1}^{\mathrm{c}} \bigr) \nonumber\\
			&\hspace{3cm} + \sum_{b \in \mathcal{B} } \sum_{n=z+1}^{z+\lfloor r \rfloor} \mathbb{P}\bigl(\Omega_3 \cap \{N = n, \hat{j} = 1, \hat{b} = b\} \cap \Omega_{4,2}^{\mathrm{c}} \bigr) \nonumber \\
			&\leq rp\log_2(2p)e^{-5rb_*^2/18}. 
		\end{align}
		Hence combining~\eqref{Eq:gensig_randomN_omega0},~\eqref{Eq:gensig_randomN_omega1},~\eqref{Eq:gensig_randomN_omega2},~\eqref{Eq:gensig_randomN_omega3} and~\eqref{Eq:gensig_randomN_omega4}, we conclude that
		\begin{align*}
			\mathbb{P}(L >8r) &\leq g(r;N) + \mathbb{P}(N\leq z) + 4rp^2\log_2(4p)e^{-a^2/3888} + 3rp\log_2(4p)e^{-rb_*^2/8} \leq \alpha,
		\end{align*}
		where the last inequality follows from the choice of $a$ with a sufficiently large universal constant $C$ and~\eqref{Eq:detectassumption}.
	\end{proof}

	\begin{proof}[Proof of Theorem~\ref{Thm:Support}]
	Fix $r \geq 1$ that satisfies the assumption~\eqref{Eq:detectassumption}.\\
	(a) For $j' \in \mathcal{S}_{\beta}^{\mathrm{c}}$, we have $|\theta^{j'}| < b_{\min}$, so the event $\{|\tilde{b}^{j'}| \leq |\theta^{j'}|\}$ is empty. Thus by~\eqref{Eq:CoveragePfEq1}, we have, for $n>z, j\in [p]$, $b \in \mathcal{B}$ and $j' \in  \mathcal{S}_{\beta}^{\mathrm{c}}$, that
		\[
		\mathbb{P}\bigl( \{j' \in \hat{\mathcal{S}}\} \cap \{N=n, \hat{j} = j, \hat{b} = b\}\bigr) \leq 2\bar{\Phi}(d_1).
		\]
		Hence, by a union bound, we have
		\begin{align*}
			\mathbb{P}(\hat{\mathcal{S}}  \nsubseteq &\mathcal{S}_{\beta}) \leq \mathbb{P}(N \leq z) + \mathbb{P}(N > z+r) \\
			&\hspace{3.5cm}+ \sum_{n=z+1}^{z+\lfloor r \rfloor}  \sum_{j=1}^p \sum_{b \in\mathcal{B}} \sum_{j'\in \mathcal{S}_{\beta}^{\mathrm{c}}}  \mathbb{P}\bigl( \{j' \in \hat{\mathcal{S}}\} \cap \{N=n, \hat{j} = j, \hat{b} = b\}\bigr) \\
			&\leq \mathbb{P}(N \leq z) + g(r;N) + 4rp^2\log_2(2p) \bar{\Phi}(d_1) \leq \alpha,
		\end{align*}
		as required.
		
		\medskip
		
		\noindent (b) 
		We use the events $\Omega_0, \Omega_1, \Omega_2, \Omega_3$ defined in the proof of Theorem~\ref{Thm:Length}.  Recall from the argument immediately below~\eqref{Eq:gensig_randomN_omega3} that we have $t_{N,\hat{b}}^{\hat{j}} \leq \ell/16$ and $d_1\leq \min_{j'\in\mathcal{S}}|\theta^{j'}|\sqrt{\ell}/12$ on $\Omega_0 \cap \Omega_3$.  Recall also the definition of $E_{n,b}^{j',j}$ from Algorithm~\ref{Alg:CI}. Then, for any $z < n \leq z+r$, $j \in [p]$, $j' \in \mathcal{S} \setminus \{j\}$ and $b \in \mathcal{B}$, we have
		\begin{align} \label{Eq:NewEvent_Support}
			\mathbb{P}&\bigl(\Omega_3 \cap \{N=n,\hat{j}=j,\hat{b}=b,j' \notin \hat{\mathcal{S}}\} \bigm| X_1^j,X_2^j,\ldots\bigr) \nonumber\\
			&=\mathbb{P}\Bigl(\Omega_3\cap\{N=n,\hat j = j,\hat b = b\}\cap \bigl\{|E_{n,b}^{j',j}| < b_{\min}\sqrt{t_{n,b}^j+\ell} + d_1\bigr\}\mid X_1^j,X_2^j,\ldots\Bigr) \nonumber\\
			&\leq \mathbb{P}\biggl( \{ t_{n, b}^j \leq \ell/16 \}  \cap \Bigl\{  |E_{n, b}^{j',j}| - b_{\min} \sqrt{t_{n, b}^{j} + \ell} < d_1\Bigr\} \biggm| X_1^j, X_2^j, \ldots \biggr) \leq \frac{1}{2}e^{-d_1^2/2},
		\end{align}
		where, in the final bound, we have used the facts that $E_{n,b}^{j',j}\mid  t_{n,b}^j \sim \mathcal{N}\bigl(\theta^{j'} \min\{(n+\ell-z) (t_{n, b}^{j} + \ell)^{-1/2},(t_{n, b}^{j} + \ell)^{1/2}\} , 1\bigr)$ and that 
		\begin{align*}
			|\theta^{j'}| \min\bigl\{(n+\ell-z) (t_{n, b}^{j} &+ \ell)^{-1/2},(t_{n, b}^{j} + \ell)^{1/2}\bigr\} - b_{\min}\sqrt{t_{n,b}^j+\ell} \\
			& \geq  \frac{4 |\theta^{j'}| \sqrt{\ell}}{\sqrt{17}} - \frac{b_{\min}\sqrt{17\ell} }{4\sqrt{2}} \geq \frac{|\theta^{j'}|\sqrt{\ell}}{6} \geq 2d_1,
		\end{align*}
		when $t_{n, b}^j \leq \ell/16$. Hence
		\begin{align*}
			\mathbb{P}&(\hat{\mathcal{S}}\cup \{\hat{j}\} \nsupseteq  \mathcal{S}) \\
			&\leq \mathbb{P}(\Omega_{0}^{\mathrm{c}}) + \mathbb{P}(\Omega_{0}\cap\Omega_1^{\mathrm{c}}) + \mathbb{P}(\Omega_{0}\cap\Omega_2^{\mathrm{c}}) + \mathbb{P}(\Omega_{0}\cap\Omega_1\cap\Omega_2\cap\Omega_3^{\mathrm{c}}) \\
			&\quad + \sum_{n=z+1}^{z+\lfloor r\rfloor}  \sum_{j=1}^p \sum_{b \in\mathcal{B}} \sum_{j' \in \mathcal{S} \setminus \{j\}}  \mathbb{P}\bigl( \Omega_3 \cap \{N=n, \hat{j} = j, \hat{b} = b, j'\notin \hat{\mathcal{S}} \}\bigr) \\
			&\leq g(r;N) + \mathbb{P}(N\leq z) + 4rp^2\log_2(4p)e^{-a^2/3888} + 3rp^2\log_2(4p)e^{-r\beta^2/(8s\log_2(2p))} \leq \alpha,
		\end{align*}
		where the penultimate inequality follows from~\eqref{Eq:gensig_randomN_omega0},~\eqref{Eq:gensig_randomN_omega1},~\eqref{Eq:gensig_randomN_omega2},~\eqref{Eq:gensig_randomN_omega3} and~\eqref{Eq:NewEvent_Support}, and the last inequality follows from the choice of $a$ with a sufficiently large universal constant $C$ and~\eqref{Eq:detectassumption}.
	\end{proof}
	
\begin{proof}[Proof of Proposition~\ref{Prop:LowerBound}]
Fix $N\in\mathcal{T}_{r,m}$ and $\psi\in\mathcal{J}_{N}$. We denote by $\mathbb{P}_{z,\theta}^{(n_0)}$ the restriction of $\mathbb{P}_{z,\theta}$ to the filtration $\mathcal{F}_{n_0} := \sigma(X_1,\ldots,X_{n_0})$.  Denote
\[ 
\Tilde{\Theta} := \bigl\{\theta \in \mathbb{R}^p: \theta^j \in \{0,  1/(8\sqrt{r})\}, |\mathrm{supp}(\theta)| = m \bigr\},
\]
    and let $\Tilde{\Theta}_{\mathrm{pa}} \subseteq \Tilde{\Theta}$ be an $(m/4)$-packing set with respect to the symmetric difference metric defined above, i.e.~for any $\theta, \Tilde{\theta} \in \Tilde{\Theta}_{\mathrm{pa}}$, we have $d\bigl(\mathrm{supp}(\theta), \mathrm{supp}(\tilde{\theta})\bigr) > m/4$. We also have $\mathrm{KL}\bigl(\mathbb{P}_{z,\theta}^{(z+r)}, P_{z,\tilde{\theta}}^{(z+r)}\bigr) = r\|\theta - \tilde{\theta}\|_2^2/2 \leq m/64$.

Enumerate $\Tilde{\Theta}_{\mathrm{pa}} = \bigl\{\theta_{(1)}, \theta_{(2)}, \ldots, \theta_{(|\Tilde{\Theta}_{\mathrm{pa}}|)}\bigr\}$. Let $\phi^* := \sargmin_{k \in [|\Tilde{\Theta}_{\mathrm{pa}}|]} d\bigl(\psi,\mathrm{supp}(\theta_{(k)})\bigr)$. Note that $\phi^*$ is also $\mathcal{F}_N$-measurable.  Then for any $z \in \mathbb{N}_0$, 
\begin{align} \label{Eq:minimax_reduction}
\sup_{\theta\in \Theta_{r,m}} \mathbb{E}_{z,\theta} d\bigl(\psi, \mathrm{supp}(\theta)\bigr) &\geq \frac{m}{8|\Tilde{\Theta}_{\mathrm{pa}}|} \sum_{k = 1}^{|\Tilde{\Theta}_{\mathrm{pa}}|} \mathbb{P}_{z,\theta_{(k)}} \Bigl( d\bigl(\psi, \mathrm{supp}(\theta_{(k)})\bigr) > \frac{m}{8} \Bigr) \nonumber \\
& \geq \frac{m}{8|\Tilde{\Theta}_{\mathrm{pa}}|} \sum_{k=1}^{|\Tilde{\Theta}_{\mathrm{pa}}|} \mathbb{P}_{z,\theta_{(k)}} ( \phi^* \neq k ) \nonumber \\
& = \frac{m}{8}  \biggl\{ 1 - \frac{1}{|\Tilde{\Theta}_{\mathrm{pa}}|} \sum_{k=1}^{|\Tilde{\Theta}_{\mathrm{pa}}|} \mathbb{P}_{z,\theta_{(k)}} ( \phi^* = k )  \biggr\} \nonumber \\
& \geq \frac{m}{8} \biggl\{ \frac{3}{4} - \frac{1}{|\Tilde{\Theta}_{\mathrm{pa}}|} \sum_{k=1}^{|\Tilde{\Theta}_{\mathrm{pa}}|} \mathbb{P}_{z,\theta_{(k)}} ( \phi^* = k, N \leq z+r)   \biggr\} \nonumber \\
& = \frac{m}{8}  \biggl\{ \frac{3}{4} - \frac{1}{|\Tilde{\Theta}_{\mathrm{pa}}|} \sum_{k=1}^{|\Tilde{\Theta}_{\mathrm{pa}}|} \mathbb{P}_{z,\theta_{(k)}}^{(z+r)} ( \phi^* = k,N \leq z+r) \biggr\}.
\end{align}
Now set 
\[
\tilde{\phi}^* := \left\{ \begin{array}{ll} \phi^* & \mbox{if $N \leq z+r$} \\
1 & \mbox{if $N > z+r$.} \end{array} \right.
\]
Then $\tilde{\phi}^*$ is $\mathcal{F}_{z+r}$-measurable and by Fano's inequality \citep[][Lemma 3]{Yu1997Fano}, we have
\begin{align} \label{Eq:minimax_fano}
 \frac{1}{|\Tilde{\Theta}_{\mathrm{pa}}|} \sum_{k=1}^{|\Tilde{\Theta}_{\mathrm{pa}}|} \mathbb{P}_{z,\theta_{(k)}}^{(z+r)} ( \phi^* = k,N \leq z+r) &\leq \frac{1}{|\Tilde{\Theta}_{\mathrm{pa}}|} \sum_{k =1}^{|\Tilde{\Theta}_{\mathrm{pa}}|} \mathbb{P}_{z,\theta_{(k)}}^{(z+r)} (\tilde \phi^* = k) \nonumber\\
 &\leq \frac{ \log 2 + |\Tilde{\Theta}_{\mathrm{pa}}|^{-2} \sum_{j, k = 1}^{|\Tilde{\Theta}_{\mathrm{pa}}|} \mathrm{KL}\bigl(\mathbb{P}_{z,\theta_{(j)}}^{(z+r)}, P_{z,{\theta_{(k)}}}^{(z+r)}\bigr)  }{\log |\Tilde{\Theta}_{\mathrm{pa}}| } \nonumber \\
 &\leq \frac{\log 2 + m/64}{\log |\Tilde{\Theta}_{\mathrm{pa}}| }.
\end{align}
By \citet[Lemma~4.7]{massart2007concentration}, there exists an $(m/4)$-packing set with
\begin{equation} \label{Eq:packing}
\log |\Tilde{\Theta}_{\mathrm{pa}}| \geq m/8.
\end{equation}
Combining \eqref{Eq:minimax_reduction}, \eqref{Eq:minimax_fano} and \eqref{Eq:packing}, we conclude that
\begin{align*}
    \sup_{z\in\mathbb{N}_0, \theta\in \Theta_{r,m}} \mathbb{E}_{z,\theta}\, d\bigl(\psi, \mathrm{supp}(\theta)\bigr) &\geq \frac{m}{8} \biggl( \frac{3}{4} - \frac{\log 2 + m/64}{m/8} \biggr) \\
    &\geq \frac{m}{8} \biggl( \frac{3}{4} - \frac{8\log 2}{m} - \frac{1}{8} \biggr) \geq \frac{m}{32},
\end{align*}
where we have used the assumption that $m \geq 15$ in the final inequality. 
\end{proof}

	\begin{proof}[Proof of Proposition~\ref{prop:ocdsatisfyassumption}]
Following the proof of \citet[][Theorem~1(b)]{CWS2021} up to, but not including,~(16), we have for every $j \in [p]$, $b \in \mathcal{B} \cup \mathcal{B}_0$ that
\begin{align}
&\mathbb{P}\Bigl(\max_{1\leq n\leq z} (bA_{n,b}^{j,j}-b^2t_{n,b}^j/2) \geq T^{\mathrm{diag}}\Bigr) \leq 1 - \bigl(1-e^{-T^{\mathrm{diag}}}\bigr)^z \leq ze^{-T^{\mathrm{diag}}}. \label{Eq:Prop5Eq1} 
\end{align}
It follows by a union bound that
\begin{align}
\mathbb{P}\Bigl(\max_{1\leq n\leq z}S_n^{\mathrm{diag}} \geq T^{\mathrm{diag}}\Bigr) &= \mathbb{P}\Bigl(\max_{1\leq n\leq z} \max_{j \in [p]} \max_{b \in \mathcal{B} \cup \mathcal{B}_0} (bA_{n,b}^{j,j}-b^2t_{n,b}^j/2) \geq T^{\mathrm{diag}}\Bigr) \nonumber\\
&\leq zp|\mathcal{B} \cup \mathcal{B}_0| e^{-T^{\mathrm{diag}}} \leq \frac{\alpha}{4}.\label{Eq:ocdassumption_term1}
\end{align}
Next, for every $j \in [p], j' \in [p]\setminus\{j\}, b \in \mathcal{B}$ and $n \in [z]$, we have $\Lambda_{n,b}^{j',j} \mid \tau_\sim \mathcal{N}(0, \tau_{n,b}^j)$, so $(\Xi_{n,b}^{j',j})^2\mid \tau_{n,b}^j \leq_{\mathrm{st}} \chi_1^2$.  Thus, by another union bound, we have
\begin{align}
\mathbb{P}\Bigl(\max_{n \in [z]}S_n^{\mathrm{off}} \geq T^{\mathrm{off}}\Bigr) \nonumber 
&= \mathbb{P}\biggl(\max_{n \in [z]} \max_{j \in [p]} \max_{b \in \mathcal{B}} \sum_{j' \in [p]\setminus\{j\}}(\Xi_{n,b}^{j',j})^2  \mathbbm{1}_{\{|\Xi_{n,b}^{j',j}| \geq \tilde{a}\}} \geq T^{\mathrm{off}}\biggr) \nonumber \\
&\leq \mathbb{P}\Bigl(\max_{n \in [z]} \max_{j \in [p]} \max_{b \in \mathcal{B}} \max_{j' \in [p]\setminus\{j\}} |\Xi_{n,b}^{j', j}| \geq \tilde{a}\Bigr) \nonumber \\
&\leq zp^2|\mathcal{B}| e^{-\tilde{a}^2/2} \leq \frac{\alpha}{4}. \label{Eq:ocdassumption_term1+} 
\end{align}
From~\eqref{Eq:ocdassumption_term1} and~\eqref{Eq:ocdassumption_term1+}, we deduce that 
\begin{equation}
\label{Eq:ocdassumption_term1_combined}
    \mathbb{P}(N \leq z) \leq \alpha/2.
\end{equation}
On the other hand, for a sufficiently large universal constant $C' > 0$, we have
	    \begin{align*}
	    r_1 &= \frac{C's\log_2(2p)\log\{p\gamma\alpha^{-1}(\beta^{-2}\vee 1)\}}{\beta^2} + 2 \\
	    &\geq \Bigl\{\frac{24T^{\mathrm{off}}\log_2(2p)}{\vartheta^2} \vee \frac{12\tilde{a}^2 s \log_2(2p)}{\vartheta^2}\vee \frac{24T^{\mathrm{diag}}s\log_2(2p)}{\beta^2} \Bigr\} + 2. 
	    \end{align*}
	    Thus, by Proposition~\ref{prop:ocd'_tail} and by increasing the value of $C$ if necessary, we have for every $r \geq r_1$ that
	    \begin{align} 
	        g(r; N) + 4rp^2\log_2^2(4p) e^{-r\beta^2/(8s\log_2(2p))} &\leq  5rp^2\log_2^2(4p)\exp\biggl\{ - \frac{r\beta^2 }{48s\log_2(2p)}\biggr\}\nonumber\\
            &\leq  \frac{240sp^2\log_2^3(4p)}{\beta^2}\frac{r\beta^2}{48s\log_2(2p)}\exp\biggl\{ - \frac{r\beta^2 }{48s\log_2(2p)}\biggr\}\nonumber\\
            &\leq \frac{240sp^2\log_2^3(4p)}{\beta^2}\exp\biggl\{ - \frac{r\beta^2 }{96s\log_2(2p)}\biggr\} \leq \frac{\alpha}{4},\label{Eq:ocdassumption_term3} 
	    \end{align}
	    where the penultimate inequality follows from the fact that $xe^{-x} \leq e^{-x/2}$ for $x \geq 0$.  The desired result follows by combining \eqref{Eq:ocdassumption_term1_combined} and \eqref{Eq:ocdassumption_term3}.
	\end{proof}

	\subsection{Auxiliary results}
	\label{Sec:Auxiliary}
	
	\begin{prop}
		\label{Prop:1DCI}
		Let $X_1, X_2, \ldots$ be independent random variables with $X_1, \ldots, X_z \stackrel{\mathrm{iid}}{\sim} \mathcal{N}(0, 1)$ and $X_{z+1}, X_{z+2}, \ldots \stackrel{\mathrm{iid}}{\sim} \mathcal{N}(\theta, 1)$. Assume that $0 < b \leq \theta$ and let $t_{n,b}$ be defined as in~\eqref{Eq:OneDimTail} for $n\in\mathbb{N}$. Then for any $\alpha \in (0, 1)$, and any stopping time $N$ satisfying $\mathbb{P}(N < z) \leq \alpha/2$, we have that the confidence interval
		\[  		
		\mathcal{C}_0 :=  \biggl[ N - t_{N,b}  -\frac{4\{\Phi^{-1}(1-\alpha/4)\}^2}{b^2},\, N\biggr]
		\]
		satisfies $\mathbb{P}(z \in \mathcal{C}_0) \geq 1-\alpha$.
	\end{prop}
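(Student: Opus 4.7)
The plan is to partition $\{z \notin \mathcal{C}_0\}$ as $\{N < z\} \cup \{N - t_{N,b} - 4c^2/b^2 > z\}$, where $c := \Phi^{-1}(1-\alpha/4)$, and to bound each piece by $\alpha/2$.  The first piece is controlled directly by the assumption on $N$, so the entire task reduces to bounding the second.

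For the second piece, observe that on that event $N > z$, and the defining optimality of $t_{N,b}$ (over $\{0,1,\ldots,N\}$) forces $\sum_{i=N-t_{N,b}+1}^{N}(X_i - b/2) \geq \sum_{i=z+1}^N(X_i - b/2)$.  Writing $\tau := N-z$, $K := \tau - t_{N,b}$, $Y_i := X_{z+i} - b/2 \stackrel{\mathrm{iid}}{\sim} \mathcal{N}(\mu, 1)$ with $\mu := \theta - b/2 \geq b/2$, and $T_K := \sum_{i=1}^K Y_i$, this rearranges to $T_K \leq 0$ with $K > K_0 := 4c^2/b^2$, yielding the inclusion
\[
\Bigl\{N-t_{N,b} - \tfrac{4c^2}{b^2} > z\Bigr\} \subseteq \bigl\{\exists K \in \mathbb{N},\, K > K_0: T_K \leq 0\bigr\}.
\]
The right-hand side depends only on the post-change observations, which neutralises the stopping-time nature of $N$ in one clean step.

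The crux of the argument is then to bound $\mathbb{P}(\exists K > K_0: T_K \leq 0)$ without paying a loose union-bound cost over the unbounded range of $K$; this is where I expect the main difficulty to lie.  I would let $K_1$ be the smallest integer strictly greater than $K_0$ and condition on $T_{K_1} \sim \mathcal{N}(K_1 \mu, K_1)$.  Given $T_{K_1} = s$, the event is trivially realised (by taking $K = K_1$) when $s \leq 0$, and otherwise reduces to $\{\sup_{m \geq 1} U_m \geq s\}$, where $U_m := T_{K_1} - T_{K_1+m}$ is a random walk with $\mathcal{N}(-\mu, 1)$ increments independent of $T_{K_1}$.  Doob's maximal inequality applied to the nonnegative martingale $e^{2\mu U_m}$ (whose expectation is $1$ since $\mathbb{E}[e^{-2\mu Y_1}] = 1$) then delivers the conditional bound $\min\bigl(1, e^{-2\mu T_{K_1}}\bigr)$.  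Finally, a direct Gaussian computation (equivalently, a Cameron--Martin change of measure) gives $\mathbb{E}\bigl[\min(1, e^{-2\mu T_{K_1}})\bigr] = 2\bar{\Phi}(\sqrt{K_1}\mu) \leq 2\bar{\Phi}(\sqrt{K_0}\mu) \leq 2\bar{\Phi}(c) = \alpha/2$, where the last step uses $\sqrt{K_0}\mu \geq (2c/b)(b/2) = c$.
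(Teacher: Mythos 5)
Your proposal is correct, and its overall route coincides with the paper's: the same decomposition of $\{z \notin \mathcal{C}_0\}$ into $\{N < z\}$ and the over-shoot event, the same reduction (via the optimality of $t_{N,b}$) to the event that the post-change random walk $T_K = \sum_{i=1}^K (X_{z+i}-b/2)$, which has drift $\mu = \theta - b/2 \geq b/2$, drops to $\leq 0$ at some time $K > 4c^2/b^2$, and the same final bound $2\bar{\Phi}(\sqrt{K_0}\,\mu) \leq \alpha/2$. The one genuine difference is in how the conditional crossing probability $e^{-2\mu s}$ is obtained: the paper (in its Lemma~\ref{Lemma:1DLemma}) embeds the walk in a drifted Brownian motion and invokes Siegmund's boundary-crossing formula, whereas you apply Doob's maximal inequality to the nonnegative exponential martingale $e^{2\mu U_m}$. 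Your version is more elementary and self-contained (no continuous-time embedding, no external citation), and it yields the identical closed form $2\bar{\Phi}(\sqrt{K_1}\mu)$ after the same Gaussian tilting computation; the paper's version buys nothing extra here beyond reuse of a standard reference. One cosmetic point: the statement's event is $\{N - t_{N,b} - 4c^2/b^2 > z\}$ versus the paper's $\geq z$ in \eqref{Eq:1DMain}; your strict-inequality version is exactly what is needed for the left endpoint of $\mathcal{C}_0$ to miss $z$, so this is fine.
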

	\begin{remark}
		We could also replace $4\{\Phi^{-1}(1-\alpha/4)\}^2/b^2$ by $8\log(2/\alpha)/b^2$ in the confidence interval construction, if we apply the final bound from Lemma~\ref{Lemma:1DLemma} in~\eqref{Eq:1DMain} of the proof below.
	\end{remark}
	\begin{proof}
		For $n\in\mathbb{N}$, define $R_{n,b}:=\max\{R_{n-1,b}+b(X_n-b/2), 0\}$, with $R_{0,b}=0$. By \citet[Lemma~2 in the supplement]{CWS2021}, we have $t_{N,b} = \min\{i: 0\leq i\leq N, R_{N-i,b}=0\} = \sargmax_{0\leq h\leq N} \sum_{i=N-h+1}^N b(X_i -b/2)$. Let $U_{n,b} := \sum_{i=z+1}^{z+n} (X_i-b/2)$ for $n \in \mathbb{N}$, with $U_{0, b} :=0$. Then $R_{n+z,b}\geq bU_{n,b}$ for all $n \in \mathbb{N}$.  Hence, for $y \in [0,\infty)$, we have
		\begin{equation} \label{Eq:1DMain}
			\mathbb{P}(N-t_{N,b}-y \geq z) \leq \mathbb{P}\biggl( \inf_{n\in\mathbb{N}_0:n\geq z+ y} R_{n,b} = 0\biggr)\leq \mathbb{P}\biggl( \inf_{n\in\mathbb{N}_0:n\geq  y} U_{n,b} \leq  0\biggr)  \leq 2 \bar{\Phi}\Bigl(\sqrt{y}(\theta-b/2)\Bigr),
		\end{equation}
		where the last inequality follows from Lemma~\ref{Lemma:1DLemma}. Thus, if we choose $y = 4\{\Phi^{-1}(1-\alpha/4)\}^2/b^2$, then we are guaranteed that $\mathbb{P}(N-t_{N,b}-y > z) \leq \alpha/2$. Combining this with the assumption that $\mathbb{P}(N < z) \leq \alpha/2$, the desired result follows.
	\end{proof}

	\begin{lemma}\label{Lemma:1DLemma}
		Let $Y_1, Y_2, \ldots \stackrel{\mathrm{iid}}{\sim} \mathcal{N}(\mu, 1)$. Define $U_n := \sum_{i=1}^n Y_i$ for $n \in \mathbb{N}_0$, and let $\xi:=\sargmin_{n \in \mathbb{N}_0} \mu U_n$. Then, for $y \in [0, \infty)$,  we have
		\[ 
		\mathbb{P}(\xi \geq y) \leq 	\mathbb{P}\Bigl(  \inf_{n\in\mathbb{N}_0:n\geq y} \mu U_n \leq 0 \Bigr)\leq 2 \bar{\Phi}\bigl(\sqrt{y}|\mu|\bigr) \leq e^{-y\mu^2/2} . 
		\]
	\end{lemma}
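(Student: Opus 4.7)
The plan is to prove the three inequalities in turn, with the first and third being essentially immediate and the middle one requiring the main work. For the first, since $\xi = \sargmin_{n \in \mathbb{N}_0} \mu U_n$ and $\mu U_0 = 0$, we have $\mu U_\xi \leq 0$; hence on the event $\{\xi \geq y\}$ it follows that $\inf_{n \in \mathbb{N}_0 : n \geq y} \mu U_n \leq \mu U_\xi \leq 0$, yielding the bound.

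For the second inequality, the cases $\mu = 0$ (both sides equal $1$) and $\mu < 0$ (apply the result with $\mu \mapsto -\mu$ and $Y_i \mapsto -Y_i$) are immediate, so I would take $\mu > 0$ and write $k := \lceil y \rceil$. The key is to condition on $U_k$ and then apply an exponential martingale to the restarted walk. Because $\mathbb{E}[e^{-2\mu Y_1}] = 1$, the process $M_m := e^{-2\mu(U_{k+m} - U_k)}$ is a non-negative martingale with $M_0 = 1$, so Doob's maximal inequality gives $\mathbb{P}\bigl(\inf_{m \geq 0}(U_{k+m} - U_k) \leq -u \bigm| \mathcal{F}_k\bigr) \leq e^{-2\mu u}$ for every $u > 0$. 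Splitting on the sign of $U_k$ and integrating,
\begin{align*}
\mathbb{P}\Bigl(\inf_{n \in \mathbb{N}_0 : n \geq y} U_n \leq 0\Bigr) \leq \mathbb{P}(U_k \leq 0) + \mathbb{E}\bigl[e^{-2\mu U_k}\mathbbm{1}_{\{U_k > 0\}}\bigr].
\end{align*}
Since $U_k \sim \mathcal{N}(k\mu, k)$, a completion-of-the-square computation converts $e^{-2\mu u}$ times the density of $U_k$ into the density of $\mathcal{N}(-k\mu, k)$ at $u$, so the second summand equals $\bar\Phi(\sqrt{k}\mu)$; the first summand equals $\bar\Phi(\sqrt{k}\mu)$ as well. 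Using $k \geq y$ and the monotonicity of $\bar\Phi$ then delivers $2\bar\Phi(\sqrt{k}\mu) \leq 2\bar\Phi(\sqrt{y}|\mu|)$.

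The third inequality $2\bar\Phi(x) \leq e^{-x^2/2}$ for $x \geq 0$ is a standard Mills-ratio-type bound, provable via the substitution $t = \sqrt{x^2 + 2s}$ in $\int_x^\infty \phi(t)\,dt$ together with $\sqrt{x^2 + 2s} \geq \sqrt{2s}$. The main obstacle is the middle step: one must control the infimum over the entire infinite tail $\{n \geq k\}$ of the walk while retaining the sharp constant $2$, and it is the combination of Doob's maximal inequality (to handle the full tail beyond time $k$) with the Girsanov-style cancellation (to make both contributions come out exactly as $\bar\Phi(\sqrt{k}\mu)$) that makes this work cleanly.
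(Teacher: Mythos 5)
Your proof is correct, and its skeleton coincides with the paper's: both arguments dispose of the first inequality by noting $\mu U_\xi \leq \mu U_0 = 0$, reduce to $\mu > 0$, decompose at time $k = \lceil y \rceil$ according to the sign of $U_k$, establish the conditional bound $\mathbb{P}\bigl(\inf_{n \geq k} U_n \leq 0 \mid U_k = u\bigr) \leq e^{-2\mu u}$ for $u > 0$, and then complete the square to show that both resulting terms equal $\bar{\Phi}(\sqrt{k}\mu)$. The one genuine divergence is how that conditional crossing bound is obtained. The paper embeds the Gaussian random walk into a drifted Brownian motion (the infimum over integer times dominates the infimum over all real times) and quotes Siegmund's boundary-crossing formula $\mathbb{P}\{\inf_{t \geq y}(B_t + t\mu) \leq 0 \mid B_y = u\} \leq e^{-2u\mu}$; you instead work entirely in discrete time, observing that $M_m = e^{-2\mu(U_{k+m}-U_k)}$ is a non-negative martingale with unit mean (since $\mathbb{E}\,e^{-2\mu Y_1} = e^{-2\mu^2 + 2\mu^2} = 1$) and applying the maximal (Ville) inequality over the infinite horizon. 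Both yield exactly $e^{-2\mu u}$, so the constant $2$ in the final bound is preserved either way; your route is self-contained and avoids importing a continuous-time result, while the paper's is a one-line appeal to a standard reference. The remaining steps — $\mathbb{P}(U_k \leq 0) = \bar\Phi(\sqrt{k}\mu)$, the Gaussian integral identity $\mathbb{E}\bigl[e^{-2\mu U_k}\mathbbm{1}_{\{U_k>0\}}\bigr] = \bar\Phi(\sqrt{k}\mu)$, monotonicity in $k \geq y$, and the tail bound $2\bar\Phi(x) \leq e^{-x^2/2}$ — match the paper exactly and are all sound.
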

	\begin{proof}
		The first inequality holds since $\mu U_\xi\leq \mu U_0 = 0$.  For the second and third inequalities, we may assume without loss of generality that $\mu > 0$, since the result is clear when $\mu = 0$, and if $\mu < 0$ then the result will follow from the corresponding result with $\mu > 0$ by setting $Y_i' := -Y_i$ for $i \in \mathbb{N}$. Note that $(U_{n}-n\mu)_{n\in\mathbb{N}_0}$ is a standard Gaussian random walk starting at 0. Let $(B_t)_{t\in [0,\infty)}$ denote a standard Brownian motion starting at 0. Then, we have for any $y\in \mathbb{N}_0$ and $u > 0$ that
		\begin{equation} \label{Eq:Siegmund}
			\mathbb{P}\biggl(  \inf_{n\in\mathbb{N}_0:n\geq y} U_n \leq 0 \biggm | U_y = u \biggr) \leq\mathbb{P}\biggl\{  \inf_{t\in [y,\infty)} (B_t + t\mu) \leq 0 \biggm | B_y = u \biggr\} \leq e^{-2u\mu},
		\end{equation}
		where the final inequality follows from \citet[Proposition 2.4 and Equation (2.5)]{Siegmund1986}. Thus,
		for $y \in [0, \infty)$, we have
		\begin{align*}
			\mathbb{P}\biggl( \inf_{n\in\mathbb{N}_0:n\geq  y} U_n \leq 0  \biggr) &= \mathbb{P}\bigl( U_{\lceil y \rceil} \leq 0  \bigr) +  \mathbb{E}\biggl\{\mathbb{P}\biggl(  \inf_{n\in\mathbb{N}_0:n\geq \lceil y\rceil } U_n \leq 0 \biggm | U_{\lceil y\rceil } \biggr) \mathbbm{1}_{\{U_{\lceil y \rceil} > 0\}}\Bigr\}\\
			&\leq \bar{\Phi}\Bigl(\sqrt{\lceil y \rceil}\mu\Bigr)  + \int_{0}^{\infty}  \frac{1}{\sqrt{2\pi \lceil y \rceil}} e^{-\frac{(u-\lceil y \rceil\mu)^2}{2\lceil y \rceil}}e^{-2u\mu}\, du  \nonumber\\
			&= 2\bar{\Phi}\Bigl(\sqrt{\lceil y \rceil}\mu\Bigr) \leq 2 \bar{\Phi}\bigl(\sqrt{y}\mu\bigr) \leq e^{-y\mu^2/2},
		\end{align*}
		where the first inequality follows from \eqref{Eq:Siegmund} and the fact that $U_{\lceil y \rceil} \sim \mathcal{N}(\lceil y \rceil \mu, \lceil y\rceil)$ and the last inequality follows from the standard normal distribution tail bound $\bar{\Phi}(x) \leq e^{-x^2/2}/2$ for $x \geq 0$.
	\end{proof}
	
	In Proposition~\ref{prop:ocd'_tail}, we assume the Gaussian data generating mechanism given at the beginning of Section~\ref{Sec:Theory}, and show that for the \texttt{ocd}$'$ base procedure, the quantity $g(r;N)$ from~\eqref{Eq:grn} has essentially the same form as the final term in~\eqref{Eq:detectassumption}.
	\begin{prop} \label{prop:ocd'_tail}
		Assume that $\theta$ has an effective sparsity of $s := s(\theta) \geq 2$. Then, the output $N$ from \textup{\texttt{ocd}$'$}, with inputs $(X_t)_{t\in \mathbb{N}}$, $0 < \beta \leq \vartheta$, $\tilde{a} > 0$, $T^{\mathrm{diag}}>0$ and $T^{\mathrm{off}} >0$, satisfies
		\[
		\mathbb{P}_{z, \theta,\Sigma} \bigl(N > z +  r \bigl) \leq p \exp \biggl \{   -\frac{ \beta^2(r-1)}{24s\log_2(2p)}\biggr\},
		\]
		for all $r \geq \Bigl\{\frac{24T^{\mathrm{off}}\log_2(2p)}{\vartheta^2} \vee \frac{12\tilde{a}^2 s \log_2(2p)}{\vartheta^2}\vee \frac{24T^{\mathrm{diag}}s\log_2(2p)}{\beta^2} \Bigr\} + 2$. 
	\end{prop}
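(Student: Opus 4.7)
The plan is to show that by time $z+r$, one of the two stopping criteria of \texttt{ocd}$'$ has very likely fired, and bound the failure probabilities of the diagonal and off-diagonal criteria separately. Since $\{N > z+r\} \subseteq \{S^{\mathrm{diag}}_n < T^{\mathrm{diag}} \text{ for all } n \leq z+r\} \cap \{S^{\mathrm{off}}_n < T^{\mathrm{off}} \text{ for all } n \leq z+r\}$, a union bound over the two failure events delivers the two-term estimate. The first step is to fix a convenient anchor: by the effective-sparsity hypothesis, select $j^\star \in \mathcal{S}(\theta)$ with $|\theta^{j^\star}| \geq \vartheta/\sqrt{s\log_2(2p)}$, and a signed scale $b^\star \in \mathcal{B}$ with the same sign as $\theta^{j^\star}$ and with $|b^\star|$ the largest element of $\mathcal{B}$ not exceeding $|\theta^{j^\star}|$; the dyadic grid ensures $|\theta^{j^\star}|/\sqrt{2} \leq |b^\star| \leq |\theta^{j^\star}|$, so in particular $(b^\star)^2 \gtrsim \beta^2/\{s\log_2(2p)\}$.

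For the diagonal criterion I would apply the CUSUM-path bound $b^\star A_{n,b^\star}^{j^\star,j^\star} - (b^\star)^2 t_{n,b^\star}^{j^\star}/2 \geq \sum_{i=n-k+1}^{n}\bigl(b^\star X_i^{j^\star} - (b^\star)^2/2\bigr)$ for any $k \leq n$, valid because the reset dynamics can only make the CUSUM larger than any single segment sum. Specialising to $n = z+\lfloor r\rfloor$ and $k = \lfloor r\rfloor$, the right-hand side is Gaussian with mean $\geq r(b^\star)^2/2$ (using $b^\star\theta^{j^\star} \geq (b^\star)^2$) and variance $r(b^\star)^2$. The lower bound $r \geq 24T^{\mathrm{diag}}s\log_2(2p)/\beta^2 + 2$ makes the mean dominate $T^{\mathrm{diag}}$ by a constant factor, and a routine one-sided Gaussian tail bound (together with the relation $(b^\star)^2 \gtrsim \beta^2/\{s\log_2(2p)\}$) yields the first term $\exp\{-\beta^2(r-1)/(48\log_2(2p))\}$.

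For the off-diagonal criterion I would use the \texttt{ocd}$'$-specific invariant $\tau_{n,b}^j \in [t_{n,b}^j/2, t_{n,b}^j]$ from Lemma~9 of the supplement of \citet{CWS2021}, which ensures that once $t_{n,b^\star}^{j^\star}$ is large enough past the changepoint, $\tau_{n,b^\star}^{j^\star}$ both consists purely of post-change observations and is of order $r$. Conditioning on $\tau_{n,b^\star}^{j^\star} = \tau$, the quantity $E_{n,b^\star}^{j',j^\star}$ is $\mathcal{N}(\theta^{j'}\sqrt{\tau},\,1)$ for each $j' \in [p]\setminus\{j^\star\}$. The hypothesis $r \geq 12(a^2 \vee 8\log 2)s\log_2(2p)/\vartheta^2 + 2$ guarantees that $|\theta^{j'}|\sqrt{\tau} \geq 2a$ for every $j' \in \mathcal{S}(\theta)$, so the indicators $\mathbbm{1}_{\{|E_{n,b^\star}^{j',j^\star}| \geq a\}}$ activate with high probability, and a Chernoff bound on the sum of non-central squared Gaussians restricted to $\mathcal{S}(\theta)\setminus\{j^\star\}$ implies $Q_{b^\star}^{j^\star} \gtrsim \tau \sum_{j' \in \mathcal{S}(\theta)\setminus\{j^\star\}} (\theta^{j'})^2 \gtrsim \tau\vartheta^2/\log_2(2p)$, which exceeds $T^{\mathrm{off}}$ in view of $r \geq 24T^{\mathrm{off}}\log_2(2p)/\vartheta^2 + 2$. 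A union bound over the $p$ coordinates then produces the second term $p\exp\{-\vartheta^2(r-1)/(128\log_2(2p))\}$.

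I expect the main obstacle to be the off-diagonal step. Two features compound here: (i) $\tau_{n,b^\star}^{j^\star}$ is itself a random variable adapted to $(X_i^{j^\star})_{i \leq n}$, so the Gaussian conditioning needs a careful tower argument; and (ii) the indicator thresholding at level $a$ means we lose contributions from $j'$ with $|\theta^{j'}|\sqrt{\tau}$ only marginally above $a$, forcing the middle term $12(a^2 \vee 8\log 2)s\log_2(2p)/\vartheta^2$ in the hypothesis on $r$ to give a safety factor. Aligning the resulting concentration estimate for the squared-Gaussian sum with the clean exponent $\vartheta^2(r-1)/(128\log_2(2p))$, after the coordinate-wise union bound, will require tracking the constants carefully; this is where most of the bookkeeping lies.
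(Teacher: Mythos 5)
There is a genuine gap, and it lies in your treatment of the first term. In the paper's proof the two terms of the bound do \emph{not} correspond to the diagonal and off-diagonal stopping criteria (note also that $\{N>z+r\}$ is contained in the \emph{intersection} of the two failure events, so a ``union bound over the two failure events'' is not the right logical structure). The first term $\exp\{-\beta^2(r-1)/(48\log_2(2p))\}$ is instead the probability that \emph{every} coordinate in a sign-aligned half $\mathcal{J}$ of the effective support (which has $|\mathcal{J}|\geq s/2$) has a short tail length $t^j_{z+\lfloor r\rfloor,b_*}$ at the single common scale $b_*=\pm\beta/\sqrt{s\log_2(2p)}\in\mathcal{B}$. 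Because the coordinates are independent, this probability factorises as a product of at least $s/2$ terms, each bounded by $\exp(-b_*^2\lfloor r\rfloor/24)$ via a boundary-crossing estimate, and the factor $|\mathcal{J}|\gtrsim s$ in the exponent exactly cancels the $1/s$ in $b_*^2$, yielding $\exp(-\beta^2\lfloor r\rfloor/(48\log_2(2p)))$. Your single-anchor argument cannot reproduce this: with one coordinate $j^\star$ and $(b^\star)^2\asymp\vartheta^2/(s\log_2(2p))$, a Gaussian tail bound on the segment sum gives at best $\exp\{-c(r-1)\vartheta^2/(s\log_2(2p))\}$, which is weaker than the claimed first term by a factor of $s$ in the exponent whenever $\vartheta^2 < s\beta^2$ (e.g.\ $\vartheta=\beta$, $s\geq 2$). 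The same problem re-enters your off-diagonal step: to know that $\tau^{j^\star}_{n,b^\star}$ is of order $r$ and post-change you must intersect with the event that the tail at $j^\star$ is long, and the complementary probability for a single anchor again carries the unwanted $1/s$.

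The diagonal statistic plays a different and much smaller role in the paper: Proposition~8 of \citet{CWS2021} is invoked on the event $\{N>z\}$ to bound the \emph{pre-change} tail lengths $t^j_{z,b_*}$ for $j\in\mathcal{J}$ by $8T^{\mathrm{diag}}s\log_2(2p)/\beta^2$, which combined with $\tau\in[t/2,3t/4]$ guarantees that the reduced tail consists entirely of post-change observations. Your off-diagonal mechanics (conditioning on $\tau$, activating the indicators for effective-support coordinates, and summing their squared contributions) are otherwise in the right spirit; the paper's version is slightly more elementary than a Chernoff bound on non-central chi-squared sums --- it shows each coordinate of $\mathcal{L}^{j}$ exceeds $\tilde a_r/2$ except with probability $q_r$, applies the Chernoff--Hoeffding binomial bound to the number of failures, and observes that if fewer than half fail the statistic exceeds $T^{\mathrm{off}}$ deterministically. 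To repair your proof you would need to replace the single anchor by the full family $\mathcal{J}$ and exploit independence across its coordinates; as written, the first term of the proposition is not attainable by your route.
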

	
	\begin{proof}
		For $\theta\in\mathbb{R}^p$ with effective sparsity $s(\theta)$, there is at most one coordinate in $\theta$ of magnitude larger than $\vartheta/\sqrt{2}$, so there exists $b_* \in \bigl\{ \beta/\sqrt{s(\theta)\log_2(2p)}, -\beta/\sqrt{s(\theta)\log_2(2p)}\bigr\} \subseteq \mathcal{B}$ such that 
		\begin{equation*}
			\mathcal{J}:=\Bigl\{j \in [p]: \theta^j  / b_* \ge 1 \text{ and } |\theta^j|\le \vartheta/\sqrt{2} \Bigr\} 
		\end{equation*}
		has cardinality at least $s(\theta)/2$.  Note that the condition $\theta^j/b_*\geq 1$ above ensures that $\{\theta^j:j\in\mathcal{J}\}$ all have the same sign as $b_*$. By \citet[Proposition~8]{CWS2021}, we have on the event $\{N > z\}$ that
		\begin{equation}
			\label{Eq:q}
			q(X_1,\ldots,X_z,\theta) := 
			\max\bigl\{t_{z,b_*}^j: j \in \mathcal{J}\bigr\} \leq \frac{8T^{\mathrm{diag}}s\log_2(2p)}{\beta^2}.
		\end{equation}
		We now fix 
		\begin{equation} \label{Eq:AssumptionOnr}
			r \geq \biggl\{\frac{24T^{\mathrm{off}}\log_2(2p)}{\vartheta^2} \vee \frac{12\tilde{a}^2 s \log_2(2p)}{\vartheta^2}\vee \frac{24T^{\mathrm{diag}}s\log_2(2p)}{\beta^2} \biggr\} + 2 =: r_0.
		\end{equation} 
		For $j \in \mathcal{J}$, define the event
		\[
		\Omega_r^j := \bigl\{t_{z + \lfloor r \rfloor, b_*}^{j} > 2\lfloor r \rfloor/3\bigr\}.
		\]
		By applying \citet[Lemma~2]{CWS2021} to $t_{z + \lfloor r \rfloor, b_*}^j$, we have for $j \in \mathcal{J}$ that
		\begin{align*}
			t_{z + \lfloor r \rfloor, b_*}^j &= \sargmax_{0 \leq h \leq z + \lfloor r \rfloor} \sum_{i=z + \lfloor r \rfloor -h+1}^{z + \lfloor r \rfloor} b_*(X_i^j - b_*/2) \geq \sargmax_{0 \leq h \leq \lfloor r \rfloor} \sum_{i=z + \lfloor r \rfloor -h+1}^{z + \lfloor r \rfloor} b_*(X_i^j - b_*/2) \\
			&= \sargmax_{0 \leq h \leq \lfloor r \rfloor} \sum_{i=z+1}^{z + \lfloor r \rfloor-h} -b_*(X_i^j - b_*/2) = \lfloor r \rfloor - \largmax_{0 \leq h \leq \lfloor r \rfloor} \sum_{i=z+1}^{z + h} -b_*(X_i^j - b_*/2).
		\end{align*}
		Recall that $X_{z+1}^j, X_{z+2}^j, \ldots \stackrel{\mathrm{iid}}{\sim} \mathcal{N}(\theta^j, 1)$. 
		Hence, by applying Lemma~\ref{Lemma:1DLemma} with $\mu = |b_*|/2$ and $y = \lfloor r \rfloor / 3$, we have for each $j \in \mathcal{J}$ that
		\begin{align}
			\mathbb{P}\bigl\{(\Omega_{r}^j)^{\mathrm{c}}\bigr\} &=  \mathbb{P}\biggl( t_{z + \lfloor r \rfloor, b_*}^j \leq \frac{2\lfloor r \rfloor}{3} \biggr) \nonumber\\
			&\leq   \mathbb{P}\biggl( \largmax_{0 \leq h \leq \lfloor r \rfloor} \sum_{i=z+1}^{z + h} -b_*(X_i^j - b_*/2) \geq \frac{\lfloor r \rfloor}{3} \biggr) \nonumber \\
			&\leq   \mathbb{P}\biggl( \sup_{h \geq \lfloor r \rfloor/3} \sum_{i=z+1}^{z + h} - \mathrm{sgn}(b_*)(X_i^j - b_*/2) \geq 0 \biggr) \leq \exp\bigl(-b_*^2\lfloor r \rfloor/24  \bigr). \label{Eq:SmallProbShortTail} 
		\end{align}   
		We now work on the event $\Omega_{r}^{j}$, for some fixed $j \in \mathcal{J}$. We note that~\eqref{Eq:AssumptionOnr} guarantees that $ r  \ge 2$, and thus $t_{z + \lfloor r \rfloor, b_*}^{j} \geq \bigl\lceil2\lfloor r\rfloor/3\bigr\rceil \ge 2$. Then, by~\eqref{Eq:q} and~\eqref{Eq:AssumptionOnr}, we have $r_0>3t_{z, b_*}^j$, and hence by \citet[Lemma~9]{CWS2021},
		\[
		\frac{\lfloor r \rfloor}{3} < \frac{t_{z + \lfloor r \rfloor, b_*}^{j}}2  \leq \tau_{z + \lfloor r \rfloor, b_*}^{j} \le \frac{3t_{z + \lfloor r \rfloor, b_*}^{j}}{4} \le \frac{3\bigl(t_{z, b_*}^{j} +   r \bigr)}{4} <   r.
		\]
		We conclude that 
		\begin{equation}  \label{Eq:AuxTailBounds}
			2/3 \le \lfloor r \rfloor/3 < \tau_{z + \lfloor r \rfloor, b_*}^{j} \leq \lfloor r \rfloor.
		\end{equation}
		Recall that $\Lambda_{z+\lfloor r \rfloor, b_*}^{\bdot, j} \in \mathbb{R}^p$ records the tail CUSUM statistics with tail length $\tau_{z + \lfloor r \rfloor, b_*}^{j}$. We observe by~\eqref{Eq:AuxTailBounds} that only post-change observations are included in $\Lambda_{z+\lfloor r \rfloor, b_*}^{\bdot, j}$. Hence we have that
		\begin{equation} 
			\label{Eq:Lambdarj}
			\Lambda_{z + \lfloor r \rfloor, b_*}^{j', j} \bigm | \tau_{z + \lfloor r \rfloor, b_*}^{j} \stackrel{\mathrm{ind}}{\sim} \mathcal{N}\bigl(\theta^{j'} \tau_{z + \lfloor r \rfloor, b_*}^{j},  \tau_{z + \lfloor r \rfloor, b_*}^{j}\bigr)
		\end{equation}
		for $j' \in [p]\setminus\{j\}$. 
		By the definition of the effective sparsity of $\theta$, the set
		\begin{equation*}
			{\mathcal{L}}^{j}:=\biggl\{  j' \in [p] \setminus [j]: |\theta^{j'}| \geq \frac{\vartheta}{\sqrt{s\log_2(2p)}}\biggr \}
		\end{equation*}
		has cardinality at least $s-1$.  Hence, by~\eqref{Eq:AuxTailBounds}, for all $j' \in	{\mathcal{L}}^{j} $,
		\[
		|\theta^{j'}| \sqrt{\tau_{z + \lfloor r \rfloor, b_*}^{j}} > \sqrt{\frac{\vartheta^2\lfloor r \rfloor}{3s\log_2(2p)}}=:\tilde{a}_r.
		\]
		We then observe, from~\eqref{Eq:AssumptionOnr}, that 
		\begin{equation} \label{Eq:tildea_r}
			\tilde{a}_r > 2\tilde{a}.
		\end{equation} 
		Hence, from~\eqref{Eq:Lambdarj}, we have for all $j' \in {\mathcal{L}}^{j}$ that
		\begin{align}
			&\mathbb{P}\biggl(\Omega_{r}^{j}  \cap \biggl\{|\Lambda^{j', j}_{z+\lfloor r \rfloor, b_*}| < \frac{1}{2} \tilde{a}_r \sqrt{\tau^{j}_{z+\lfloor r \rfloor, b_*}}\biggr\}  \biggm | \tau_{z + \lfloor r \rfloor, b_*}^{j} \biggr) \leq \frac{1}{2}e^{-\tilde{a}_r^2/8} . \label{Eq:Orange6}
		\end{align}
		We denote
		\[
		U^j := \bigcap_{j'\in {\mathcal{L}}^{j}} \biggl\{|\Lambda^{j', j}_{z+\lfloor r \rfloor, b_*}| \geq \frac{1}{2} \tilde{a}_r \sqrt{\tau^{j}_{z+\lfloor r \rfloor, b_*}}\biggr\}.
		\]
		Thus, by a union bound, we have
		\begin{align}
		    &\mathbb{P}\bigl(\Omega_r^j \cap (U^j)^{\mathrm{c}}\bigr) \leq \frac{p}{2}e^{-\tilde{a}_r^2/8}. \label{Eq:Allcoordinates}
		\end{align}
		Moreover, on the event $\Omega_r^j \cap U^j$, we have
		\begin{align}  \label{Eq:AboveThreshold}
                  \sum_{j' \in [p]: j' \neq j}\frac{\bigl(\Lambda^{j', j}_{z+\lfloor r \rfloor, b_*}\bigr)^2}{\tau^j_{z+\lfloor r \rfloor, b_*} \vee 1} \mathbbm{1}_{\bigl\{|\Lambda^{j', j}_{z+\lfloor r \rfloor, b_*}| \geq \tilde{a}\sqrt{\tau^j_{z+\lfloor r \rfloor, b_*}}\bigr\}} 
                  &\geq  \sum_{j' \in \mathcal{L}^j}\frac{\bigl(\Lambda^{j', j}_{z+\lfloor r \rfloor, b_*}\bigr)^2}{\tau^j_{z+\lfloor r \rfloor, b_*}}  \nonumber \\
                  &\geq \frac{\tilde{a}_r^2}{4} |{\mathcal{L}}^{j}| \geq \frac{\vartheta^2\lfloor r \rfloor}{24\log_2(2p)} \geq T^{\mathrm{off}},
		\end{align}
		where the penultimate inequality uses the fact that $|{\mathcal{L}}^{j}| \geq s-1$ and the last inequality follows from~\eqref{Eq:AssumptionOnr}. We now denote
		\[
		\tilde{E}_r^j := \Biggl\{    \sum_{j' \in [p]: j' \neq j}\frac{\bigl(\Lambda^{j', j}_{z+\lfloor r \rfloor, b_*}\bigr)^2}{\tau^j_{z+\lfloor r \rfloor, b_*} \vee 1} \mathbbm{1}_{\bigl\{|\Lambda^{j', j}_{z+\lfloor r \rfloor, b_*}| \geq \tilde{a}\sqrt{\tau^j_{z+\lfloor r \rfloor, b_*}}\bigr\}}   <  T^{\mathrm{off}}   \Biggr \}.
		\]
		By~\eqref{Eq:AboveThreshold}, we have $\Omega_r^j\cap \tilde E_r^j  \subseteq \Omega_r^j \cap (U^j)^{\mathrm{c}}$. Thus, by~\eqref{Eq:SmallProbShortTail} and \eqref{Eq:Allcoordinates} we have that
		\begin{align*}
			\mathbb{P}\bigl(N > z +  r \bigr) &\leq \mathbb{P}\bigl(N > z + \lfloor r \rfloor \bigr) \leq  \mathbb{P}\biggl( \bigcap_{j \in \mathcal{J}}\tilde E_r^j\biggr) \leq \min_{j\in\mathcal{J}} \mathbb{P}(\tilde E_r^j)  \\
			&\leq\min_{j\in\mathcal{J}}\bigl\{ \mathbb{P}\bigl((U^j)^{\mathrm{c}} \cap \Omega_r^j\bigr) + \mathbb{P}\bigl((\Omega_r^j)^{\mathrm{c}}\bigr)\bigr\}  \\
			&\leq  \frac{p}{2}\exp\biggl\{ - \frac{\vartheta^2(r-1) }{24s\log_2(2p)}\biggr\}  + \exp \biggl \{   -\frac{ \beta^2(r-1)}{24s\log_2(2p)} \biggr\} \\
			&\leq  p \exp \biggl \{   -\frac{ \beta^2(r-1)}{24s\log_2(2p)} \biggr\} . 
		\end{align*}
		as desired.  
	\end{proof}
	
	
	\begin{lemma} \label{Lemma:normalcomparison}
		Let $U \sim \mathcal{N}(0, \phi_1)$, $V \sim \mathcal{N}(0, \phi_2-\phi_1)$, $Y \sim \mathcal{N}(\alpha \phi_3, \phi_3)$ and $Z \sim \mathcal{N}(\alpha \phi_4, \phi_4)$ be independent random variables.
		\begin{enumerate}[label=(\alph*)]
			\item Assume that $\min\{\phi_2, \phi_3\}/4 \geq \kappa \geq \phi_1 \geq 0$ for some $\kappa > 0$. Then
			\begin{align*}
				&\mathbb{P}\biggl(\frac{|U+V+Y|}{\sqrt{\phi_2+\phi_3}}  \geq \frac{|U+Y|}{\sqrt{\phi_1+\phi_3}} \biggr) \leq \exp\biggl(-\frac{\kappa\alpha^2}{6}\biggr).
			\end{align*}  
			\item Assume that $\min\{\phi_1, \phi_3\}/4 \geq \kappa \geq \phi_4 \geq 0$ for some $\kappa > 0$. Then
			\begin{align*}
				&\mathbb{P}\biggl(\frac{|U+Y+Z|}{\sqrt{\phi_1+\phi_3+\phi_4}} \geq \frac{|Y|}{\sqrt{\phi_3}} \biggr) \leq \exp\biggl(-\frac{\kappa\alpha^2}{12}\biggr).
			\end{align*}
		\end{enumerate}
	\end{lemma}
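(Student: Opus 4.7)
The plan is a conditioning-plus-Gaussian-MGF argument for both parts, exploiting the independence structure in each decomposition. By the symmetry of the Gaussian distributions, I may assume $\alpha \geq 0$.

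For part~(a), I would set $A := U+Y \sim \mathcal{N}(\alpha\phi_3,\phi_1+\phi_3)$, which is independent of $V \sim \mathcal{N}(0,\phi_2-\phi_1)$, and observe that after completing the square, the event becomes $|A+V| \geq c|A|$ with $c := \sqrt{(\phi_2+\phi_3)/(\phi_1+\phi_3)}$, i.e.\ (for $A>0$) $V \geq (c-1)A$ or $V \leq -(c+1)A$. Conditioning on $A$, this is a Gaussian tail event for $V$, which I would control via $\bar{\Phi}(x) \leq e^{-x^2/2}/2$. Writing $W_1 := A/\sqrt{\phi_1+\phi_3} \sim \mathcal{N}(\mu_1,1)$ with $\mu_1 := \alpha\phi_3/\sqrt{\phi_1+\phi_3}$ and $\rho := (\phi_1+\phi_3)/(\phi_2+\phi_3)$, the two tail arguments reduce to $q|W_1|$ and $|W_1|/q$ with $q := \sqrt{(1-\sqrt{\rho})/(1+\sqrt{\rho})}$. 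Integrating against $W_1$ using the Gaussian second-moment MGF identity $\mathbb{E}[e^{-\lambda W_1^2/2}] = (1+\lambda)^{-1/2}\exp(-\lambda\mu_1^2/(2(1+\lambda)))$, together with the clean identity $\alpha\phi_3/\sqrt{\phi_2+\phi_3} = \sqrt{\rho}\,\mu_1$ (which eliminates the cross terms), yields the bound $\exp\bigl(-\mu_1^2(1-\sqrt{\rho})/4\bigr)$.

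To close part~(a), I would verify the polynomial inequality $\mu_1^2(1-\sqrt{\rho})/4 \geq \kappa\alpha^2/6$ under the assumed conditions. Writing $1-\sqrt{\rho} = (\phi_2-\phi_1)/\bigl((\phi_2+\phi_3)(1+\sqrt{\rho})\bigr)$ and combining the three elementary estimates $\phi_2-\phi_1 \geq 3\phi_2/4$, $\phi_3/(\phi_1+\phi_3) \geq 4/5$ and the AM--HM bound $\phi_2\phi_3/(\phi_2+\phi_3) \geq 2\kappa$, this reduces to an algebraic inequality that I would verify while being careful to retain the factor $(1+\sqrt{\rho})^{-1}$ rather than bound it crudely by $1/2$. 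The constant $1/6$ is essentially tight, attained at the corner $\phi_1=\kappa$, $\phi_2=\phi_3=4\kappa$ (where $\rho=5/8$).

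Part~(b) follows the same template after the decomposition $U+Y+Z = Y+(U+Z)$ and conditioning on $W_1 := Y/\sqrt{\phi_3}$, with the independent ``noise'' now $U+Z \sim \mathcal{N}(\alpha\phi_4,\phi_1+\phi_4)$. Since $U+Z$ has nonzero mean, the clean identity from part~(a) is replaced by $\mu_2' = \sqrt{\rho'}\mu_1' + \eta$, with $\eta := \alpha\phi_4/\sqrt{\phi_1+\phi_3+\phi_4}$, $\rho' := \phi_3/(\phi_1+\phi_3+\phi_4)$ and $\mu_1' := \alpha\sqrt{\phi_3}$; consequently the MGF calculation delivers an exponent of the form $\bigl((1-\sqrt{\rho'})\mu_1' - \eta\bigr)^2/\bigl(4(1-\sqrt{\rho'})\bigr)$. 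Using the telescoping identity $(1+\sqrt{\rho'})\sqrt{\phi_1+\phi_3+\phi_4} = \sqrt{\phi_1+\phi_3+\phi_4}+\sqrt{\phi_3}$, a direct check shows $\eta \leq (1-\sqrt{\rho'})\mu_1'/2$ uniformly under the assumed conditions, with equality at $\phi_1=\phi_3=4\kappa$, $\phi_4=\kappa$. The elementary inequality $(a-b)^2 \geq a^2/4$ for $|b/a| \leq 1/2$ then costs a factor of $4$ in the exponent relative to part~(a), and combining this with the AM--HM bound $\phi_1\phi_3/(\phi_1+\phi_3) \geq 2\kappa$ yields the stated constant $1/12$.

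The main obstacle is the tight-constant bookkeeping in the concluding polynomial inequalities: at the extremal parameter configurations the chain of elementary estimates has essentially no slack, so each step must be deployed with care rather than bluntly. In particular, $(1+\sqrt{\rho})$ cannot be replaced by $2$ prematurely in part~(a), and the precise ratio $\eta/\bigl((1-\sqrt{\rho'})\mu_1'\bigr) \leq 1/2$ must be exploited rather than weakened in part~(b).
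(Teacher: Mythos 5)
Your conditioning-plus-MGF strategy is a genuinely different route from the paper's, which instead forms a single linear statistic ($W_1$ in part (a), $W_2$ in part (b)) whose sign captures the event, applies one Gaussian tail bound, and then controls the resulting variance-to-squared-mean ratio by monotonicity in each $\phi_i$. For part (a) your plan can be made to work, but the concluding constant chase is not a matter of ``combining the three elementary estimates'': the product of the three separate worst-case bounds, together with $(1+\sqrt{\rho})^{-1} \geq 1/2$, gives only $\tfrac{4}{5}\cdot\tfrac{3}{4}\cdot 2\kappa\cdot\tfrac{1}{2} = \tfrac{3\kappa}{5} < \tfrac{2\kappa}{3}$, and since $\rho \to 1$ as $\phi_3 \to \infty$ you cannot avoid the factor $1/2$ pointwise. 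The three estimates are not simultaneously tight, so you need a joint monotonicity argument in $(\phi_1,\phi_2,\phi_3)$ to reduce to the corner $\phi_1=\kappa$, $\phi_2=\phi_3=4\kappa$, where the exponent is $\approx 0.670\,\kappa\alpha^2 > \kappa\alpha^2/6$; this is in effect what the paper does when it shows $w_1$ is monotone in each argument.

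Part (b) has two genuine gaps. First, because $U+Z$ has mean $\alpha\phi_4 > 0$, the conditional tail argument $\bigl((c'-1)Y - \alpha\phi_4\bigr)/\sqrt{\phi_1+\phi_4}$ is negative on the positive-probability event $\{0 < Y < \alpha\phi_4/(c'-1)\}$, where the bound $\bar{\Phi}(x) \leq \tfrac{1}{2}e^{-x^2/2}$ is false (there the conditional probability can exceed $1/2$). The MGF integration is therefore not justified as written; you must either split off this event and bound its probability separately, or compute $\mathbb{E}_Y\bar{\Phi}(\cdot)$ exactly as a single Gaussian probability --- which is precisely the paper's $W_2$ construction. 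Second, your concluding chain requires $(1-\sqrt{\rho'})\phi_3 \geq 4\kappa/3$ after paying the factor of $4$ from $(a-b)^2 \geq a^2/4$, and this intermediate inequality is false: at $\phi_1=\phi_3=4\kappa$, $\phi_4=0$ one has $\rho'=1/2$ and $(1-\sqrt{\rho'})\phi_3 = 4(1-2^{-1/2})\kappa \approx 1.17\kappa < 4\kappa/3$. The two losses are extremal at different configurations --- the factor of $4$ is only needed near $\phi_4=\kappa$, while $(1-\sqrt{\rho'})\phi_3$ is smallest at $\phi_4=0$ --- so they cannot be applied sequentially at their separate worst cases. Your full exponent $\bigl((1-\sqrt{\rho'})\mu_1'-\eta\bigr)^2/\bigl(4(1-\sqrt{\rho'})\bigr)$ is in fact correct (it equals $\kappa\alpha^2/12$ exactly at $\phi_1=\phi_3=4\kappa$, $\phi_4=\kappa$ and exceeds it elsewhere), but establishing this requires a joint three-variable analysis, which is exactly what the paper's partial-derivative computation for $w_2$ supplies.
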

	\begin{proof}
		The case $\alpha = 0$ is trivial in both cases, so without loss of generality, we may assume $\alpha > 0$ throughout the rest of the proof. \\
		\noindent (a) Let
		\[ 
		W_1 := \bigl(\sqrt{\phi_2+\phi_3}- \sqrt{\phi_1+\phi_3}\bigr)(U+Y)-\sqrt{\phi_1+\phi_3} \, V, 
		\]
		so that
		\begin{align*}
			W_1 \sim \mathcal{N} \Bigl( \alpha\phi_3 \bigl( \sqrt{\phi_2+\phi_3} - \sqrt{\phi_1+\phi_3} \bigl), \bigl\{ \bigl(\sqrt{\phi_2+\phi_3}- \sqrt{\phi_1+\phi_3}\bigr)^2 + \phi_2-\phi_1\bigr\} (\phi_1+\phi_3)  \Bigr).
		\end{align*}
		Hence, by the standard Gaussian tail bound used at the end of the proof of Lemma~\ref{Lemma:1DLemma}, we have
		\begin{align} 
			&\mathbb{P}(W_1 \leq 0) \leq \frac{1}{2}e^{-\alpha^2/(2w_1)}, \label{Eq:Normal_W1}
		\end{align}
		where $w_1 := \frac{\phi_1+\phi_3}{\phi_3^2} \bigl(1 + \frac{\phi_2-\phi_1}{  (\sqrt{\phi_2+\phi_3}- \sqrt{\phi_1+\phi_3})^2}\bigr)$. Then
		\begin{align}
			\label{Eq:W1bound}
			w_1 &= \frac{\phi_1+\phi_3}{\phi_3^2} \Biggl(1 + \frac{  \bigl(\sqrt{\phi_2+\phi_3} + \sqrt{\phi_1+\phi_3}\bigr)^2 }{\phi_2-\phi_1}\Biggr) \nonumber \\
			&\leq \frac{5}{16\kappa}\biggl( 1 + \frac{ \bigl(\sqrt{8\kappa} + \sqrt{5\kappa}\bigr)^2}{3\kappa} \biggr) \leq \frac{3}{\kappa},
		\end{align}
		where the first inequality holds because $w_1$ is increasing in $\phi_1$ and decreasing in both $\phi_2$ and $\phi_3$.  Hence, using the fact that $-(U+V+Y) \leq_{\mathrm{st}} U+V+Y$, as well as~\eqref{Eq:Normal_W1} and~\eqref{Eq:W1bound}, we have
		\begin{align*}
			\mathbb{P}\biggl( \frac{|U+V+Y|}{\sqrt{\phi_2+\phi_3}} \geq \frac{|U+Y|}{\sqrt{\phi_1+\phi_3}} \biggr) &\leq \mathbb{P}\biggl(\biggl\{\frac{U+Y}{\sqrt{\phi_1+\phi_3}} \leq \frac{U+V+Y}{\sqrt{\phi_2+\phi_3}}\biggr\} \cap \{U+V+Y \geq 0\}\biggr) \nonumber\\
			&\hspace{0.2cm} + \mathbb{P}\biggl(\biggl\{\frac{U+Y}{\sqrt{\phi_1+\phi_3}} \leq -\frac{U+V+Y}{\sqrt{\phi_2+\phi_3}}\biggr\} \cap \{U+V+Y < 0\}\biggr) \nonumber\\
			&\leq 2\mathbb{P}\biggl(\frac{U+Y}{\sqrt{\phi_1+\phi_3}} \leq \frac{U+V+Y}{\sqrt{\phi_2+\phi_3}}\biggr) \nonumber\\
			&= 2\mathbb{P}(W_1\leq 0) \leq \exp\biggl(-\frac{\kappa\alpha^2}{6}\biggr),
		\end{align*}
		as required.
		
		\medskip
		
		\noindent (b) Let
		\[ 
		W_2 := \bigl(\sqrt{\phi_1+\phi_3+\phi_4}- \sqrt{\phi_3}\bigr)Y-\sqrt{\phi_3}(U+Z), 
		\]
		so that
		\begin{align*}
			W_2 \sim \mathcal{N} \Bigl(  \alpha \phi_3 \sqrt{\phi_1\!+\!\phi_3\!+\!\phi_4} - \alpha(\phi_3+\phi_4)\sqrt{\phi_3},   \bigl\{\bigl(\sqrt{\phi_1\!+\!\phi_3\!+\!\phi_4}- \sqrt{\phi_3}\bigr)^2+\phi_1+\phi_4 \bigr\}\phi_3\Bigr).
		\end{align*}
		Note that the assumption guarantees that $\mathbb{E}(W_2) > 0$. Hence, by the standard Gaussian tail bound used at the end of the proof of Lemma~\ref{Lemma:1DLemma}, we have
		\begin{align} 
			&\mathbb{P}(W_2 \leq 0) \leq \frac{1}{2}e^{-\alpha^2/(2w_2)},    \label{Eq:Normal_W2}
		\end{align}
		where 
		\begin{align*}
			w_2 &:= \frac{(\sqrt{\phi_1+\phi_3+\phi_4}-\sqrt{\phi_3})^2+\phi_1+\phi_4}{\bigl(\sqrt{\phi_3(\phi_1+\phi_3+\phi_4)}-\phi_3-\phi_4\bigr)^2}.
		\end{align*}
		Calculating the partial derivatives of $w_2$ with respect to $\phi_1, \phi_3$ and $\phi_4$ and simplifying the expressions, we have
		\begin{align*}
			\frac{\partial w_2}{\partial \phi_1} &= \frac{(\phi_3+\phi_4)\sqrt{\phi_3}-(\phi_3+2\phi_4) \sqrt{\phi_1+\phi_3+\phi_4} }{ \sqrt{\phi_1+\phi_3+\phi_4} \bigl(\sqrt{\phi_3(\phi_1+\phi_3+\phi_4)}-\phi_3-\phi_4\bigr)^3  } \leq 0, \\
			\frac{\partial w_2}{\partial \phi_3} &= \frac{-\bigl(\sqrt{\phi_1+\phi_3+\phi_4}-\sqrt{\phi_3}\bigr)^2 \bigl[ 3\phi_1+\phi_4+\bigl(\sqrt{\phi_1+\phi_3+\phi_4}-\sqrt{\phi_3}\bigr)^2 \bigr]  }{ 2\sqrt{\phi_3(\phi_1+\phi_3+\phi_4)} \bigl(\sqrt{\phi_3(\phi_1+\phi_3+\phi_4)}-\phi_3-\phi_4\bigr)^3  } \leq 0, \\
			\frac{\partial w_2}{\partial \phi_4} &= \frac{ 2\phi_1\bigl(2\sqrt{\phi_1+\phi_3+\phi_4}-\sqrt{\phi_3}\bigr)^2 + 3(\phi_3+\phi_4)\bigl( \sqrt{\phi_1+\phi_3+\phi_4}-\sqrt{\phi_3}\bigr)^2  }{ 2(\phi_1+\phi_3+\phi_4) \bigl(\sqrt{\phi_3(\phi_1+\phi_3+\phi_4)}-\phi_3-\phi_4\bigr)^3  } \\
			&\hspace{1cm} + \frac{(\phi_1+\phi_4)(\phi_3+\phi_4)}{2(\phi_1+\phi_3+\phi_4) \bigl(\sqrt{\phi_3(\phi_1+\phi_3+\phi_4)}-\phi_3-\phi_4\bigr)^3 } \geq 0.
		\end{align*}
		Thus $w_2$ is increasing in $\phi_4$ and decreasing in both $\phi_1$ and $\phi_3$ and hence 
		\begin{align}
			\label{Eq:W2bound}
			w_2 \leq \frac{6}{\kappa}.
		\end{align}
		Hence, using the fact that $-(U+Y+Z) \leq_{\mathrm{st}} U+Y+Z$, as well as~\eqref{Eq:Normal_W2} and~\eqref{Eq:W2bound}, we have
		\begin{align*}
			\mathbb{P}\biggl( \frac{|U+Y+Z|}{\sqrt{\phi_1+\phi_3+\phi_4}} \geq \frac{|Y|}{\sqrt{\phi_3}} \biggr) &\leq \mathbb{P}\biggl(\biggl\{\frac{Y}{\sqrt{\phi_3}} \leq \frac{U+Y+Z}{\sqrt{\phi_1+\phi_3+\phi_4}}\biggr\} \cap \{U+Y+Z \geq 0\}\biggr) \nonumber\\
			&\hspace{0.3cm} + \mathbb{P}\biggl(\biggl\{\frac{Y}{\sqrt{\phi_3}} \leq -\frac{U+Y+Z}{\sqrt{\phi_1+\phi_3+\phi_4}} \biggr\} \cap \{U+Y+Z < 0\}\biggr) \nonumber\\
			&\leq 2\mathbb{P}\biggl(\frac{Y}{\sqrt{\phi_3}} \leq \frac{U+Y+Z}{\sqrt{\phi_1+\phi_3+\phi_4}}\biggr) \nonumber\\
			&= 2\mathbb{P}(W_2\leq 0) \leq \exp\biggl(-\frac{\kappa\alpha^2}{12}\biggr),
		\end{align*}
		as required.
	\end{proof}
	
	\subsection{The \texorpdfstring{\texttt{ocd}}{\texttt{ocd}} and \texorpdfstring{\texttt{ocd}$'$}{\texttt{ocd}'} base procedures}
	\label{Sec:ocdprimealgorithm}
	Algorithms~\ref{Alg:ocd} and~\ref{Alg:ocd_variant}, which are taken from \citet{CWS2021}, provide two options for a base online changepoint detection procedure. The \texttt{ocd} algorithm is recommended for practical use, while its variant, the \texttt{ocd}$'$ algorithm, satisfies the key condition~\eqref{Eq:detectassumption} that underpins our theoretical results in Section~\ref{Sec:Theory}.
	
	In order to help make the paper self-contained, and to aid interpretability, we remark that the input parameter $\tilde{a}$ represents a thresholding level employed in the definition of the quantity $Q_b^j$ in Algorithm~\ref{Alg:ocd} and $\tilde{Q}_b^j$ in Algorithm~\ref{Alg:ocd_variant} that is designed to ensure that we only aggregate over signal coordinates.  The input parameters $T^{\mathrm{diag}}$ and $T^{\mathrm{off}}$ represent critical values for the diagonal and off-diagonal statistics $S^{\mathrm{diag}}$ and $S^{\mathrm{off}}$ respectively, both defined in these algorithms.  In other words, we declare a change as soon as either $S^{\mathrm{diag}} \geq T^{\mathrm{diag}}$ or $S^{\mathrm{off}} \geq T^{\mathrm{off}}$.
	
	\begin{algorithm}[htbp!]
	\KwIn{$X_1, X_2, \ldots \in \mathbb{R}^p$ observed sequentially, $\beta>0$, $\tilde{a}\geq 0$,  $T^{\mathrm{diag}}>0$ and $T^{\mathrm{off}}>0$}
	\KwSet{$b_{\min}= \frac{\beta}{\sqrt{2^{\lfloor \log_2 (2p)\rfloor} \log_2(2p)}}$, $\mathcal{B}_0 = \{\pm b_{\min}\}$, 
		$\mathcal{B} = \bigl\{\pm 2^{m/2} b_{\min}: m = 1,\ldots,\lfloor \log_2 (2p) \rfloor \bigr\}$, $n = 0$, $A_b = \mathbf{0} \in \mathbb{R}^{p\times p}$ and $t_b = 0\in\mathbb{R}^p$ for all $b \in \mathcal{B}\cup \mathcal{B}_0$}
	\Repeat{$S^{\mathrm{diag}} \geq T^{\mathrm{diag}}$ \textup{or} $S^{\mathrm{off}} \geq T^{\mathrm{off}}$}{
		$n \leftarrow n+1$\\
		observe new data vector $X_n$ \\
		\For{$(j, b) \in [p]\times (\mathcal{B} \cup \mathcal{B}_0)$}{
			$t^j_b \leftarrow t^j_b +1$ \\ 
			$A^{\bdot,j}_b \leftarrow A^{\bdot,j}_b + X_n$ \\
			\If{$bA^{j,j}_{b} - b^2t^j_b/2 \leq 0$}{                 
				$t^j_b \leftarrow 0$ and  $A^{\bdot,j}_b \leftarrow 0$
			} 
			compute $Q^j_b \leftarrow \sum_{j' \in [p]: j' \neq j}\frac{(A^{j', j}_b)^2}{t^j_b \vee 1} \mathbbm{1}_{\bigl\{|A^{j', j}_b| \geq \tilde{a}\sqrt{t_b^j}\bigr\}}$
		}
		$S^{\mathrm{diag}}\leftarrow \max_{(j, b) \in [p]\times (\mathcal{B} \cup \mathcal{B}_0)} \bigl(b A^{j,j}_b - b^2t^j_b/2\bigr)$\\ 
		$S^{\mathrm{off}} \leftarrow \max_{(j,b)\in[p]\times \mathcal{B}} Q_{b}^j$
	}
	\KwOut{$N=n$}
	\caption{Pseudo-code of the \texttt{ocd} algorithm}
	\label{Alg:ocd}
\end{algorithm}

	\begin{algorithm}[htbp!]
		\KwIn{$X_1, X_2, \ldots \in \mathbb{R}^p$ observed sequentially, $\beta>0$, $\tilde{a} \geq 0$, $T^{\mathrm{diag}} > 0$ and $T^{\mathrm{off}}>0$}
		\KwSet{$b_{\min}= \frac{\beta}{\sqrt{2^{\lfloor \log_2 (2p)\rfloor} \log_2(2p)}}$, $\mathcal{B}_0 = \{\pm b_{\min}\}$, 
		$\mathcal{B} = \bigl\{\pm 2^{m/2} b_{\min}: m = 1,\ldots,\lfloor \log_2 (2p) \rfloor \bigr\}$, $n=0$, $A_b = \Lambda_b = \tilde{\Lambda}_b = \mathbf{0} \in \mathbb{R}^{p\times p}$ and $t_b = \tau_b = \tilde{\tau}_b = 0\in\mathbb{R}^p$ for all $b \in \mathcal{B}\cup \mathcal{B}_0$}
		\Repeat{$S^{\mathrm{diag}} \geq T^{\mathrm{diag}}$ \textup{or} $S^{\mathrm{off}} \geq T^{\mathrm{off}}$}{
			$n \leftarrow n+1$\\
			observe new data vector $X_n$ \\
			\For{$(j, b) \in [p]\times (\mathcal{B} \cup \mathcal{B}_0)$}{
				$t^j_b\leftarrow t^j_b +1 \;\; \text{and}\;\; A^{\bdot,j}_b \leftarrow  A^{\bdot,j}_b + X_n$\\
				\medskip
				set $\delta = 0$ if $t_b^j$ is a power of 2 and $\delta = 1$ otherwise.\\
				$\tau_b^j \leftarrow \tau_b^j\delta + \tilde \tau_b^j (1-\delta) + 1 \;\; \text{and} \;\; \Lambda_b^{\bdot, j} \leftarrow \Lambda_b^{\bdot, j}\delta + \tilde \Lambda_b^{\bdot, j} (1-\delta) + X_n$\\
				$\tilde \tau_b^j \leftarrow (\tilde\tau_b^j+1)\delta \;\; \text{and} \;\; \tilde \Lambda_b^{\bdot, j} \leftarrow (\tilde \Lambda_b^{\bdot, j} + X_n) \delta.$\\
				\medskip
				\If{$bA^{j,j}_{b} - b^2t^j_b/2 \leq 0$}{                 
					$t_b^j \leftarrow \tau_b^j \leftarrow \tilde{\tau}_b^j \leftarrow 0$\\
					$A_b^{\bdot,j} \leftarrow \Lambda_b^{\bdot,j} \leftarrow \tilde{\Lambda}_b^{\bdot,j} \leftarrow 0$
				} 
				$\Xi_b^{\bdot,j} \leftarrow \Lambda_b^{\bdot, j} / (\tau_b^j\vee 1)^{1/2}$\\
				$\tilde{Q}^j_b \leftarrow \sum_{j' \in [p]\setminus\{j\}}(\Xi_b^{j',j})^2  \mathbbm{1}_{\{|\Xi_b^{j',j}| \geq \tilde{a}\}}$
			}
			$S^{\mathrm{diag}} \leftarrow \max_{(j, b) \in [p]\times (\mathcal{B} \cup \mathcal{B}_0)} \bigl(b A^{j,j}_b - b^2t^j_b/2\bigr)$\\ 
			$S^{\mathrm{off}} \leftarrow \max_{(j,b)\in[p]\times \mathcal{B}} \tilde{Q}_{b}^j$\\
		}
		\KwOut{$N = n$}
		\caption{Pseudo-code for the \texttt{ocd$'$} algorithm, a slight variant of \texttt{ocd}}
		\label{Alg:ocd_variant}
	\end{algorithm}
	
	\subsection{Results under sub-Gaussian and sub-exponential assumptions} \label{Sec:BeyondGaussianity}
	
	This section provides justification for the claimed theoretical results in Section~\ref{SubSec:Relax}.  We will rely on the following three propositions, the first of which is standard \citep[e.g.][Proposition~2.5 and~(2.18)]{Wainwright2019}.
	\begin{prop}[Hoeffding-type and Bernstein-type tail bound] \label{Prop:SubGauExpTailBound}
	(a) Let $a_1, \ldots, a_n \in \mathbb{R}$ and let $X_1, \ldots, X_n$ be independent sub-Gaussian random variables with variance parameter~$1$. Then 
	\[
	\mathbb{P}\Bigl(\sum_{i=1}^n a_iX_i \geq x\Bigr) \leq \exp \biggl(- \frac{x^2}{2\sum_{i=1}^n a_i^2} \biggr)
	\]
	for all $x \geq 0$.
	
	\noindent (b) Let $a_1, \ldots, a_n \in \mathbb{R}$ and let $X_1, \ldots, X_n$ be independent sub-exponential random variables with variance parameter $1$ and rate parameter $A > 0$. Then 
	\[
	\mathbb{P}\Bigl(\sum_{i=1}^n a_iX_i \geq x\Bigr) \leq \exp \biggl\{-\min\biggl( \frac{x^2}{2\sum_{i=1}^n a_i^2},\frac{Ax}{2 \max_{i \in [n]} |a_i|}  \biggr)\biggr\}
	\]
	for all $x \geq 0$.
	\end{prop}
	One special case where we apply this proposition frequently is with $a_1 = \ldots = a_n = n^{-1/2}$. The two bounds are then $e^{-x^2/2}$ and $e^{-x(x\wedge A\sqrt{n})/2}$ respectively.
	
	The following proposition can be used in place of Lemma~\ref{Lemma:1DLemma} to control excursion probabilities of sub-Gaussian and sub-exponential random walks with drift.
	\begin{prop}
	\label{Prop:SubGauExp1DReturn}
		Let $\mu \in \mathbb{R}$ and let $Y_1, Y_2, \ldots$ be independent random variables. Define $U_n := \sum_{i=1}^n Y_i$ for $n \in \mathbb{N}$ with $U_0 := 0$, and let $\xi:=\sargmin_{n \in \mathbb{N}_0} \mu U_n$.\\
		(a) Assume that $Y_1-\mu, Y_2-\mu, \ldots$ are independent sub-Gaussian random variables with variance parameter $1$.  Then
		\[ 
		\mathbb{P}(\xi \geq y) \leq 	\mathbb{P}\Bigl(  \inf_{n\in\mathbb{N}_0:n\geq y} \mu U_n \leq 0 \Bigr) \leq 3(\mu^{-2}\vee 1) e^{-y\mu^2/2}
		\]
		for $y \in [0, \infty)$.
		
		\noindent (b) Assume that $Y_1-\mu, Y_2-\mu, \ldots$ are independent sub-exponential random variables with variance parameter $1$ and rate parameter $A > 0$. Then
		\[ 
		\mathbb{P}(\xi \geq y) \leq 	\mathbb{P}\Bigl(  \inf_{n\in\mathbb{N}_0:n\geq y} \mu U_n \leq 0 \Bigr) \leq 3\Bigl( \frac{1}{\mu^2} \vee \frac{1}{\mu A} \vee 1\Bigr) e^{-y\mu(\mu\wedge A)/2}
		\]
		for $y \in [0, \infty)$.
	\end{prop}
	\begin{proof}
	Following the same argument at the beginning of the proof of Lemma~\ref{Lemma:1DLemma}, it suffices to only prove the latter inequality for both results for $\mu > 0$. \\
	(a) By a union bound and Proposition~\ref{Prop:SubGauExpTailBound}(a), we have
	\begin{align*}
	\mathbb{P}\Bigl(  \inf_{n\in\mathbb{N}_0:n\geq y} \mu U_n \leq 0 \Bigr) &\leq \sum_{n=\lceil y \rceil}^\infty  \mathbb{P}(U_n \leq 0) = \sum_{n=\lceil y \rceil}^\infty \mathbb{P}(U_n - n\mu \leq -n\mu) \leq \sum_{n=\lceil y \rceil}^\infty e^{-n\mu^2/2} \\
	&\leq \frac{e^{-y\mu^2/2}}{1-e^{-\mu^2/2}} \leq \frac{e^{-y\mu^2/2}}{ \frac{\mu^2}{4\log 2} \wedge \frac{1}{2} } \leq 3(\mu^{-2}\vee 1) e^{-y\mu^2/2}.
	\end{align*}
	(b) Again, by a union bound and Proposition~\ref{Prop:SubGauExpTailBound}(b), we have
	\begin{align*}
	\mathbb{P}\Bigl(  \inf_{n\in\mathbb{N}_0:n\geq y} \mu U_n \leq 0 \Bigr) &\leq \sum_{n=\lceil y \rceil}^\infty \mathbb{P}(U_n - n\mu \leq -n\mu) \leq \sum_{n=\lceil y \rceil}^\infty e^{-n\mu(\mu\wedge A)/2} \\
	&\leq 3\Bigl( \frac{1}{\mu^2} \vee \frac{1}{\mu A} \vee 1\Bigr) e^{-y\mu(\mu\wedge A)/2},
	\end{align*}
	where the last inequality follows from the last three inequalities in the proof for the (a) part, with $\mu(\mu\wedge A)$ taking the place of $\mu^2$.
	\end{proof}
	Our final preparatory result will be used to modify~\eqref{Eq:Prop5Eq1} in the proof of Proposition~\ref{prop:ocdsatisfyassumption}, which can be traced back to the proof of \citet[][Theorem~1]{CWS2021}. Let $b \neq 0$, $T^{\mathrm{diag}} >0$ and let $Z_1,Z_2,\ldots$ be independent centered random variables. We define the stopping time
	\begin{equation} \label{Eq:N_os}
	N_{\mathrm{os}} := \inf \Bigl\{n\in\mathbb{N}: b\sum_{t=1}^n(Z_t - b/2) \geq T^{\mathrm{diag}} \Bigr\},
	\end{equation}
	where `os' stands for \underline{o}ne-\underline{s}ided. When $Z_1,Z_2 \ldots \stackrel{\mathrm{iid}}{\sim} N(0,1)$, we can use a sequential probability ratio test argument to show that $\mathbb{P}(N_{\mathrm{os}} < \infty) \leq e^{-T^{\mathrm{diag}}}$.   The aim of Proposition~\ref{Prop:SubGauExpN_os} below is to establish similar bounds under the sub-Gaussian and the sub-exponential distributional assumptions respectively.
	\begin{prop} \label{Prop:SubGauExpN_os}
	    (a) Let $b \neq 0$, $T^{\mathrm{diag}} >0$ and let $Z_1, Z_2, \ldots$ be independent sub-Gaussian random variables each with variance parameter $1$. Then $N_{\mathrm{os}}$ defined in~\eqref{Eq:N_os} satisfies
	    \[
	    \mathbb{P}(N_{\mathrm{os}} < \infty) \leq e^{-T^{\mathrm{diag}}}. 
	    \]
	    (b) Let $T^{\mathrm{diag}} >0$ and $Z_1, Z_2, \ldots$ be independent sub-exponential random variables with variance parameter $1$ and rate parameter $A > 0$. Let $b \in [-A, A]\setminus\{0\}$. Then 
	    \[
	    \mathbb{P}(N_{\mathrm{os}} < \infty) \leq e^{-T^{\mathrm{diag}}}. 
	    \]
	\end{prop}
	\begin{proof}
	Without loss of generality, assume $b > 0$ in part~(a) and $0 < b < A$ in part~(b). The following argument then holds for both cases.  Denote $S_n := \sum_{t=1}^n Z_t$ with $S_0 := 0$, $V_n := \exp\{(bS_n- b^2n/2)\wedge T^{\mathrm{diag}}\}$ and $\mathcal{F}_n = \sigma(Z_1, \ldots, Z_n)$ with $\mathcal{F}_0$ defined to be trivial $\sigma$-algebra. Then for $n\in\mathbb{N}$, we have 
	\begin{align*}
	\mathbb{E}[V_n \mid \mathcal{F}_{n-1}] &= \mathbb{E}\bigl[V_n\mathbbm{1}_{\{V_{n-1} < e^{T^{\mathrm{diag}}}\}} \bigm| \mathcal{F}_{n-1}\bigr] + \mathbb{E}\bigl[V_n\mathbbm{1}_{\{V_{n-1} = e^{T^{\mathrm{diag}}}\}} \bigm| \mathcal{F}_{n-1}\bigr]\\
	&\leq \mathbb{E}\bigl[V_{n-1} e^{bZ_n - b^2/2}\mathbbm{1}_{\{V_{n-1} < e^{T^{\mathrm{diag}}}\}} \bigm| \mathcal{F}_{n-1}\bigr] + \mathbb{E}\bigl[e^{T^{\mathrm{diag}}}\mathbbm{1}_{\{V_{n-1} = e^{T^{\mathrm{diag}}}\}} \bigm| \mathcal{F}_{n-1}\bigr]\\
	&\leq V_{n-1}\mathbbm{1}_{\{V_{n-1} < e^{T^{\mathrm{diag}}}\}} + e^{T^{\mathrm{diag}}}\mathbbm{1}_{\{V_{n-1} = e^{T^{\mathrm{diag}}}\}} = V_{n-1}.
	\end{align*}
	Hence $(V_n)_{n \geq 0}$ is a supermartingale with respect to the filtration $(\mathcal{F}_n)_{n \geq 0}$. Since $|V_n| \leq e^{T^{\mathrm{diag}}}$ for all $n$, we have by the Optional Stopping Theorem that
	\begin{align*}
	    1 = \mathbb{E}V_0 \geq \mathbb{E}V_{N_{\mathrm{os}}} = \mathbb{E}\bigl[ V_{N_{\mathrm{os}}} \mathbbm{1}_{\{N_{\mathrm{os}} < \infty\}} + V_{N_{\mathrm{os}}} \mathbbm{1}_{\{N_{\mathrm{os}} = \infty\}}\bigr] \geq e^{T^{\mathrm{diag}}}\mathbb{P}(N_{\mathrm{os}} < \infty),
	\end{align*}
	as required.
	\end{proof}
	The modifications to the proofs of Theorems~\ref{Thm:Coverage},~\ref{Thm:Length} and~\ref{Thm:Support}, as well as Proposition~\ref{prop:ocdsatisfyassumption} that are needed in the sub-Gaussian and sub-exponential settings are as follows.  In~\eqref{Eq:CoveragePfEq1}, \eqref{Eq:Orange1}, \eqref{Eq:Orange2}, \eqref{Eq:Orange3}, \eqref{Eq:gensig_randomN_omega3}, \eqref{Eq:Orange4}, \eqref{Eq:Orange5}, \eqref{Eq:ocdassumption_term1+}, \eqref{Eq:Orange6}, \eqref{Eq:Normal_W1} and~\eqref{Eq:Normal_W2}, we apply Proposition~\ref{Prop:SubGauExpTailBound} in place of the Gaussian tail bounds; in \eqref{Eq:CoveragePfEq2},
\eqref{Eq:gensig_randomN_omega1} and
\eqref{Eq:SmallProbShortTail}, we apply Proposition~\ref{Prop:SubGauExp1DReturn} in place of Lemma~\ref{Lemma:1DLemma}; and finally, we apply Proposition~\ref{Prop:SubGauExpN_os} to yield the bound  corresponding to~\eqref{Eq:Prop5Eq1}.

	\subsection{Additional simulation results}
	\label{Sec:AddSims}
	
	Table~\ref{Tab:Spatial} provides additional simulation results for the \texttt{ocd\_CI} procedure under spatial dependence.  The data generating mechanisms and conclusions from these results are given in Section~\ref{Sec:CoverageLength}.
		
\begin{table}[htbp]
\footnotesize
\begin{center}
\caption{\label{Tab:Spatial} Spatial dependence.  Estimated coverage and average length of the \texttt{ocd\_CI} confidence interval and average detection delay over 2000 repetitions, with standard errors in brackets, under a Toeplitz cross-sectional covariance matrix $\Sigma$ with entries $\Sigma_{jk} = \rho^{|j-k|}$ for $j,k \in [p]$.  Other parameters: $p=100$, $\beta=\vartheta$, $\gamma=30000$, $z=1000$, $\alpha = 0.05$, $a = \tilde{a} = \sqrt{2 \log p}$, $c=0.5$, $d_1 = c\sqrt{\log(p/\alpha)}$, $d_2 = 4d_1^2$.}
\begin{tabular}{cccccc}
\hline\hline
$\rho$ & $s$ & $\vartheta$ & Detection Delay & Coverage (\%) & CI Length\\
\hline
$0.5$ & $2$ & $2$ & $13.9_{(0.1)}$ & $98.5_{(0.3)}$ & $35.5_{(1.0)}$\\
$0.5$ & $2$ & $1$ & $49.1_{(0.3)}$ & $99.0_{(0.2)}$ & $125.1_{(1.6)}$\\
$0.5$ & $2$ & $0.5$ & $172.5_{(1.0)}$ & $99.5_{(0.2)}$ & $447.0_{(2.8)}$\\
$0.5$ & $10$ & $2$ & $21.9_{(0.1)}$ & $98.7_{(0.3)}$ & $42.0_{(0.9)}$\\
$0.5$ & $10$ & $1$ & $76.1_{(0.5)}$ & $98.8_{(0.2)}$ & $154.2_{(1.5)}$\\
$0.5$ & $10$ & $0.5$ & $266.7_{(1.8)}$ & $99.0_{(0.2)}$ & $566.9_{(3.9)}$\\
$0.5$ & $100$ & $2$ & $52.1_{(0.3)}$ & $98.3_{(0.3)}$ & $106.8_{(0.9)}$\\
$0.5$ & $100$ & $1$ & $187.7_{(1.3)}$ & $98.4_{(0.3)}$ & $399.5_{(3.3)}$\\
$0.5$ & $100$ & $0.5$ & $655.3_{(5.0)}$ & $98.5_{(0.3)}$ & $1366.2_{(10.1)}$\vspace{1.5mm}\\
$0.75$ & $2$ & $2$ & $13.9_{(0.1)}$ & $96.9_{(0.4)}$ & $51.1_{(2.6)}$\\
$0.75$ & $2$ & $1$ & $47.9_{(0.3)}$ & $96.8_{(0.4)}$ & $146.0_{(3.3)}$\\
$0.75$ & $2$ & $0.5$ & $171.5_{(1.1)}$ & $97.7_{(0.3)}$ & $463.8_{(4.2)}$\\
$0.75$ & $10$ & $2$ & $21.8_{(0.2)}$ & $96.4_{(0.4)}$ & $48.6_{(1.7)}$\\
$0.75$ & $10$ & $1$ & $75.3_{(0.5)}$ & $96.7_{(0.4)}$ & $165.0_{(2.7)}$\\
$0.75$ & $10$ & $0.5$ & $266.3_{(1.9)}$ & $96.0_{(0.4)}$ & $558.9_{(4.5)}$\\
$0.75$ & $100$ & $2$ & $50.9_{(0.3)}$ & $96.8_{(0.4)}$ & $106.8_{(1.2)}$\\
$0.75$ & $100$ & $1$ & $184.8_{(1.4)}$ & $95.6_{(0.5)}$ & $401.8_{(3.8)}$\\
$0.75$ & $100$ & $0.5$ & $647.3_{(5.4)}$ & $94.6_{(0.5)}$ & $1312.3_{(11.2)}$\\
\hline
\end{tabular}
\end{center}
\end{table}

\textbf{Acknowledgements:} The research of TW was supported by Engineering and Physical Sciences Research Council (EPSRC) grant EP/T02772X/1 and that of RJS was supported by EPSRC grants EP/P031447/1 and EP/N031938/1, as well as European Research Council Advanced Grant 101019498.
	
	\bibliographystyle{custom2author}
    \bibliography{Change_CI}
\end{document}